\pgfplotsset{compat=1.9}
\newlength\min@xx
\newtheorem{theorem}{Theorem}
\newtheorem{lemma}{Lemma}
\newtheorem{corollary}{Corollary}
\newtheorem{definition}{Definition}
\newtheorem{remark}{Remark}
\newtheorem{example}{Example}
\newtheorem{proposition}{Proposition}
\newcommand{\ZZ}{\mathbb{Z}}
\newcommand{\eps}{\varepsilon}
\DeclareMathOperator{\diam}{diam}
\newcommand{\heff}[2]{h_{\mathrm{eff,}#2}^{#1}}
\newcommand{\hefftot}[1]{h_{\mathrm{eff}}^{#1}}
\newcommand{\heffexact}[1]{h_{\mathrm{eff \: exact}}^{#1}}
\DeclareMathOperator{\FMED}{F_{\textup{MED}}}
\DeclareMathOperator{\F}{F}
\DeclareMathOperator{\im}{im}
\DeclareMathOperator{\poly}{poly}
\newcommand{\id}{\mathds 1}
\newcommand{\ellshield}{\ell_S}
\definecolor{Turquoise}{HTML}{14C7DE}
\definecolor{SkyBlue}{HTML}{3498DB}
\definecolor{CoralDark}{HTML}{FF6F61} 
\definecolor{purplemod}{HTML}{AF82F3}
\definecolor{darkgreen}{RGB}{106, 134, 104}
\definecolor{darkgreen2}{RGB}{90, 168, 143}
\def\Block[#1,#2,#3,#4]{
\def\r{0.3};
\ifthenelse{\NOT #4=0}{
\fill [#2] (-0.5,-0.5) rectangle ({#1-0.5},0.5);
}
\foreach \n in {1,...,#1}{ 
\shade[shading=ball, ball color=darkblue] ({\n-1},0) circle (\r);
}

\begin{scope}[decoration={brace,mirror,amplitude=7}]
\ifthenelse{#4=1}{
\draw [decorate] (-0.5,-0.6) --node[below=3mm]{$#3$} ({#1-0.5},-0.6);
}
\ifthenelse{#4=2}{
\draw [decorate] ({#1-0.5},0.6) --node[above=3mm]{$#3$} (-0.5,0.6);
}
\end{scope}
}
\newcommand{\boxElement}[7]{
    \draw[#5, rounded corners=10pt, thick] (#1,#2) rectangle (#1 + #3,#2 + #4);
    \fill[#5, rounded corners=10pt, opacity=0.2] (#1,#2) rectangle (#1 + #3,#2 + #4);
    
    \node[fill=white, rounded corners=3pt, anchor=south west, inner sep=2pt] at (#1 + 0.25,#2 + #4 - 0.25) {\textcolor{#5}{\textbf{#6}}};

    \node[anchor=north west, text width={#3cm - 0.5cm}, align=left] at (#1 + 0.25, #2 + #4 - 0.25) {#7\hfill~};
    
}
\tikzset{
    plus arrow/.style={
        postaction={decorate},
        decoration={
            markings,
            mark=at position 1 with {
                \node[draw, circle, thick, fill=white, inner sep=0pt, minimum size=8pt] {\textbf{+}};
            }
        }
    }
}
\author[1]{Samuel O. Scalet\orcidlink{0000-0002-9770-3137}\footnote{samuel.scalet@tum.de}}
\affil[1]{Department of Applied Mathematics and Theoretical Physics, University of Cambridge, United Kingdom}
 \author[1,2]{Angela Capel\orcidlink{0000-0001-6713-6760}}
\affil[2]{Fachbereich Mathematik, Universität Tübingen, 72076 Tübingen, Germany}
\author[3]{Anirban N. Chowdhury\orcidlink{0000-0001-9743-7978}}
 \affil[3]{IBM Quantum, IBM T.\ J.\ Watson Research Center, Yorktown Heights, NY, 10598, USA}
\author[1]{Hamza Fawzi\orcidlink{0000-0001-6026-4102}}
\author[4]{Omar~Fawzi\orcidlink{0000-0001-8491-0359}}
\affil[4]{Inria, ENS Lyon, UCBL, LIP, F-69342 Lyon Cedex 07, France}
\author[5]{Isaac H. Kim\orcidlink{0000-0001-7689-3157}}
\affil[5]{Department of Computer Science, UC Davis, Davis, CA, 95616, USA}
\author[6]{Arkin Tikku\orcidlink{0000-0002-5942-9325}}
\affil[6]{Centre for Engineered Quantum Systems, University of Sydney, Sydney, NSW 2006, Australia.}
\title{Classical Estimation of the Free Energy and Quantum Gibbs Sampling from the Markov Entropy Decomposition}
\date{}
\begin{document}

\maketitle

\begin{abstract}
We revisit the Markov Entropy Decomposition, a classical convex relaxation algorithm introduced by Poulin and Hastings to approximate the free energy in quantum spin lattices. We identify a sufficient condition for its convergence, namely the decay of the effective interaction.
The effective interaction, also known as Hamiltonians of mean force, is a widely established correlation measure, and we show our decay condition in 1D at any temperature as well as in the high-temperature regime under a certain commutativity condition on the Hamiltonian building on existing results.
This yields  polynomial and quasi-polynomial time approximation algorithms in these settings, respectively. Furthermore, the decay of the effective interaction implies the decay of the conditional mutual information for the Gibbs state of the system. We then use this fact to devise a rounding scheme that maps the solution of the convex relaxation to a global state and show that the scheme can be efficiently implemented on a quantum computer, thus proving efficiency of quantum Gibbs sampling under our assumption of decay of the effective interaction.
\end{abstract}

\section{Introduction}
\label{sec:introduction}

Determining thermal equilibrium properties of interacting quantum systems is a foundational yet challenging problem in many-body physics. Computing the free energy, or sampling from the equilibrium Gibbs distribution of a quantum Hamiltonian are known to be computationally hard in the worst-case. Therefore, it is important to identify instances where these problems can be solved efficiently and indeed design new algorithmic techniques for doing so. Recent years have seen a number of advances in this area, which have built upon deep insights into the structural properties of thermal states \cite{alhambra2023quantum}.

One such technique, known as the Markov Entropy Decomposition (MED), was proposed by Poulin and Hastings in \cite{Poulin2011}.
This algorithm extends ideas from convex optimization for bounding from below ground state energies \cite{gharibian2019almost,bravyi2019approximation,parekh2021application,hastings2022optimizing} to the setting of thermal states, thereby naturally providing \emph{lower bounds} to the free energy.
The MED has been studied numerically \cite{Ferris2013} but there are no known guarantees on its convergence.
The authors in \cite{Poulin2011} conjecture that its convergence is connected to the approximate \emph{saturation of strong subadditivity} also known as the decay of the \emph{conditional mutual information (CMI)}.

In this paper, we provide a sufficient condition for efficient convergence of the MED - that of an exponentially decaying effective interaction. 
The effective interaction is the difference between the parent Hamiltonian of marginals of the thermal state (the effective Hamiltonian) and the original Hamiltonian restricted to the support.
The effective Hamiltonian, which is also known as the Hamiltonian of mean force, has been widely studied in statistical and quantum many-body physics \cite{alhambra2023quantum,Talkner2020,Campisi2009,Kuwahara2024,Bilgin2010bp,Cresser2021,Hakoshima_2024,Trushechkin2022,bluhm2024}. Using known results, the exponential decay of the effective interaction with distance can be shown to hold in one-dimensional systems \cite{Kuwahara2024} and for higher-dimensional systems above a threshold temperature under an additional commutativity condition \cite{bluhm2024}. We thus prove that the MED gives efficient algorithms in these settings.

It is also related to the entanglement Hamiltonian which has been studied in the context of entanglement properties of ground states \cite{Rottoli_2025,Eisler_2022,Peschel_2009,Parisen2018,Eisler_2020,Kokail2021,Zache2022}. We carefully formalize the notion of an exponentially-decaying effective interaction in a way that is consistent with known rigorous, conjectured and experimental results, and show that this implies the decay of CMI. 1D systems of commuting Hamiltonians fulfill this property even exactly for a finite length scale. As mentioned earlier, it is known to hold for 1D Hamiltonians and for certain commuting systems in higher dimensions at high temperatures. It remains an open question how to generalize this property to noncommuting Hamiltonians beyond 1D since a claimed proof in \cite{Kuwahara2020} was found to be flawed. Our work can be seen as rigorously connecting the efficiency of algorithms for free energies (and thereby expectation values of local observables) to an established physical and information theoretic criterion.

Our work also highlights the use of \emph{convex programming relaxations} in designing provably efficient algorithms for free energy problems. Convex and semi-definite relaxations have been widely used for ground-state energy problems \cite{gharibian2019almost,bravyi2019approximation,parekh2021application,hastings2022optimizing}. Recent work has also extended its use to free energies through ideas similar to the MED \cite{Bravyi2022}; see also \cite{risteski2016partition} for related classical work. However, these approaches typically give algorithms that obtain constant-factor multiplicative approximations to the objective in polynomial time.
Our work establishes significantly stronger error guarantees for the convex relaxation considered here\,---\,in the two known cases where the exponential decay of the effective interaction holds, the MED has runtime that is (quasi-)polynomial in the inverse error.

\begin{figure}[h!]
    \centering
    \begin{tikzpicture}[font=\scriptsize, scale=.9]
        \def\toolcolor{darkgreen2}
        \def\assumptioncolor{CoralDark}
        \def\resultcolor{SkyBlue}

        \boxElement{0}{3}{6}{1.2}{\assumptioncolor}{Effective interaction}{Quasi-local decomposition of $\log(\rho_A)$,\\
        see Definition~\ref{def:heff}.};
        
        \boxElement{0}{0}{6}{1.2}{\toolcolor}{Decay CMI}{\phantom{asddd}$I(A:C|B)\le e^{-\mathcal{O}(d(A,C))}$};

        \boxElement{8}{3}{6}{1.2}{\toolcolor}{MED}{Classical (quasi-)polynomial time\\algorithm for $F$ and $\rho_A$.};

        \boxElement{8}{0}{6}{1.3}{\resultcolor}{Efficient Gibbs sampling}{(Quasi-)polynomial time quantum \\algorithm to prepare $\rho$.};

        \draw[black, thick, ->] (3.5, 3) -- (3.5, 1.2) node[midway,left] {Lemma \ref{lem:CMI_decay}} ;
        \draw[black, thick, ->] (6, 3.6) -- (8, 3.6) node[midway,above] {Corollary \ref{cor:free_energy_algorithm}};
        \draw[black, thick] (5.9, 1.1) -- (7.3, 1.7);
        \draw[black, thick] (8.1, 3.1) -- (7.3, 1.7);
        \draw[black, thick, ->] (7.3, 1.7) -- (8.05, 1.1) node[above,below left] {Corollary \ref{cor:circuitDecomposed}};
    \end{tikzpicture}
    \caption{Overview of the connections proven in this work. We prove a decay condition on the effective interaction and use it for the design of classical and quantum algorithms for computing the free energy $F$. Here, $\rho$ and $\rho_A$ denote the Gibbs state and its marginal, respectively, and $I(A:C|B)=S(AB)+S(BC)-S(ABC)-S(B)$ is the von Neumann conditional mutual information.}
    \label{fig:paper-structure}
\end{figure}

By nature of being a convex relaxation algorithm, the optimization variable in the algorithm does not necessarily correspond to a consistent global state.
A natural question is then to find a \emph{rounding procedure} to obtain a globally consistent state from the locally consistent ones.
This idea is central to the study of approximation ratios in combinatorial optimization \cite{williamson2011design}.
Here, we follow this approach, but the rounding algorithm is a \emph{quantum algorithm} as it should output a large quantum state. More specifically, we use the approximate marginals returned from the MED approximation to construct a sequence of recovery channels that produces a global state close to the Gibbs state. These maps are based on the rotated Petz recovery map, which was introduced as an explicit construction of a map that occurs in a strengthened data-processing inequality \cite{fawzi_quantum_2015, junge2018}.
The resulting map turns out to be \emph{efficiently implementable}, providing an \emph{efficient Gibbs sampler} in the settings described above. The combination of our classical approximation scheme with our quantum rounding procedure thus neatly combines two known approaches to exploit the Markov property of Gibbs states in algorithm design.

It should be noted however that our quantum rounding algorithm can be combined with \textit{any} classical algorithm that computes approximations of the marginals and is not limited to the MED. The obtained guarantee then depends on the approximation quality of the marginals and on a bound on the decay of the CMI.

\paragraph{Outline}
The remainder of this introduction gives a technical overview of the MED recalling the main result from \cite{Poulin2011} and outlines our main result in an informal way.
We illustrate our main result using the special case of exactly vanishing CMI in Section~\ref{sec:exact_case}.
Section~\ref{sec:general_theory} proves our main result, the efficiency of the MED for systems with decaying effective interactions, whose existence in the settings mentioned above is shown in Section~\ref{sec:efficiency}.
We conclude with Section~\ref{sec:rounding}, which introduces our rounding scheme and its implications for Gibbs sampling.

\subsection{The Markov Entropy Decomposition}

\paragraph{Setup} We consider a finite spin lattice system $\Lambda\subset\ZZ^D$, $|\Lambda|=N$  with Hilbert space $\mathcal H=\otimes_{v\in\Lambda}\mathbb C^d$, and a geometrically local Hamiltonian given by terms $h_X$, which have support on $X\subset\Lambda$. The distance function on the lattice $\Lambda$ is denoted by $d(\cdot,\cdot)$, and for sets $A,B\subset\Lambda$ we define $d(A,B)=\min_{x\in A,y\in B}d(x,y)$ and $B_\ell(A)=\{y\in\Lambda:d(y,A)\le\ell\}$.
The Hamiltonian for a subset $V \subset \Lambda$ is given by
\[
H_{V}=\sum_{X\subset V} h_X \, ,
\]
and we define the Gibbs state as
\[
\rho=\exp(-H)/\Tr[\exp(-H)] \, ,
\]
abbreviating the system Hamiltonian $H:=H_\Lambda$.
We assume a finite range of interactions with all non-vanishing terms $h_X$ fulfilling $\diam(X)\le r$ for a constant range $r$.
We absorb the temperature $\beta$ into the definition of the Hamiltonian terms. The \textit{free energy} is given by the following minimization formula
\begin{equation}
    \label{eq:def-free-energy}
\F=\min_{\sigma:\sigma\ge0,\Tr[\sigma]=1}\F(\sigma)=\min_{\sigma:\sigma\ge0,\Tr[\sigma]=1} \Tr[H\sigma] - S(\sigma) \, ,
\end{equation}
where $S(\sigma) = -\Tr[\sigma\log(\sigma)]$ denotes the von Neumann entropy.
Since it involves an optimization over a density matrix which is exponentially large in system size, it cannot be used directly for numerical approaches.

\paragraph{Markov shields} In order to define the Markov entropy decomposition, we first fix a numbering $1,\ldots,N=|\Lambda|$ of the sites of the lattice. In 1D we assume a consecutive ordering, whereas in higher dimensions no such assumption is needed. We choose, for each site $k \in \{1,\ldots,N\}$ a \emph{Markov shield} $S_k\subset [k]=\{1,\ldots,k\}$ which is a subset of the sites that appear before site $k+1$. Even though the MED can be defined for any choice of shields, in this paper we will assume that the shields are given by
\begin{equation}
\label{eq:Skshield}
S_k = B_{\ellshield}(\{k+1\}) \cap \{1,\ldots,k\} \, ,
\end{equation}
where $\ellshield$ is a parameter governing the shield size, see Figure \ref{fig:markovshield2d}. We will further choose $\ellshield$ such that $\ellshield \geq r$ (the locality of the Hamiltonian), which guarantees that
\begin{equation}
\label{eq:hk}
h_k:=H_{[k+1]}-H_{[k]}
\end{equation}
is supported in $S_k \cup \{k+1\}$. For convenience, we will use the notation
\begin{equation*}
S'_k = S_k \cup \{k+1\}
\end{equation*}
and we take the convention $S_0 = \emptyset$.

\begin{figure}[ht]
\centering
    \includegraphics[scale=0.35]{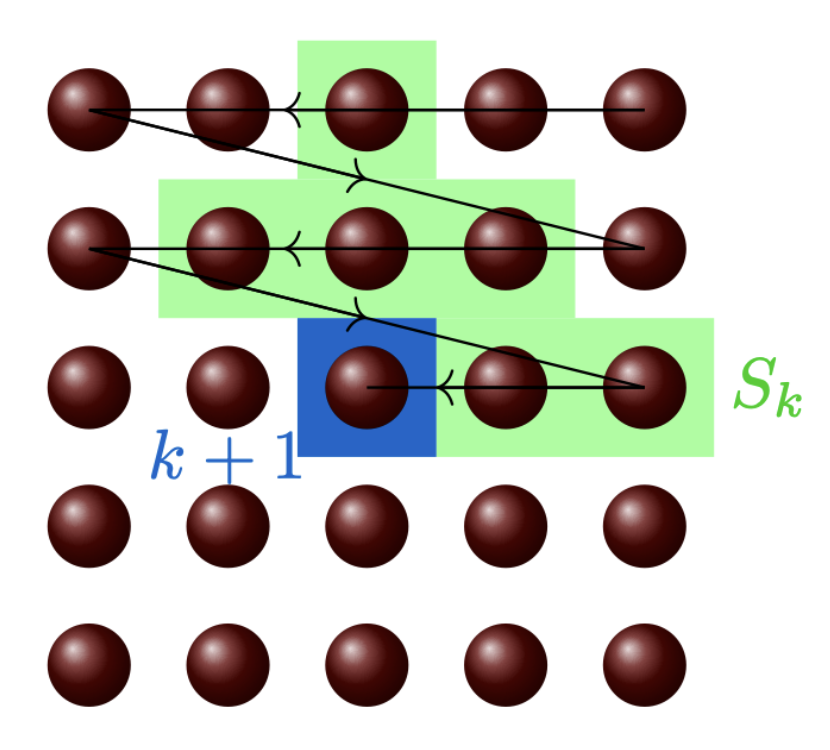} 
    \caption{Illustration of the Markov shield $S_k$ as defined in \eqref{eq:Skshield} with $\ellshield=2$.}
    \label{fig:markovshield2d}
\end{figure}

\paragraph{Markov Entropy Decomposition} Given a set of density matrices $\{\sigma_{S'_k}\}_{1\leq k \leq N-1}$ defined on the regions $S'_k$ for $k \in \{1,\ldots,N-1\}$, we define the Markov entropy decomposition functional as
\begin{equation*}
\FMED\left(\{\sigma_{S_k'}\}\right)=\sum_{k=0}^{N-1} \Tr[h_k\sigma_{S_k'}]-\sum_{k=0}^{N-1} S(k+1|S_k)_{\sigma_{S_k'}},
\end{equation*}
where $h_k$ is as defined in \eqref{eq:hk} and  $S(k+1|S_k)_{\sigma_{S_k'}}$ denotes the conditional entropy, i.e., $S(k+1|S_k)_{\sigma_{S_k'}} = S(\sigma_{S'_k}) - S(\tr_{k+1} \sigma_{S'_k})$. 
Observe that if the marginals $\{\sigma_{S_k'}\}$ are consistent with a global quantum state $\sigma$, i.e., $\sigma_{S_k'}=\tr_{\Lambda\setminus{S_k'}}[\sigma]$, then $\FMED(\{\sigma_{S_k'}\}) \leq \F(\sigma)$ by strong subadditivity, since:
\begin{align}
    \FMED(\{\sigma_{S_k'}\})-\F(\sigma)&=S([N])_\sigma-\sum_{k=0}^{N-1} S(k+1|S_k)_\sigma\label{eq:FMED-F}\\
    &\leq S([N])_\sigma-\sum_{k=0}^{N-1} S(k+1|[k])_\sigma \label{eq:SSA}\\
    &= 0.
\end{align}
We denote subsystem entropies by $S(A)_\sigma=-\Tr[\sigma\log(\Tr_{\overline{A}}[\sigma])]$.
The MED relaxation is then defined as the minimization of the functional $\FMED$ over all \emph{locally consistent} collections of density matrices $\{\sigma_{S'_k}\}_{1\leq k\leq N-1}$, namely:
\begin{align}
\label{eq:defFMED}
\FMED=\min_{\{\sigma_{S_k'}\} \textrm{ s.t. }B\left(\{\sigma_{S_k'}\}\right)=0} \FMED\left(\{\sigma_{S_k'}\}\right),
\end{align}
where we suppressed the implicit constraints $\sigma_{S_k'}\ge0$ and $\Tr\left[\sigma_{S_k'}\right]=1$, and $B$ is a linear map that encodes the local consistency constraints, more precisely
\begin{align}
B:\bigoplus_k \mathcal B\left(\mathcal{H}_{S_k'}\right)&\to\bigoplus_{j,l} \mathcal B\left(\mathcal{H}_{S_j'\cap S_l'}\right)\\
B\left(\{\sigma_{S_k'}\}\right)_{j,l}&=\tr_{S'_j \setminus S'_l}\left[\sigma_{S_j'}\right]-\tr_{S'_l \setminus S_j'}\left[\sigma_{S_l'}\right] \label{eq:expression_B}\,,
\end{align}
i.e., Eq.~\eqref{eq:expression_B} vanishes if and only if for every pair of density matrices the respective reduced density matrices on the intersection of their supports are equal. This constraint trivially holds for a set of marginals of a consistent global state.

It is immediate from the observation \eqref{eq:FMED-F}-\eqref{eq:SSA} above and the definition of the free energy \eqref{eq:def-free-energy} that 
\begin{equation}\label{eq:FMED_bounded_F}
    \FMED\le \F.
\end{equation}

\subsection{Main contributions}

In this paper we prove an a priori guarantee on the convergence of the MED relaxation, and we give a rounding procedure to recover a global quantum state from the approximate marginals of the MED.

\paragraph{Effective interaction and convergence of MED} Our theorem on the convergence of the MED is based on a condition on the \emph{effective interaction}. Given a region $A \subset \Lambda$, the parent Hamiltonian of the marginal of the Gibbs state on $A$ can be defined as $\tilde{H}_A = -\log(\rho_A) - \log(\Tr e^{-H})$ so that $\rho_A = e^{-\tilde{H}_A}/\Tr e^{-\tilde{H}_A}$. (This is sometimes known as the Hamiltonian of mean force or effective Hamiltonian, see \cite[Section V.B]{alhambra2023quantum}.) The additive constant in the definition is for convenience as it cancels the partition function in the definition of $\rho_A$, but not necessary to obtain our results as we make no assumptions on the overall strength of the effective interaction. A natural question is to know how different $\tilde{H}_A$ is from $H_A$, the original Hamiltonian restricted to the region $A$. The exact effective interaction on $A$ is defined as the difference
\begin{equation*}
    \heffexact{A} = \tilde{H}_A - H_A \, .
\end{equation*}
Informally speaking, we say that a Hamiltonian has an exponentially-decaying effective interaction if $\heffexact{A}$ can be decomposed into local terms whose norms decay exponentially with the size of their support, see Definition~\ref{def:heff} for a formal statement. Such an assumption is known to hold for 1D local Hamiltonians at any constant temperature 
as well as for $D$-dimensional local Hamiltonians with commuting terms and marginals 
above a critical temperature, see Section~\ref{sec:efficiency}.\footnote{It is an interesting open question to prove the decay of effective interactions in more general settings. Results on the decay of the CMI (a weaker condition) do exist, see e.g.,~\cite{kato2019quantum} and~\cite{castelnovo2007entanglement,castelnovo2008topological} for specific models at low temperatures.
The paper \cite{Kuwahara2024} proves the decay in higher dimensions at all temperatures, but in a slightly weaker version than needed for our applications.
}
Our first main result is an efficiency guarantee for the MED under the above assumption.
\begin{theorem}[Informal version of Theorem~\ref{thm:errorEff}]
\label{thm:mainthmMEDintro}
Let $\rho$ be the thermal state, and $\F$ the free energy of a Hamiltonian with exponentially-decaying effective interaction defined on a lattice $\Lambda$ with $N$ sites. Let $\eps>0$.
Then for some choice of $\ell_S=\mathcal{O}(\log(N/\eps))$, the Markov Entropy Decomposition \eqref{eq:defFMED} outputs approximations $\FMED$, $\sigma_X$ for constant sized sets $X \subset \Lambda$ such that
\begin{align*}
|\FMED-\F|&\le\eps \, ,\\
\|\rho_X-\sigma_X\|_1&\le\eps\,.
\end{align*}
\end{theorem}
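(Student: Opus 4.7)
The plan is to convert the structural hypothesis on the effective interaction into an approximation guarantee via an intermediate information-theoretic bound, and then exploit a rounding argument to control both the free-energy gap and the local marginal error.

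\paragraph{Step 1: CMI decay.} First, I invoke Lemma~\ref{lem:CMI_decay} to deduce that the Gibbs state $\rho$ has exponentially decaying conditional mutual information: for disjoint lattice regions $A,B,C$ with $B$ shielding $A$ from $C$,
\begin{equation*}
I(A:C \mid B)_\rho \;\le\; c_1\, e^{-\kappa\, d(A,C)},
\end{equation*}
where $c_1,\kappa$ depend only on the effective-interaction parameters, not on $N$. This converts the hypothesis on $\heffexact{A}$ into a purely information-theoretic statement about $\rho$, which is what the rest of the proof uses.

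\paragraph{Step 2: Feasibility gap and the easy direction $\FMED \le F$.} Plugging $\sigma_{S_k'} = \rho_{S_k'}$ into the MED functional, the identity from \eqref{eq:FMED-F}--\eqref{eq:SSA} gives
\begin{equation*}
F - \FMED(\{\rho_{S_k'}\}) \;=\; \sum_{k=0}^{N-1} I(\{k+1\}:[k]\setminus S_k \mid S_k)_\rho \;\le\; (N-1)\, c_1\, e^{-\kappa\, \ellshield},
\end{equation*}
using that $\{k+1\}$ is shielded from $[k]\setminus S_k$ by $S_k$ at distance $\ge \ellshield$. Choosing $\ellshield = \Theta(\log(N/\varepsilon))$ controls this feasibility gap by $\varepsilon/2$, and since $\FMED \le \FMED(\{\rho_{S_k'}\})$ this yields $\FMED \le F$ with a quantitative estimate on $F - \FMED(\{\rho_{S_k'}\})$.

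\paragraph{Step 3: Lower bound $\FMED \ge F - \varepsilon$ via rounding.} To control the relaxation gap $\FMED(\{\rho_{S_k'}\}) - \FMED$, I apply the rounding scheme of Section~\ref{sec:rounding} to any feasible collection $\{\sigma_{S_k'}\}$: iterating rotated Petz recovery maps built from the input marginals produces a global state $\tilde\sigma$ on $\Lambda$ with two key features. First, $\tilde\sigma$ is approximately Markov with respect to the shield chain, so by Fawzi--Renner-type strengthened data processing combined with the CMI bound of Step~1, $S(\tilde\sigma) = \sum_k S(\{k+1\}\mid S_k)_{\tilde\sigma} + O(\varepsilon)$. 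Second, the marginals $\tilde\sigma_{S_k'}$ track $\sigma_{S_k'}$ with trace-distance errors that sum to $O(\varepsilon)$. Combining these with Hölder applied to the bounded local terms $h_k$ and Alicki--Fannes continuity of the conditional entropy yields
\begin{equation*}
F(\tilde\sigma) \;=\; \Tr[H\tilde\sigma] - S(\tilde\sigma) \;\le\; \FMED(\{\sigma_{S_k'}\}) + O(\varepsilon).
\end{equation*}
Taking $\sigma = \sigma^*$ to be the MED optimum and invoking the variational principle $F \le F(\tilde\sigma)$ gives $F - \FMED \le O(\varepsilon)$; rescaling $\varepsilon$ yields $|\FMED - F| \le \varepsilon$.

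\paragraph{Step 4: Marginal closeness, and main obstacle.} For a constant-sized $X \subset \Lambda$ contained in some $S_k'$, the rounded state $\tilde\sigma^*$ associated to the MED optimum satisfies $D(\tilde\sigma^* \,\|\, \rho) = F(\tilde\sigma^*) - F \le O(\varepsilon)$ by the Gibbs variational identity, so Pinsker gives $\|\tilde\sigma^* - \rho\|_1 \le O(\sqrt{\varepsilon})$. Restricting to $X$ and combining via the triangle inequality with the bound $\|\tilde\sigma^*_X - \sigma^*_X\|_1 \le O(\varepsilon)$ from property (ii) of the rounding yields $\|\sigma^*_X - \rho_X\|_1 \le O(\sqrt{\varepsilon})$; setting the initial target accuracy to $\varepsilon^2/C$ absorbs the square root and gives the claim. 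The main obstacle is establishing the two properties (i) and (ii) of the rounded state in Step~3: tracking the error propagation across $N$ iterated rotated Petz recoveries, both in the approximate Markov structure and in the marginal drift, is precisely what forces the use of the rotated (rather than standard) Petz recovery and the CMI-based continuity bounds developed in Section~\ref{sec:rounding}.
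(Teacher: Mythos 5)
There is a genuine gap, and it sits exactly where you flag it: Step 3. You propose to control the hard direction, the relaxation gap $F-\FMED\le\eps$, by rounding an \emph{arbitrary} feasible point (in particular the MED optimizer $\sigma^*$) into a global state $\tilde\sigma$ whose entropy decomposes along the shield chain and whose marginals track $\sigma^*_{S_k'}$. But Theorem~\ref{thm:rounding} does not give this: its hypotheses require the input marginals to be $\eps_\sigma$-close to the \emph{true Gibbs marginals} $\rho_{S_k'^2}$, and its error analysis leans on the CMI decay of $\rho$ itself through the auxiliary (non-constructive) recovery channel $\mathcal R_{k+1}$ built from $\rho$ — the per-step errors only telescope because every intermediate state is compared against $\rho_{[k+1]}$. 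For a generic locally consistent family nothing of the sort holds: later channels $\phi_{k+2},\ldots$ act on sites inside $[k+1]$, so the final state's restriction to $[k+1]$ is not the intermediate state you constructed, the approximate Markov structure and the marginal-tracking property both drift, and the relaxation gap is precisely the statement that such a family need not be near the marginals of \emph{any} global state. Invoking the rounding on $\sigma^*$ is therefore circular: the closeness of $\sigma^*_{S_k'}$ to $\rho_{S_k'}$ is part of the conclusion of the theorem you are trying to prove, not an available hypothesis; the same circularity infects Step 4, whose two inputs ($F(\tilde\sigma^*)-F\le O(\eps)$ and $\|\tilde\sigma^*_X-\sigma^*_X\|_1\le O(\eps)$) both rest on Step 3. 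Your Steps 1--2 are fine but only bound the easy feasibility gap $F-\FMED(\{\rho_{S_k'}\})$, which is not the bottleneck.

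The paper closes the relaxation gap by a completely different, purely classical convex-duality argument that never uses CMI decay or rounding: by convexity, $\FMED(\{\sigma_{S_k'}\})\ge\langle\nabla\FMED(\{\rho_{S_k'}\})+Z,\{\sigma_{S_k'}\}\rangle-1$ for any $Z\in\ker(B)^\perp$ (Lemma~\ref{lem:MEDerror}), and the key combinatorial Lemma~\ref{lem:gradient} uses the exponentially-decaying effective interaction — properties (i)--(iv) of Definition~\ref{def:heff}, via the identity $\hefftot{[i+1]}-\hefftot{[i]}=\hefftot{S_i'}-\hefftot{S_i}$ and an iterative cancellation scheme — to build a $Z$ that cancels the gradient up to $2\eps$ per site, giving $\delta\le 2N\eps$ and hence $|\FMED-F|\le 2N\eps$ with $\ell_S=5\ell$. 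The marginal bounds then come from the same $\delta$ through the telescoping relative-entropy inequality \eqref{eq:errDivDelta}, data processing, and Pinsker (Theorem~\ref{thm:errorEff} and Remark~\ref{rem:compute-marginal}), with $\eps$ rescaled by $1/N$ so that $\ell_S=\mathcal O(\log(N/\eps))$ suffices. To salvage your route you would need an independent argument that gluing an arbitrary locally consistent family yields a global state with both properties (i) and (ii) — which is not provided by Section~\ref{sec:rounding} and is not believed to hold without further structure — so as written the proposal does not establish the statement.
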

The theorem refers to the exact minima of the MED.
However, the MED is a convex optimization problem of size $\exp(\mathcal O(\ell_S^D))$ such that we can turn the above into a classical approximation algorithm  of polynomial (1D) or quasi-polynomial (higher D) runtime in $N$ and $1/\eps$, see Corollary~\ref{cor:free_energy_algorithm}.

\paragraph{Rounding} Our second main result concerns the rounding scheme, giving a quantum algorithm that reconstructs a global state from approximate local marginals using rotated Petz recovery maps. In fact our result is not restricted to the setting of Gibbs states or the MED and applies generally to any density matrix with decaying CMI.
\begin{theorem}[Informal version of Theorem~\ref{thm:rounding}]
\label{thm:mainthmRoundingintro}
Let $\rho$ be a state on a lattice $\Lambda$ with $N$ sites such that
\[
I(A:C|B)\le \exp(-\Omega(d(A,C)))\,\qquad\forall \textrm{ disjoint }A,B,C\subset \Lambda.
\]
For a given $\eps>0$, there is an input error $1/\eps_\sigma=\poly(1/\eps)$ and $\ellshield=\mathcal{O}(\log(N/\eps))$ such that given marginal approximations $\sigma_X$ with $\|\sigma_X-\rho_X\|\le\eps_\sigma$ on all regions $S_k'$, there are channels $(\phi_k)_{2\leq k\leq N}$ that reconstruct the global state
\[
\|\widetilde\sigma-\rho\|_1=\left\|\phi_N \circ \dots \circ \phi_2(\sigma_{\{1\}})-\rho\right\|_1\le\eps,
\]
where each channel $\phi_k$ only acts on sites $S'_k$.
\end{theorem}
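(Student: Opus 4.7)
The plan is to build the global state iteratively, adding one site at a time using a rotated Petz recovery map defined from the \emph{approximate} marginals, and then to telescope the per-step errors. For each step $k = 2, \ldots, N$, set $A := [k-1] \setminus S_{k-1}$, $B := S_{k-1}$, $C := \{k\}$. Had we access to the exact marginal $\rho_{S_{k-1}'}$, the Fawzi-Renner strengthened data-processing inequality and the explicit rotated Petz construction of \cite{junge2018} would furnish a channel $\mathcal{R}^{\rho}_{k}$ supported on $BC = S_{k-1}'$ with
\[
\bigl\|\mathcal{R}^{\rho}_{k}(\rho_{[k-1]}) - \rho_{[k]}\bigr\|_{1} \lesssim \sqrt{I(A:C|B)_{\rho}}.
\]
Because $d(A,C) > \ellshield$, the decay-of-CMI hypothesis bounds this per-step error by $\exp(-\Omega(\ellshield))$. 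I would then define $\phi_{k}$ to be the \emph{same} rotated Petz map but with $\sigma_{S_{k-1}'}$ substituted for $\rho_{S_{k-1}'}$. Since the rotated Petz recovery map depends only on the local state on $BC$, this $\phi_{k}$ acts only on $S_{k-1}'$, as the theorem requires.

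Next I would run a standard hybrid/telescoping argument. Setting $\tau_{1} := \sigma_{\{1\}}$ and $\tau_{k} := \phi_{k}(\tau_{k-1})$, contractivity of each $\phi_{k}$ in trace distance gives
\[
\|\tau_{N} - \rho\|_{1} \leq \|\sigma_{\{1\}} - \rho_{\{1\}}\|_{1} + \sum_{k=2}^{N} \bigl\|\phi_{k}(\rho_{[k-1]}) - \rho_{[k]}\bigr\|_{1}.
\]
Each summand I would split by the triangle inequality into (i) the Fawzi-Renner error $\|\mathcal{R}^{\rho}_{k}(\rho_{[k-1]}) - \rho_{[k]}\|_{1} \leq \exp(-\Omega(\ellshield))$ controlled by the CMI bound, plus (ii) a Petz \emph{stability error} $\|(\mathcal{R}^{\rho}_{k} - \phi_{k})(\rho_{[k-1]})\|_{1}$ reflecting the perturbation of the defining state of the Petz map by $\|\rho_{S_{k-1}'} - \sigma_{S_{k-1}'}\|_{1} \leq \eps_{\sigma}$. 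Choosing $\ellshield = \Theta(\log(N/\eps))$ absorbs the first contribution in total, and choosing $\eps_{\sigma}$ small enough controls the second, yielding $\|\tilde\sigma - \rho\|_{1} \leq \eps$.

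The main technical obstacle is establishing a \emph{quantitative} stability bound for the rotated Petz recovery map as a function of its defining state, of the form
\[
\bigl\|\mathcal{R}^{\omega}_{B \to BC} - \mathcal{R}^{\omega'}_{B \to BC}\bigr\|_{\diamond} \leq C\,\|\omega - \omega'\|_{1},
\]
with an explicit constant $C$. The difficulty is that the map is defined by integrating complex powers of the form $\omega^{(1\pm it)/2}\omega_{B}^{-(1\pm it)/2}$, and the \emph{inverse} powers amplify perturbations by a factor controlled by the smallest eigenvalue of $\omega_{B}$. What rescues the argument is that $|S_{k-1}'|$ is constant in $N$ (depending only on $\ellshield$ and the lattice dimension), so the Gibbs marginal on $S_{k-1}'$ has minimum eigenvalue at least $e^{-\mathcal{O}(\ellshield^{D})}$; after a tiny admixture with the maximally mixed state if necessary, so does $\sigma_{S_{k-1}'}$. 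Combining this spectral lower bound with standard integral representations for $A^{s} - B^{s}$ and the complex interpolation machinery already used in the rotated Petz literature should then give a stability constant $C$ that is polynomial in the relevant parameters, from which the claimed $1/\eps_{\sigma} = \poly(1/\eps)$ follows.
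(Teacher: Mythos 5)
Your construction of the channels (rotated Petz maps built from the approximate marginals, applied site by site, followed by a hybrid/telescoping argument) matches the paper's construction, but your error analysis takes a route that has a genuine gap. You reduce everything to a diamond-norm stability bound $\|\mathcal{R}^{\omega}-\mathcal{R}^{\omega'}\|_{\diamond}\le C\|\omega-\omega'\|_{1}$ and then claim the constant is controlled because ``the Gibbs marginal on $S_{k-1}'$ has minimum eigenvalue at least $e^{-\mathcal{O}(\ellshield^{D})}$.'' This step is unsupported. First, Theorem~\ref{thm:rounding} is stated for an \emph{arbitrary} state $\rho$ with decaying CMI, not a Gibbs state, and such a $\rho$ can have marginals with arbitrarily small (even zero) eigenvalues; since inverse powers $\rho_{B}^{-1/2\pm it}$ of the \emph{true} marginal enter $\mathcal{R}^{\rho}_{k}$, mixing $\sigma$ with the maximally mixed state does not help --- you cannot perturb $\rho$, which is the recovery target, without losing the Fawzi--Renner guarantee. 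Second, even in the Gibbs case, a lower bound on $\lambda_{\min}(\rho_{S_{k-1}'})$ that is exponential only in the subsystem size is not known in general: the marginal of a Gibbs state is not the Gibbs state of a local Hamiltonian, and the only generic bound is $\lambda_{\min}(\rho_{S_{k-1}'})\ge\lambda_{\min}(\rho)\ge e^{-\mathcal{O}(N)}$ (the paper makes exactly this point, both in the proof of Corollary~\ref{cor:free_energy_algorithm} and in Section~\ref{sec:comparisonBrandao}, where its own continuity bound, Proposition~\ref{prop:rotPetzCont}, carries a factor $d_{BC}/a^{5/2}$ and the needed bound on $a$ is only available for 1D translation-invariant systems). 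With $a$ exponentially small in $N$, your stability error forces $1/\eps_{\sigma}$ to be exponential in $N$, contradicting the claimed $1/\eps_{\sigma}=\poly(1/\eps)$.

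The paper's proof of Theorem~\ref{thm:rounding} avoids any stability estimate for the Petz map in its defining state. Instead, it transfers the CMI decay from $\rho$ to the approximate marginals via the Alicki--Fannes/Shirokov continuity bound for conditional mutual information on the \emph{doubled} shield $S'^{2}_{k}$, so that \cite{junge2018} applied to $\sigma$ itself gives $\|\phi_{k+1}(\sigma_{S^{2}_{k}})-\sigma_{S'^{2}_{k}}\|_{1}\le 2\sqrt{\eps_{CMI}+3\log d\sqrt{\eps_{\sigma}}}$ --- a purely local recovery guarantee for the channel actually used. The bridge to the global state is a second, ``ghost'' recovery map $\mathcal{R}_{k+1}$ defined from $\rho$ on the outer part of the doubled shield, which appears only in the analysis (never constructed) and commutes with $\phi_{k+1}$ because the two act on disjoint systems; a sixfold triangle inequality then yields a per-step error $4\sqrt{\eps_{CMI}}+2\eps_{\sigma}+2\sqrt{\eps_{CMI}+3\log d\sqrt{\eps_{\sigma}}}$ with no spectral or dimensional factors. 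If you want to salvage your route, you would need either the missing eigenvalue lower bound as an extra hypothesis, or to restructure the comparison as the paper does so that the Petz map is only ever required to recover the state it was built from.
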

Again, this theorem, which only gives a channel construction, can be lifted to a result on a quantum circuit preparing the Gibbs state.
More specifically, classically numerically computing a representation of the channels and decomposing them into quantum gates gives (quasi-)polynomial classical and quantum runtimes for the preparation of Gibbs states under the decay of the effective interaction assumption, see Corollary~\ref{cor:circuitDecomposed}.

\begin{remark}
We would like to comment on the omitted temperature dependence in our results.
Our proofs in fact have no direct temperature dependence but a temperature dependence does enter through the respective results on the existence of effective interactions.
Their decay rate itself is a decreasing function of temperature.
Our results and in particular the big-$\mathcal{O}$ notation should be understood as only valid for \emph{constant temperatures}.
For the high temperature results, the statements break down below a critical temperature and the runtime diverges as the temperature approaches this point.
In 1D, no such critical temperature exists and the results hold for any fixed temperature, but a scaling to obtain results on ground states is still impossible with the correlation length growing doubly exponentially in $\beta$.
\end{remark}

\paragraph{Related work}
In our work, we demonstrate various novel connections between physical concepts of decaying correlations, information theoretic insights into many-body states and the algorithmic complexity of problems related to thermal states.
The resulting asymptotic runtimes, however, are known to be achievable through a number of previously proposed algorithms, most of which are quite different in nature.
We discuss some of the most closely related formulations in the following.

Regarding the MED itself, rigorous convergence guarantees were only known in one setting for a closely related algorithm \cite{Bravyi2022}, which addresses a setup of dense 2-local Hamiltonians. This work runs the MED but optimizes over the choice of shields. The resulting algorithm is polynomial-time in system size but not in the inverse error.

For the free energy and marginal approximations many prior works have addressed the computational complexity of these problems; however, the approaches are usually quite different for the respective settings.
In one dimension, a series of works proposed efficient algorithms of polynomial and even subpolynomial runtimes \cite{kuwahara2018,Fawzi2023,Scalet2024,Kuwahara2021,achutha2024}.
These works are based on technical tools specific to the 1D setting including the quantum belief propagation \cite{Hastings_2007,Kim2012} and bounds on imaginary time evolution \cite{Araki1969,Perez2023}. Note that while the structure of the MED as an algorithm is entirely different to all these works, the convergence proof is based on \cite{Kuwahara2024}, which makes heavy use of the quantum belief propagation. For higher dimensions, without commutativity restrictions, a polynomial-time algorithm for the free energy is given in \cite{Mann_2021} (see also \cite{Harrow.2020,Molnar2015}), which is based on a direct evaluation of the cluster expansion.
Again, while the formulation of the algorithm is entirely different, the method using polymer models also inspired the proofs of the existence of effective interactions in \cite{bluhm2024} and thus indirectly leads to our convergence guarantees in the high-temperature setting.
General algorithms treating arbitrary systems exist \cite{Fawzi2024,Kliesch_2014}, but come with more restricted convergence guarantees.

Since our results also have implications for the existence of efficient Gibbs samplers, we also review some results for this problem.
In the 1D setting an efficient Gibbs sampler based on recursively merging Gibbs states of smaller subchains has been proposed in \cite{bilgin2010}.
Apart from that, but only restricted to the translation-invariant commuting setting, approaches based on simulating Lindbladian evolutions  which converge to the Gibbs state can be found in \cite{Bardet.2023,Bardet.2024,Kochanowski.2024}.
Lindbladian based approaches are also successful in the high-temperature regime, where efficiency guarantees for commuting Hamiltonians have been shown in \cite{capel2024,Kochanowski.2024,Kastoryano.2016,Capel.2020}.
Beyond the commuting setting, where the Davies generator does no longer lead to a local Lindbladian, a recent proposal introduced an alternative Lindbladian \cite{Chen.2023a,Chen.2023} which converges to the Gibbs state and can be efficiently implemented.
This approach has been equipped with mixing time bounds and thereby efficiency guarantees in \cite{Rouze.2024b,Rouze.2024c}.
Moving away from the Lindbladian approach other algorithms with efficiency guarantees at high temperature include sampling approaches based on the separability of these states \cite{Bakshi.2024} and adiabatic evolutions \cite{Ge_2016}.
Our approach in Theorem \ref{thm:rounding} using recovery maps has many similarities to  \cite{brandao2019,chen2025quantumgibbsstateslocally}, but there are some subtle differences in the necessary conditions that we discuss in Section \ref{sec:comparisonBrandao}.
Other approaches to noncommutative Gibbs sampling can be found, where polynomial time guarantees depend on currently unknown mixing times \cite{Temme.2011,Jiang.2024,Gilyen.2024,Zhang.2023}, technical assumptions \cite{Fang.2024,Motta.2019}, or the partition function being of the same order as the Hilbert space dimension \cite{Poulin.2009,Chowdhury.2017,Gilyen.2019}.

On a more conceptual level, we note that the idea of using physical or information theoretic results for the design of efficient algorithms has been demonstrated for various quantum many-body problems. This includes the use of area laws \cite{Hastings_2007_arealaw, arad2013,Verstraete2006}, correlation functions \cite{kuwahara2018,Capel.2020}, or conditional mutual informations \cite{Gondolf2024}. An operator version of the latter, which is closely related to our definition of the effective interaction has also found applications to Gibbs sampling \cite{capel2024}.

\section{Warmup: the exact case}
\label{sec:exact_case}

For illustration, we present a simplified version of our main result in this section. We argue that for exact quantum Markov chains, the MED approximation introduced before exactly captures the true free energy, i.e., a vanishing conditional mutual information implies an efficient algorithm for the free energy.
The proof strategy deviates from the approximate case and exploits the commutativity of Hamiltonian terms that follows only in the exact case for all tree graphs \cite{Poulin2011}.
The result in \cite{Poulin2011} can be seen as a restricted quantum analog to the Hammersley-Clifford theorem, which states the equivalence of Markov networks and Gibbs states on arbitrary graphs.
Note that in one dimension, for classical systems, the free energy can be calculated using the transfer matrix method, a standard result in statistical mechanics.
The restriction to 2-locality in the following theorem is without loss of generality since higher locality interactions can always be blocked to range two in one dimension.

\begin{theorem}
\label{thm:free_energy_algorithm_exact}In 1D, given a 2-local Hamiltonian, such that its thermal state $\rho$ fulfills
\begin{equation*}
    I(A:C|B)=0
\end{equation*}
for any adjacent intervals $ABC$ with nonempty $B$, the MED exactly coincides with the free energy
\begin{equation*}
    \FMED=\F.
\end{equation*}
\end{theorem}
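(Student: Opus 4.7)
The bound $\FMED \le \F$ is immediate from \eqref{eq:FMED_bounded_F}, so the task is to prove the reverse inequality $\F \le \FMED$. In the 1D 2-local setting we may take $\ellshield = 1$, so that $S_k = \{k\}$, $S_k' = \{k, k+1\}$, and the term $h_k$ in \eqref{eq:hk} is the nearest-neighbour interaction $h_{k, k+1}$. The plan is: given an arbitrary feasible collection $\{\sigma_{k, k+1}\}$ for $\FMED$, construct a global state $\tilde\sigma$ on $\Lambda$ such that $\F(\tilde\sigma) = \FMED(\{\sigma_{S_k'}\})$. Combined with the variational bound $\F \le \F(\tilde\sigma)$ from \eqref{eq:def-free-energy} and minimization over the feasible set, this yields $\F \le \FMED$.

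The construction is an iterated Petz recovery. Define $\mathcal{P}_k : \mathcal{B}(\mathcal{H}_k) \to \mathcal{B}(\mathcal{H}_{k, k+1})$ by
\[
\mathcal{P}_k(X) = \sigma_{k, k+1}^{1/2}\, \sigma_k^{-1/2}\, X\, \sigma_k^{-1/2}\, \sigma_{k, k+1}^{1/2},
\]
so that $\mathcal{P}_k(\sigma_k) = \sigma_{k, k+1}$ and $\tr_{k+1} \circ \mathcal{P}_k = \mathrm{id}_{\mathcal{B}(\mathcal{H}_k)}$. Set $\tilde\sigma_{[2]} := \sigma_{1, 2}$ and iterate $\tilde\sigma_{[k+1]} := (\mathrm{id}_{[k-1]} \otimes \mathcal{P}_k)(\tilde\sigma_{[k]})$. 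The recovery identity together with the local consistency constraints imply inductively that (i) $\tr_{k+1} \tilde\sigma_{[k+1]} = \tilde\sigma_{[k]}$, so every pair marginal is preserved and $\tilde\sigma_{j, j+1} = \sigma_{j, j+1}$ for all $j$; and (ii) $\tilde\sigma$ is a quantum Markov chain on the line, $I([k-1] : k+1 \mid k)_{\tilde\sigma} = 0$ for all $k$.

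These two properties force $\F(\tilde\sigma) = \FMED(\{\sigma_{S_k'}\})$. Indeed, 2-locality of $H$ together with pair-marginal matching gives $\Tr[H \tilde\sigma] = \sum_{k=0}^{N-1} \Tr[h_k \sigma_{S_k'}]$, while the Markov property telescopes the entropy
\[
S(\tilde\sigma) = S(\sigma_1) + \sum_{k=1}^{N-1} \bigl(S(\sigma_{k, k+1}) - S(\sigma_k)\bigr) = \sum_{k=0}^{N-1} S(k+1 \mid S_k)_{\sigma_{S_k'}},
\]
since the QMC structure saturates strong subadditivity and hence $S(k+1 \mid [k])_{\tilde\sigma} = S(k+1 \mid k)_{\tilde\sigma}$.

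The main obstacle is ensuring each Petz map is well-defined, i.e.\ that the single-site marginals $\sigma_k$ appearing in $\mathcal{P}_k$ are invertible. This holds automatically for the Gibbs-state marginals $\rho_k$; for general feasible marginals it can be handled by restricting to the support of $\sigma_k$ or by perturbing and taking limits. The hypothesis $I(A:C|B)_\rho = 0$ enters in a structural sense: it is precisely the Poulin--Hastings condition \cite{Poulin2011} identifying $\rho$ itself as the Petz-reconstruction of its own pair marginals, which is what brings the commuting edge-Hamiltonian decomposition on the line into play and guarantees that the relaxation is exactly tight.
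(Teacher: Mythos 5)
Your reduction — ``for every feasible $\{\sigma_{k,k+1}\}$ construct a global state with the same MED value'' — breaks at the claimed identity $\tr_{k+1}\circ\mathcal P_k=\mathrm{id}_{\mathcal B(\mathcal H_k)}$, which is false for a generic correlated $\sigma_{k,k+1}$. The Petz map does satisfy $\mathcal P_k(\sigma_k)=\sigma_{k,k+1}$, but tracing out site $k+1$ after applying it is the identity map only in degenerate cases (essentially product $\sigma_{k,k+1}$); for $\sigma_{k,k+1}$ maximally entangled one computes $\mathcal P_k(X)=\Tr[X]\,\sigma_{k,k+1}$, a replacer channel. Consequently applying $\mathcal P_{k+1}$ at the next step does \emph{not} preserve the already-built pair marginal on $\{k,k+1\}$ (it only preserves the single-site marginal on $k+1$), so your claim (i) fails, and with it claim (ii): the QMC property of $\widetilde\sigma$ requires that $\widetilde\sigma_{[k]}$ actually be the marginal of $\widetilde\sigma_{[k+1]}$, which is exactly claim (i). The failure is not repairable by a cleverer construction: take $\sigma_{1,2}$ and $\sigma_{2,3}$ both maximally entangled; these are locally consistent ($\sigma_2=\id/2$ from either side) but by monogamy of entanglement admit no global extension whatsoever, so no state $\widetilde\sigma$ with these pair marginals exists. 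This also exposes that your argument never genuinely uses the hypothesis $I(A:C|B)_\rho=0$ (you invoke it only ``structurally''): if the construction worked for arbitrary feasible points, it would prove $\FMED=\F$ with $\ellshield=1$ for \emph{every} 1D 2-local Hamiltonian, which is much stronger than the theorem and not true.

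The paper's proof avoids extending the feasible point altogether. It first uses the Poulin--Hastings equivalence to convert the vanishing-CMI hypothesis on $\rho$ into a commuting structure of the Hamiltonian, computes the Gibbs marginals explicitly (e.g.\ $\rho_{AB}=\frac1{\mathcal Z}e^{-h_{AB}}\Tr_C[e^{-h_{BC}}]$), and then lower-bounds the MED functional at an \emph{arbitrary} locally consistent $\{\sigma_{AB},\sigma_{BC}\}$ by $\F$ using the inequality $D(\sigma_{AB}\|\rho_{AB})+D(\sigma_{BC}\|\rho_{BC})-D(\sigma_B\|\rho_B)\ge0$, a consequence of data processing; local consistency makes the boundary terms cancel and the partition function appears, giving $\F\le\FMED(\{\sigma\})$ pointwise without ever needing the $\sigma$'s to be extendible. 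If you want to keep a recovery-map flavour, the comparison must be made against the Gibbs marginals (not a reconstruction from the $\sigma$'s), which in effect is the relative-entropy argument above.
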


\begin{figure}[h]
    \centering
    \includegraphics[width=0.2\linewidth]{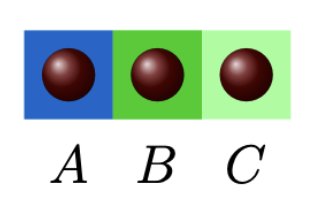}
    \caption{System $ABC$ in a chain.}
    \label{fig:chain4}
\end{figure}

\begin{proof}
In the following, we are making use of the equivalence of exact quantum Markov chains and Gibbs states of commuting Hamiltonian proven in \cite{Poulin2011}.
    Let us first assume for simplicity that we can split the chain in $3$ systems, $\Lambda = ABC$ as in Figure \ref{fig:chain4}, with all systems of the same size, and argue later for an $N$-partite system. We can thus write $H=H_\Lambda = h_{AB} + h_{BC}$, where $[h_{AB},h_{BC}]=0$. Let us recall that the free energy can be computed by
\begin{equation}\label{eq:free_energy}
\F=\min_{\sigma_\Lambda \, : \, \sigma_\Lambda \ge0,\Tr[\sigma_\Lambda]=1} \big( \, \Tr[ h_\Lambda \sigma_\Lambda]-S(\Lambda)_\sigma \, \big) \, ,
\end{equation}
where $S(\Lambda)_\sigma = - \Tr[\sigma_\Lambda \log \sigma_\Lambda]$ is the von Neumann entropy of $\sigma_\Lambda$. As mentioned in Eq. \eqref{eq:SSA}, we can relax the second part of the optimization problem in Eq. \eqref{eq:free_energy} to von Neumann entropies in smaller subsystems using strong subadditivity, and taking into account the decomposition of the Hamiltonian we can consider the following convex program:
    \begin{equation*}
\FMED=\min_{\{ \sigma_{AB} \, , \, \sigma_{BC} \} \text{ l.c.m.}} \big( \, \Tr[ h_{AB} \sigma_{AB}+ h_{BC} \sigma_{BC}]-( S(AB)_\sigma+S(BC)_\sigma - S(B)_\sigma ) \, \big) \, ,
\end{equation*}
    where the minimization is done over pairs of states $\sigma_{AB}  $ and $\sigma_{BC}$ in $AB$ and $BC$, respectively, such that $\Tr_A[\sigma_{AB}]=\Tr_C[\sigma_{BC}]$, i.e., with locally consistent marginals. 

    Next, note that for any two pairs of density matrices $\{\sigma_{AB},\sigma_{BC}\}$ and  $\{\eta_{AB},\eta_{BC}\}$, we have
        \begin{align}\label{eq:ansatz_relative_entropies}
            D\left(\sigma_{AB} \| \eta_{AB} \right) + D\left(\sigma_{BC} \| \eta_{BC}  \right) - D\left(\sigma_{B} \| \eta_{B}  \right) \ge 0 \, 
        \end{align}
 as a consequence of non-negativity of the relative entropy between two density matrices and data-processing inequality. This will be required later in the proof. 

 Consider $\rho_{ABC}$ the Gibbs state of $H$:
 \begin{align*}
            \rho_{ABC} &= \frac{1}{\mathcal Z}e^{-\left( h_{AB}+h_{BC} \right)}  
                = \frac{1}{\mathcal Z}e^{-h_{AB}}e^{-h_{BC}}
        \end{align*}
        where $\mathcal Z$ is a normalization factor. The marginal of $\rho_{ABC}$ on $AB$ is 
        \begin{align}
            \rho_{AB} &=\frac{1}{\mathcal Z} \Tr_C\big[e^{-h_{AB}}e^{-h_{BC}} \big] = \frac{1}{\mathcal Z} e^{-h_{AB}} \Tr_C \big[ e^{-h_{BC}} \big] \, ,\label{eq:partialtraceD}
        \end{align}
        and analogously 
        \begin{align}
            \rho_{BC}  = \frac{1}{\mathcal Z} \Tr_A \big[ e^{-h_{AB}} \big] e^{-h_{BC}} \, ,\label{eq:partialtraceA}
        \end{align}
        and
        \begin{align}
            \rho_{B}  = \frac{1}{\mathcal Z} \Tr_A \big[ e^{-h_{AB}} \big]  \Tr_C \big[ e^{-h_{BC}} \big]\, .\label{eq:partialtraceAD}
        \end{align}
        In particular, note that all the components of each of the RHS of Eq. \eqref{eq:partialtraceD}, Eq. \eqref{eq:partialtraceA} and Eq. \eqref{eq:partialtraceAD} commute.  The marginals of the Gibbs state are clearly locally compatible. Consider a
        pair of locally consistent marginals $\{\sigma_{AB},\sigma_{BC}\}$; then, we can compute the following relative entropies between them and the marginals of the Gibbs state $\rho_{ABC}$:
                \begin{align*}
        D\left(\sigma_{AB}\|\rho_{AB}\right) &= \log\mathcal Z - S(AB)_\sigma - \Tr\left[\sigma_{AB}\left( -h_{AB}+\log\Tr_C \big[ e^{-h_{BC}} \big] \right) \right], \\
            D\left(\sigma_{BC}\|\rho_{BC}\right) &= \log\mathcal Z - S(BC)_\sigma - \Tr\left[\sigma_{BC}\left( -h_{BC}+\log \Tr_A \big[ e^{-h_{AB}} \big] \right) \right], \\
            D\left(\sigma_{B}\|\rho_{B}\right) &= \log\mathcal Z - S(B)_\sigma - \Tr\left[\sigma_{B}\left(\log \Tr_A \big[ e^{-h_{AB}} \big]+\log \Tr_C \big[ e^{-h_{BC}} \big] \right) \right].
        \end{align*}
Replacing them into Eq. \eqref{eq:ansatz_relative_entropies} and using the fact that $\{\sigma_{AB}, \sigma_{BC}\}$ are locally consistent, we obtain
      \begin{align*}
            \log\mathcal Z - S(AB)_\sigma - S(BC)_\sigma + S(B)_\sigma + \Tr\left[\sigma_{AB} \, h_{AB}+ \sigma_{BC} \, h_{BC}\right] \ge 0.
        \end{align*}
        Using the fact that $\F = -\log \mathcal{Z}$, we get
        \begin{align*}
            \F \le \Tr\left[\sigma_{AB} \, h_{AB}+ \sigma_{BC} \, h_{BC}\right] - (S(AB)_\sigma + S(BC)_\sigma - S(B)_\sigma ),
        \end{align*}
        with equality if, only if, $  \{\sigma_{AB},\sigma_{BC}\} = \{\rho_{AB},\rho_{BC}\}$ (as this is the unique case in which equality in Eq. \eqref{eq:ansatz_relative_entropies} holds). Therefore, 
        \begin{equation*}
            \F= \FMED \, ,
        \end{equation*}
        and it is attained for $  \{\sigma_{AB},\sigma_{BC}\} = \{\rho_{AB},\rho_{BC}\}$.

\begin{figure}
    \centering
    \includegraphics[width=0.35\linewidth]{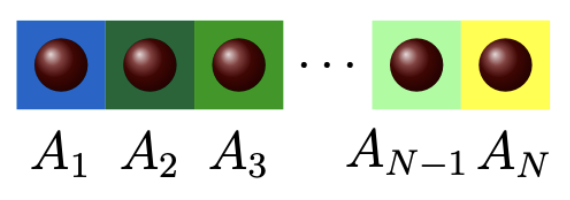}
    \caption{$N$-partite system $A_1 \ldots A_N$ in a chain.}
    \label{fig:chainN}
\end{figure}
        To extend this result to $N$-partite systems $\Lambda = A_1 \ldots A_N$ as in Figure \ref{fig:chainN}, the procedure is exactly analogous as before, where now  
        \begin{equation*}
        \FMED=\min_{\{ \sigma_{A_iA_{i+1}}  \} \text{ l.c.m.}} \left( \, \underset{i=1}{\overset{N-1}{\sum}} \Tr[ h_{A_iA_{i+1}} \sigma_{A_iA_{i+1}}]-\bigg( \underset{i=1}{\overset{N-1}{\sum}}S(A_{i}A_{i+1})_\sigma- \underset{i=2}{\overset{N-1}{\sum}} S(A_{i})_\sigma \bigg) \, \right) \, ,
        \end{equation*}
        and the Ansatz required to prove that $\F=\FMED$ (as a consequence of its equality conditions)  is 
\begin{equation*}
    \underset{i=1}{\overset{N-1}{\sum}} D(\sigma_{A_iA_{i+1}} \| \eta_{A_iA_{i+1}}) - \underset{i=2}{\overset{N-1}{\sum}} D(\sigma_{A_i} \| \eta_{A_i}) \geq 0 \, ,
\end{equation*}    
     evaluated in a set of locally consistent marginals  $\{ \sigma_{A_iA_{i+1}}  \}$, i.e., $\Tr_{A_{i}}[\sigma_{A_iA_{i+1}}] = \Tr_{A_{i+2}}[\sigma_{A_{i+1}A_{i+2}}]$ for $i=1, \ldots, N-2$, and $\eta_X$ the marginal $\rho_X$ of the Gibbs state for every $X$.  
\end{proof}

\begin{remark}
    As a consequence of Theorem \ref{thm:free_energy_algorithm_exact}, there exists an algorithm taking as input a description of the terms of a local, commuting Hamiltonian $h_X$ in 1D and outputting the free energy $\F$ 
and the marginals $\rho_A$ on subsets $A$ (of constant size) of the thermal state up to an error $\eps$
in time
\begin{equation*}
    \poly(N,\log(1/\eps)) \, ,
\end{equation*}
 This is due to the fact that the convex optimization stores $N$ density matrices of constant size, and thus, the above runtime can be derived by standard theorems for convergence of convex optimization (e.g.  \cite[Corollary 2]{Bravyi2022}). 
\end{remark}

\section{Convergence guarantees of the MED}
\label{sec:general_theory}

In this section we prove a convergence guarantee on the MED relaxation under the exponentially-decaying effective interaction assumption.
The precise statement of the latter assumption is given in Definition~\ref{def:heff}.
The main argument in this section can be found in Lemma~\ref{lem:MEDerror}, which bounds the approximation error of the MED in terms of its gradient and Lemma~\ref{lem:gradient}, which bounds the projection of this gradient into the constraint subspace of the optimization.
The following Theorem~\ref{thm:free_energy_algorithm_exact} and Corollary~\ref{cor:free_energy_algorithm} translate the result into concrete error and runtime bounds respectively.

\subsection{The exponentially-decaying effective interaction assumption}

For a finite lattice $\Lambda \subset \mathbb{Z}^D$, a local Hamiltonian $H \equiv H_\Lambda$ supported on $\Lambda$ and $A \subset \Lambda$, the main object of interest in the derivation of these conditions is the so-called effective interaction.
\begin{definition}\label{def:heffexact}

    Given a local Hamiltonian $H \equiv H_\Lambda$ in a finite lattice $\Lambda$, the effective interaction associated to it is given by 
    \begin{equation}\label{eq:heffexact}
    \heffexact{A}=-H_A-\log(\rho_A)-\log(\Tr\left[e^{-H}\right]) \, ,
\end{equation}
for any $A \subset \Lambda$. 
\end{definition}

We will make an assumption on the effective interaction closely related to properties like locality of temperature, decay of correlations and the decay of the conditional mutual information. 

\begin{definition}\label{def:heff}
We say that a local Hamiltonian on a finite lattice $\Lambda$ has an $(\eps,\ell)$-effective interaction if there exists a set of effective interactions with local decompositions
\[
\hefftot{A}=\sum_{X\subset \Lambda}\heff{A}{X} \quad\in\mathcal{B}(\mathcal H_A)
\]
for each $A \subset \Lambda$, where each $\heff{A}{X}\in\mathcal B(\mathcal H_{X\cap A})$ is such that
\begin{enumerate}[(i)]
\item $\left\|\hefftot{A} -\heffexact{A}\right\|\le\eps \, ;$
\item $\heff{A}{X}=0$ if $X \subset \Lambda$ with $\diam(X)\ge\ell $  $\, ;$
\item $\heff{A}{X}=0$ if $X \subset \Lambda$ and $\exists i\in X$ such that $i\notin A$ or $d(i,\bar A)\ge\ell \, ;$ 
\item $\heff{A}{X}=\heff{AB}{X}$ if $d(X,B)\ge\ell \,$.
\end{enumerate}
In the 1D case, we require the $\hefftot{A}$ only for $A$ connected.
\end{definition}
\begin{definition}\label{def:heffexp}
We say that a local Hamiltonian has an exponentially-decaying effective interaction if there are constants $C,\xi\ge0$ such that for each $\ell$, it has a $(|\Lambda|C\exp(-\ell/\xi
),\ell)$-effective interaction. In 1D we assume a stronger $(C\exp(-\ell/\xi
),\ell)$-effective interaction.
\end{definition}

As we will see in Section~\ref{sec:efficiency}, the existence of an exponentially-decaying effective interaction can be rigorously proven for one-dimensional systems.
Furthermore, the existence of an effective interaction in arbitrary dimension at high temperature was thought for some time to be proven in \cite{Kuwahara2020}, but there is a gap in the proof. 
However, under additional assumptions, namely a commuting Hamiltonian and commuting marginals of the Hamiltonian terms, the result can be recovered from \cite{bluhm2024} which we recall in Section~\ref{sec:efficiency}. We still expect that this result should hold in more generality, for instance, in the Toric code model in two dimensions at low but nonzero temperature~\cite{castelnovo2007entanglement}, but leave it as an open question.

We defer the main existence proofs to Section~\ref{sec:efficiency}, but present here the simple example of 1D commuting 2-local Hamiltonians, which shows that the general theory in this section recovers the result in Section~\ref{sec:exact_case}:
\begin{example}\label{example:commuting1D}

Let $\Lambda=ABC=[1,N]$ with subsequent intervals $ABC$, $B=[a,b]$ and a Hamiltonian defined by the terms $h_{\{i,i+1\}}$. We find
\[
\log(\rho_B)+\log(\Tr[e^{-H}])=\log(\Tr_A[e^{-H_A-h_{\{a-1,a\}}}])-H_B+\log(\Tr_C[e^{-H_C-h_{\{b,b+1\}}}])\, ,
\]
so a (0,2)-effective interaction is given by defining the only nonzero terms
\begin{align*}
    \heff{B}{\{a\}}&=-\log(\Tr_A[e^{-H_A-h_{\{a-1,a\}}}])\,,\\
    \heff{B}{\{b\}}&=-\log(\Tr_C[e^{-H_C-h_{\{b,b+1\}}}])\, .
\end{align*}
\end{example}

We also prove the following Lemma, which shows that decay of the effective interaction is a stronger condition than the decay of CMI (see \cite{Kuwahara2024,capel2024} for closely related statements). It is not part of the convergence proof of the MED, however, a good warmup exercise since the combinatorial proof idea will be similiar to that of Lemma~\ref{lem:gradient}.
The decay of CMI will be the sufficient condition for the rounding scheme that we present in Section~\ref{sec:rounding}.

We want to point out that despite several recent results on the decay of conditional mutual information, the following result remains a promising route to novel insights in this area by means of extending the results on the decay of the effective interaction. In comparison to other approaches \cite{Kuwahara2024,chen2025quantumgibbsstateslocally} it should be noted that our result does not suffer from exponential overheads in either $|A|$ or $|C|$, nor requires $ABC$ to coincide with $\Lambda$.

\begin{lemma}\label{lem:CMI_decay}
If a local Hamiltonian $H$ on $\Lambda$ has an exponentially-decaying effective interaction according to Definition \ref{def:heff}, then the Gibbs state $\rho$ of $H$ satisfies exponential decay of the conditional mutual information: there exist constants $D, \eta$ (that only depend on the locality of the Hamiltonian and on the parameters of the effective interaction) such that for any disjoint sets $A, B, C \subset \Lambda$, we have
\[
I(A:C|B)_{\rho} \le |\Lambda|D \exp(-d(A,C)/\eta)\, 
\]
In 1D we assume that $A$, $B$, $C$ are adjacent intervals with $B$ separating $A$ from $C$ and obtain a stronger decay result
\[
I(A:C|B)_{\rho} \le D \exp(-d(A,C)/\eta)\,.
\]
\end{lemma}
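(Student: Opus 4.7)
The plan is to rewrite the conditional mutual information as a trace involving the \emph{effective} interactions, and then use the support/cancellation properties (ii)--(iv) of Definition~\ref{def:heff} to show that most of the local contributions vanish once $A$ and $C$ are well-separated.

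First I would express the CMI via
\[
I(A:C|B)_{\rho}=\Tr\!\bigl[\rho_{ABC}\bigl(\log\rho_{ABC}-\log\rho_{AB}-\log\rho_{BC}+\log\rho_B\bigr)\bigr].
\]
Using $\log\rho_R=-H_R-\heffexact{R}-\log Z$ from Definition~\ref{def:heffexact}, the $\log Z$ contributions cancel pairwise and one is left with
\[
I(A:C|B)_{\rho}=-\Tr\!\bigl[\rho_{ABC}\,\Delta H\bigr]-\Tr\!\bigl[\rho_{ABC}\,\Delta\heffexact{}\bigr],
\]
where $\Delta H=H_{ABC}-H_{AB}-H_{BC}+H_B$ and $\Delta\heffexact{}=\heffexact{ABC}-\heffexact{AB}-\heffexact{BC}+\heffexact{B}$. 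Since each term in $\Delta H$ has support intersecting both $A$ and $C$, the $r$-locality of $H$ forces $\Delta H=0$ whenever $d(A,C)>r$. (In 1D the adjacent-interval hypothesis ensures $A$, $B$, $C$, $AB$, $BC$, $ABC$ are all connected, so every $\hefftot{\cdot}$ in sight is well-defined.)

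Next I would replace $\heffexact{}$ by the approximating local decomposition $\hefftot{}=\sum_X\heff{}{X}$, which by property (i) of Definition~\ref{def:heff} costs at most $4|\Lambda|C\exp(-\ell/\xi)$ in operator norm and, combined with $\|\rho_{ABC}\|_1=1$ via H\"older, in the trace. For the remaining sum
\[
\Delta\hefftot{}=\sum_{X\subset\Lambda}\!\bigl(\heff{ABC}{X}-\heff{AB}{X}-\heff{BC}{X}+\heff{B}{X}\bigr),
\]
I would argue term-by-term that each summand vanishes. By property (iv), if $d(X,C)\ge\ell$ then $\heff{ABC}{X}=\heff{AB}{X}$ and $\heff{BC}{X}=\heff{B}{X}$, so the combination is zero; symmetrically for $d(X,A)\ge\ell$. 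Hence only $X$ with $d(X,A)<\ell$ and $d(X,C)<\ell$ can contribute. But property (ii) forces $\diam(X)<\ell$ on any such surviving $X$, and by the triangle inequality this implies $d(A,C)<3\ell$.

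Choosing $\ell=\lfloor d(A,C)/3\rfloor$ therefore makes $\Delta\hefftot{}$ identically zero and leaves only the approximation error, giving
\[
I(A:C|B)_\rho \le 4|\Lambda|C\exp\!\bigl(-\lfloor d(A,C)/3\rfloor/\xi\bigr)\le |\Lambda|D\exp(-d(A,C)/\eta)
\]
with $\eta=3\xi$ and $D=4Ce^{1/\xi}$, which is the claim. The only delicate step is the case analysis for $\Delta\hefftot{}$: one must apply property (iv) simultaneously in both ``directions'' (adding $C$ and adding $A$) and verify that no subtle boundary terms survive once $d(A,C)$ exceeds $3\ell$. This is really the combinatorial core, and it is the same mechanism that will reappear in Lemma~\ref{lem:gradient}, so setting the bookkeeping up cleanly here pays off later.
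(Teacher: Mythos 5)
Your proposal is correct and follows essentially the same route as the paper: bound the CMI by the operator norm of $\log\rho_{ABC}-\log\rho_{AB}-\log\rho_{BC}+\log\rho_B$, cancel the Hamiltonian combination exactly for $d(A,C)>r$, replace $\heffexact{}$ by the local decomposition at cost $4\eps$, and kill every $\heff{\cdot}{X}$ combination via properties (ii)--(iv) with $\ell=\lfloor d(A,C)/3\rfloor$. The only cosmetic difference is how you organize the case split (ruling out $X$ close to both $A$ and $C$ versus the paper's split on $d(X,A)<\ell$ or $\ge\ell$), which is logically the same cancellation mechanism.
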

\begin{proof}
Without loss of generality we assume $d(A,C)>r$, i.e., there are no
 interaction terms with support intersecting $A$ and $C$\footnote{The case $d(A,C)\le r$ can be handled by simply choosing a sufficiently large constant $D$, since it can depend on $r$. }.
We consider an $(\eps,\ell)$-effective interaction for $\ell=\lfloor d(A,C)/3\rfloor$.
Let us first note that
\begin{align}
I(A:C|B)_{\rho} &=\Tr[\rho_{ABC}(\log\rho_{ABC}+\log\rho_B-\log\rho_{AB}-\log\rho_{BC})]\\
&\le\|\log\rho_{ABC}+\log\rho_B-\log\rho_{AB}-\log\rho_{BC}\|\\
&\le4\eps+\|H_{ABC}-H_{BC}-H_{AB}+H_B\|+\|\hefftot{ABC}-\hefftot{AB}-\hefftot{BC}+\hefftot{B}\|\label{eq:boundCMIFromOp}
\end{align}
We first prove $\|H_{ABC}-H_{BC}-H_{AB}+H_B\|=0$ by showing that each term $h_X$ is included in equally many of $H_{ABC}, H_B$ as $H_{AB}, H_{BC}$.
If $X\not\subset ABC$ it occurs in neither term.
Otherwise if $X\cap A\neq\emptyset$, $X\cap C=\emptyset$ and $h_X$ appears in $H_{ABC}$ and $H_{AB}$, analogously if $X\cap C\neq\emptyset$.
The remaining case is that $X\subset B$, which means $h_X$ appears in all four terms.

Secondly, we show $\|\hefftot{ABC}-\hefftot{AB}-\hefftot{BC}+\hefftot{B}\|=0$ using a similar argument, comparing and canceling for any $X\subset\Lambda$ all nonzero terms $\heff{ABC,AB,BC,B}{X}$. 
If $d(X,A)<\ell$, by (ii) $d(X,C)\ge d(A,C)-d(X,A)-\diam(X)\ge\ell$. Therefore, by (iv) $\heff{ABC}{X}=\heff{AB}{X}$ and $\heff{BC}{X}=\heff{B}{X}$ (while not all four terms need to be equal). If $d(X,A)\ge\ell$, by (iv) $\heff{ABC}{X}=\heff{BC}{X}$ and $\heff{AB}{X}=\heff{B}{X}$.

So by Eq.~\eqref{eq:boundCMIFromOp}, we have
\[
I(A:C|B)\le4\eps
\]
and since we can choose $\eps\le |\Lambda|C\exp(-\ell/\xi)\le |\Lambda|Ce^{1/\xi}\exp(-d(A,C)/3\xi)$ the Lemma follows.

The requirement in 1D reflects the additional assumption in the 1D case of connected regions and the stronger decay follows directly from the stronger decay assumption, i.e., $\eps\le C\exp(-\ell/\xi)\le Ce^{1/\xi}\exp(-d(A,C)/3\xi)$.
\end{proof}

\subsection{Main lemmas}

\newcommand{\<}{\left\langle}
\renewcommand{\>}{\right\rangle}

The strategy to prove an a priori bound on the MED relaxation is as follows. Since the functional $\FMED$ is convex and differentiable, we know that for any $\{\sigma_{S'_k}\}$ and $\{\rho_{S'_k}\}$ we have
\[
\begin{aligned}
\FMED(\{\sigma_{S'_k}\}) &\geq \FMED(\{\rho_{S'_k}\}) + \<\nabla\FMED(\{\rho_{S'_k}\}), \{ \sigma_{S'_k} - \rho_{S'_k} \}\>\\
&= \<\nabla\FMED(\{\rho_{S'_k}\}), \{\sigma_{S'_k}\}\>-1
\end{aligned}
\]
where the second line is a simple calculation, and follows e.g., from homogeneity of $\FMED$. We choose $\{\rho_{S'_k}\}$ to be the marginals of the true Gibbs state. Since $\{\sigma_{S'_k}\}$ are locally consistent, i.e., $\{\sigma_{S'_k}\} \in \ker(B)$ (where $B$ is the linear map defined in \eqref{eq:expression_B}), we then get that for any choice of $Z = \{Z_{S'_k}\} \in \ker(B)^{\perp}$
\begin{equation}
\label{eq:FMED-Fbound}
\FMED(\{\sigma_{S'_k}\}) - \F \geq \log \Tr e^{-H} + \<\nabla \FMED(\{\rho_{S'_k}\}) + Z, \{\sigma_{S'_k}\} \>-1
\end{equation}
where $\F = -\log \Tr e^{-H}$ is the free energy. The key lemma (Lemma \ref{lem:gradient}) is to show that under the exponentially decaying effective interaction assumption, one can choose $Z \in \ker(B)^{\perp}$ that makes the right-hand side of \eqref{eq:FMED-Fbound} small.

We summarize the above strategy in Lemma \ref{lem:MEDerror} below. We will make use of the following expression for the gradient of $\FMED$:

\begin{equation}
\begin{aligned}
   \left(\nabla \FMED(\{\sigma_{S_k'}\})\right)_{S_0'}&=h_{i}+\log(\sigma_{S_i'})+\id_{1}\\
   \left(\nabla \FMED(\{\sigma_{S_k'}\})\right)_{S_i'}&=h_{i}+\log(\sigma_{S_i'})-\log(\sigma_{S_i})\otimes \id_{i+1} \, . \label{eq:gradient-MED}
   \end{aligned}
\end{equation}
where the index of $\nabla\FMED$ refers to the partial derivative with respect to $\sigma_{S_i'}$.

\begin{lemma}\label{lem:MEDerror}
The following inequalities hold:
\begin{equation}\label{eq:errFDelta}
\FMED\le \F\le \FMED+\delta \, ,
\end{equation}
and
\begin{equation}\label{eq:errDivDelta}
\sum_{k=0}^{N-1} D(\sigma_{S_k'}\| \rho_{S_k'})-D(\sigma_{S_k} \| \rho_{S_k})\le\delta \, ,
\end{equation}
where
\begin{equation*}
\delta =-\log(\Tr\left[e^{-H}\right])-\left\langle\nabla\FMED(\{\rho_{S_k'}\}),\{\sigma_{S_k'}\}\right\rangle+1
\end{equation*}
with the marginals of the thermal state $\rho_{S_k'}$ and the minimizers of the MED $\sigma_{S_k'}$ respectively.
\end{lemma}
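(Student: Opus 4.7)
The plan is to reduce both inequalities to a single algebraic identity obtained by expanding $\langle \nabla\FMED(\{\rho_{S_k'}\}), \{\sigma_{S_k'}\}\rangle$ via the gradient formula \eqref{eq:gradient-MED}, followed by one application of the data processing inequality. The lower bound $\FMED \le F$ is already established in \eqref{eq:FMED_bounded_F}, so only the upper inequality on $F - \FMED$ and the divergence inequality \eqref{eq:errDivDelta} require new work.

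The central step is to compute $\FMED(\{\sigma_{S_k'}\}) - \langle \nabla\FMED(\{\rho_{S_k'}\}), \{\sigma_{S_k'}\}\rangle$ termwise. The Hamiltonian contributions $\Tr[h_k \sigma_{S_k'}]$ cancel between the two expressions. Expanding the conditional entropies as $S(k+1|S_k)_\sigma = S(\sigma_{S_k'}) - S(\sigma_{S_k})$ (with the convention $S_0 = \emptyset$, so that the $k=0$ summand fits the same pattern with $D(\sigma_{S_0}\|\rho_{S_0})=0$), and tracking the additive constant arising from the $\id_1$ piece of the $k=0$ gradient, the remaining terms reorganize into relative entropies to give the identity
\begin{equation*}
\FMED(\{\sigma_{S_k'}\}) - \langle \nabla\FMED(\{\rho_{S_k'}\}), \{\sigma_{S_k'}\}\rangle + 1 = \sum_{k=0}^{N-1}\bigl(D(\sigma_{S_k'}\|\rho_{S_k'}) - D(\sigma_{S_k}\|\rho_{S_k})\bigr).
\end{equation*}

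Using $F = -\log\Tr[e^{-H}]$ together with the definition of $\delta$, the left-hand side equals $\FMED - F + \delta$, where $\FMED = \FMED(\{\sigma_{S_k'}\})$ at the minimizer. The right-hand side is non-negative termwise: since $\sigma_{S_k}$ and $\rho_{S_k}$ are the partial traces of $\sigma_{S_k'}$ and $\rho_{S_k'}$ respectively, the data processing inequality for the quantum relative entropy under $\tr_{k+1}$ yields $D(\sigma_{S_k'}\|\rho_{S_k'}) \ge D(\sigma_{S_k}\|\rho_{S_k})$. Combining with the identity immediately gives $F - \FMED \le \delta$, which is \eqref{eq:errFDelta}; inserting the already-known inequality $\FMED \le F$ on the left-hand side of the identity gives $\sum_{k} \bigl(D(\sigma_{S_k'}\|\rho_{S_k'}) - D(\sigma_{S_k}\|\rho_{S_k})\bigr) \le \delta$, which is \eqref{eq:errDivDelta}.

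There is no genuine analytic obstacle; the lemma is essentially a convexity/linearization statement in disguise. The only point requiring care is the bookkeeping of the additive constants coming from the $\id_1$ term and the boundary conventions at $k=0$, so that the $S_0=\emptyset$ summand lines up with the general pattern and the constant $+1$ in the definition of $\delta$ matches the one produced by the computation.
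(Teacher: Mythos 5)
Your proposal is correct and follows essentially the same route as the paper: the paper likewise rewrites $\FMED(\{\sigma_{S_k'}\})$ as the telescoping sum of relative-entropy differences minus $\delta$ minus $\log\Tr[e^{-H}]$ (which is exactly your identity, since the paper's explicit formula for $\delta$ is just the gradient pairing with the $\id_1$ constant absorbed), then invokes data processing for nonnegativity and combines with $\FMED\le\F$ to get both inequalities. Your bookkeeping of the $+1$ and the $S_0=\emptyset$ convention matches the paper's treatment.
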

\begin{proof}
First, observe that the first inequality in Eq. \eqref{eq:errFDelta} was proven in Eq. \eqref{eq:FMED_bounded_F}. 
For the second one, let us rewrite the objective function in the following way:
\begin{equation*}
    \begin{aligned}
        \FMED(\{\sigma_{S_k'}\})&=\sum_{k=0}^{N-1} \text{Tr}(\sigma_{S_k'} h_k) - \sum_{k=0}^{N-1} S(k+1|S_k)_{\sigma} \\
      & = \underbrace{\left(\sum_{k=0}^{N-1} \text{Tr}[\sigma_{S_k} \log \rho_{S_k}] + S(\sigma_{S_k})\right) -\left(\sum_{k=0}^{N-1} \text{Tr}[\sigma_{S_k'} \log \rho_{S_k'}] + S(\sigma_{S_k'})\right)}_{\text{relative entropy}}\\
      &\phantom{==}- \underbrace{\delta}_{\text{error}}  \underbrace{-\log \text{Tr}[e^{-H}]}_{\text{free energy}}\\
      &\ge \F-\delta,
    \end{aligned}
\end{equation*}
where the error term can be written as 
\begin{equation*}
    \delta =-\log(\Tr\left[e^{-H}\right]) -\sum_{k=0}^{N-1} \text{Tr}(\sigma_{S_k'} (h_{k} + \log \rho_{S_k'} - \log \rho_{S_k})).
\end{equation*}
The inequality follows because the relative entropy term can be rewritten as 
\begin{equation}\label{eq:errFDivDelta}
\begin{aligned}
    \left(\sum_{k=0}^{N-1} \text{Tr}[\sigma_{S_k} \log \rho_{S_k}] + S(\sigma_{S_k})\right) -\left(\sum_{k=0}^{N-1} \text{Tr}[\sigma_{S_k'} \log \rho_{S_k'}] + S(\sigma_{S_k'})\right) \\
    = \left(\sum_{k=0}^{N-1} D(\sigma_{S_k'}\| \rho_{S_k'})-D(\sigma_{S_k} \| \rho_{S_k}) \right),
\end{aligned}
\end{equation}
justifying its name. By the data processing inequality, the relative entropy term must be nonnegative. This shows the upper bound in~\eqref{eq:errFDelta} on the free energy.
To prove~\eqref{eq:errDivDelta} we rearrange Eq.~\eqref{eq:errFDivDelta} and use Eq.~\eqref{eq:FMED_bounded_F}
\begin{align*}
\sum_{k=1}^{N-1} D(\sigma_{S_k'}\|\rho_{S_k'})-D(\sigma_{S_k}\|\rho_{S_k})= \FMED(\{\sigma_{S_k'}\})-\F+\delta\le\delta \, .
\end{align*}
\end{proof}

Since the minimizer of the MED satisfies local consistency, i.e., $B\left(\{\sigma_{S'_k}\}\right) = 0$ the inequality of the previous lemma still holds with
\[
\delta = -\log\left(\Tr[e^{-H}]\right) - \< \nabla \FMED(\{\rho_{S'_k}\}) + Z, \{\sigma_{S'_k}\}\>+1
\]
where $Z = \{Z_{S'_k}\} \in \ker(B)^{\perp} = \im(B^*)$.
Note that for two indices $j,k\in\{0,\ldots,N-1\}$, any vector of the form
\begin{equation}\label{eq:kerConstruction}
Z_{S_j'}=X\otimes\id\qquad Z_{S_k'}=-X\otimes\id\qquad \textrm{ for }X\textrm{ supported on }S_j'\cap S_k'
\end{equation}
with all other elements equal to zero fulfills $Z\in\ker(B)^\perp$. The following key lemma shows that the existence of an effective interaction allows us to construct such a vector $Z$ that approximately cancels the gradient $\nabla \FMED(\{\rho_{S'_i}\})$.

\begin{lemma}\label{lem:gradient}
Given an $(\eps,\ell)$-effective interaction, there exists $Z\in\ker(B)^\perp$, such that
for $\ellshield=5\ell$ 
\begin{align}
\label{eq:propertyZ k=0}
\left\|(\nabla \FMED(\{\rho_{S_k'}\}))_{S'_0}+Z_{S_0'} + \log(\Tr\left[e^{-H}\right])-\id\right\|\le2\eps \, ,
\end{align}
and for $i = 1, \ldots, N-1$
\begin{align}
\label{eq:propertyZ k > 0}
\left\|(\nabla \FMED(\{\rho_{S_k'}\}))_{S'_i}+Z_{S_i'}\right\|\le2\eps \, ,
\end{align}
where $\rho_{S_k'}$ are the marginals of the thermal state.
\end{lemma}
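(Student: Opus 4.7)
The plan is to rewrite $\nabla\FMED(\{\rho_{S_k'}\})$ as a difference of exact effective interactions, then replace these by their local decompositions and construct $Z \in \ker(B)^\perp$ matching the resulting local targets.

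The condition $\ellshield \ge r$ forces every Hamiltonian term $h_X$ with $k+1 \in X$ into $S_k'$, so $h_k = H_{S_k'} - H_{S_k}$. Combined with $\log\rho_A = -H_A - \heffexact{A} - \log\Tr[e^{-H}]$ and the convention $S_0 = \emptyset$, $h_0 = H_{S_0'}$, the gradient formula \eqref{eq:gradient-MED} yields
\begin{align*}
(\nabla\FMED(\{\rho_{S_j'}\}))_{S_0'} &= -\heffexact{S_0'} - \log\Tr[e^{-H}] + \id_1,\\
(\nabla\FMED(\{\rho_{S_j'}\}))_{S_k'} &= -\heffexact{S_k'} + \heffexact{S_k}\otimes\id_{k+1}\qquad(k\ge1).
\end{align*}
The inequalities \eqref{eq:propertyZ k=0} and \eqref{eq:propertyZ k > 0} thus reduce to $Z_{S_0'}\approx\heffexact{S_0'}$ and $Z_{S_k'}\approx\heffexact{S_k'}-\heffexact{S_k}\otimes\id$ in operator norm $2\eps$. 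Two applications of property (i) replace each exact $\heffexact{A}$ by its local approximation $\hefftot{A}=\sum_X\heff{A}{X}$ at total cost $2\eps$ per coordinate, so it suffices to exhibit $Z\in\ker(B)^\perp$ matching the approximate targets $Z_{S_0'}=\hefftot{S_0'}$ and $Z_{S_k'}=\hefftot{S_k'}-\hefftot{S_k}\otimes\id$ \emph{exactly}.

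Properties (ii)--(iv) then deliver the key structural reductions. By (iv), $\heff{S_k'}{X} = \heff{S_k}{X}$ whenever $d(X,\{k+1\})\ge\ell$, so such atomic terms cancel in the difference, and only $X$ with $d(X,\{k+1\})<\ell$ survive; the target is thus a local operator supported within distance $\ell$ of site $k+1$. The choice $\ellshield=5\ell$ guarantees that any such $X$ lies at distance at least $3\ell$ from the leftmost site of $S_{k-1}'$, so a second application of (iv) identifies $\heff{S_k}{X}=\heff{S_{k-1}'}{X}$, while (ii) eliminates any residual $X\subset S_{k-1}'$ of diameter $\ge\ell$. The target rewrites as
\[
\hefftot{S_k'}-\hefftot{S_k}\otimes\id=\sum_{X\,:\,d(X,\{k+1\})<\ell}\bigl(\heff{S_k'}{X}-\heff{S_{k-1}'}{X}\bigr),
\]
which I would take as the definition of $Z_{S_k'}$ for $k\ge1$, with $Z_{S_0'}=\hefftot{S_0'}$. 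Each atomic operator $\heff{S_k'}{X}$ is then placed with coefficient $+1$ in $Z_{S_k'}$ and appears with coefficient $-1$ in $Z_{S_{k+1}'}$ (via the $-\heff{S_{(k+1)-1}'}{X}=-\heff{S_k'}{X}$ contribution there), producing the pair-cancellation structure of \eqref{eq:kerConstruction} that characterizes $\ker(B)^\perp$.

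The main obstacle is the fine bookkeeping of these atomic cancellations: the filter sets $\{X:d(X,\{k+1\})<\ell\}$ and $\{X:d(X,\{k+2\})<\ell\}$ differ by a thin shell at their outer boundaries, leaving a small number of unpaired atoms per coordinate in the naive construction. Resolving this requires either tapering the filter weights across consecutive indices or invoking (iv) once more to absorb the boundary atoms into a canonical form; the $5\ell$ buffer allocates roughly $\ell$ to each of Hamiltonian locality, the (ii) diameter bound, the two (iv)-based identifications (right and left), and the (iii) boundary-support constraint, leaving comfortable room for either fix. The initial $k=0$ boundary is absorbed into the explicit $\log\Tr[e^{-H}]-\id_1$ constant of \eqref{eq:propertyZ k=0}, and the terminal $k=N-1$ boundary is handled symmetrically, so that the construction ultimately produces an exact $Z\in\ker(B)^\perp$ matching the targets within the required $2\eps$ coordinate-wise norm bound.
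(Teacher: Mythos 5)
Your first half is sound and essentially matches the paper: using $h_i=H_{S_i'}-H_{S_i}$ and $\log\rho_A=-H_A-\heffexact{A}-\log\Tr[e^{-H}]$ to write $(\nabla\FMED)_{S_i'}=-\heffexact{S_i'}+\heffexact{S_i}\otimes\id$, then invoking property (i) twice, reduces the lemma to exhibiting $Z\in\ker(B)^\perp$ with $Z_{S_i'}=\hefftot{S_i'}-\hefftot{S_i}\otimes\id$ (the paper reaches the same target through the term-by-term identity \eqref{eq:difference-Heff} with $\hefftot{[i+1]}-\hefftot{[i]}$). The gap is in the second half, which is where the actual work of the lemma lies.

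Two concrete problems. First, your identification $\heff{S_k}{X}=\heff{S_{k-1}'}{X}$ rests on property (iv), but (iv) only compares a region $A$ with a superset $AB$; $S_k$ and $S_{k-1}'$ are not nested in general, and outside 1D the ordering is arbitrary, so sites $k$ and $k+1$ need not be geometrically close — the "leftmost site at distance $3\ell$" picture is 1D-specific and the identification (and the support condition $X\cap S_k'\subset S_k'\cap S_{k+1}'$ needed for the pairing \eqref{eq:kerConstruction}) can simply fail. Second, and more fundamentally, your pairing of $+\heff{S_k'}{X}$ in coordinate $k$ with $-\heff{S_k'}{X}$ in coordinate $k+1$ only works for those $X$ lying within distance $\ell$ of \emph{both} $k+1$ and $k+2$; in higher dimensions these filter sets can be disjoint, and even in 1D the unpaired atoms are not a negligible boundary effect that "comfortable room" in $5\ell$ absorbs. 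Handling them is precisely the content of the paper's proof: each leftover term is forwarded to the \emph{last} index $j$ such that $X\cap[k+1]\subset S_j'$, the invariant \eqref{eq:iterativeCond} is maintained across iterations, and a case analysis (cases 1.1, 1.2 and 2, plus the terminal step $k=N-1$ where one takes $\hefftot{[N]}=0$) using properties (ii)--(iv) and the choice $\ellshield=5\ell$ shows such an index always exists with the operator supported in the right intersection. Your proposal names this obstacle but resolves it only by the vague suggestion of "tapering filter weights" or "absorbing boundary atoms", which does not establish $Z\in\ker(B)^\perp$; without that construction the lemma is not proved.
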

\begin{proof}
The proof will appeal to properties (i)-(iv) of the effective interaction $\hefftot{A}$ in Definition~\ref{def:heff} for $A$ being one of $[i+1]$, $[i]$, $S_i'$, or $S_i$.
For ease of notation, we will also write $\nabla \FMED$ for $\nabla \FMED(\{\rho_{S'_k}\})$ in the remainder of the proof. 

We first show that each component of the gradient can be approximated by a difference of effective interactions. 
We will use the equality
\begin{align}\label{eq:difference-Heff}
\hefftot{[i+1]}-\hefftot{[i]} = \hefftot{S'_{i}}-\hefftot{S_i}\, .
\end{align}
This can be verified by comparing the two sides term by term, i.e., showing $\heff{[i+1]}{X}-\heff{[i]}{X} = \heff{S'_{i}}{X}-\heff{S_i}{X}$.
If $d(X,\{i+1\})<\ell$, then $d(X,[i+1]\setminus S_i')\ge\ell$ by property (ii) and the triangle inequality. Then choosing $B=[i+1]\setminus S_i'=[i]\setminus S_i$, property (iv) implies $\heff{[i+1]}{X} = \heff{S'_{i}}{X}$ and $\heff{[i]}{X} = \heff{S_{i}}{X}$ respectively.
If $d(X,\{i+1\})\ge\ell$, choosing $B=\{i+1\}$ in (iv) ensures $\heff{[i+1]}{X}=\heff{[i]}{X}$ and $\heff{S'_{i}}{X}=\heff{S_i}{X}$.

Using Eq.\ \eqref{eq:difference-Heff} and recalling the expression for $\nabla \FMED$ in Eq.\
\eqref{eq:gradient-MED}, we obtain
\begin{align}
\|(\nabla \FMED)_{S_i'} & -(-\hefftot{[i+1]}+\hefftot{[i]})\| \nonumber \\
&=\|h_i+\log(\rho_{S_i'})-\log(\rho_{S_i})-(-\hefftot{[i+1]}+\hefftot{[i]})\| \nonumber \\
&=\left\|\log(\rho_{S_i'})-\log(\rho_{S_i})-(-h_i-\hefftot{S_i'}+\hefftot{S_i})\right\| . \label{eq:gradMED-diff}
\end{align}
Now, note that $h_{i} = H_{[i+1]} - H_{[i]} = \sum_{X \subseteq [i+1] : (i+1) \in X} h_{X} = H_{S'_i} - H_{S_i}$ because $\ellshield\ge r$. As a result, 
\begin{align}
\|h_i &+ \log(\rho_{S_i'}) - \log(\rho_{S_i}) -(-\hefftot{[i+1]}+\hefftot{[i]}) \| \nonumber \\
&\le\left\|\log(\rho_{S_i'})+H_{S_i'}+\hefftot{S_i'}+\log(\Tr[e^{-H}])\right\|+\left\|-\log(\rho_{S_i})-H_{S_i}-\hefftot{S_i}-\log(\Tr[e^{-H}])\right\| \nonumber \\
&\le2\eps\,, \label{eq:diff-Heff-S-eps}
\end{align}
where we used property (i) in the last step. Then Eqs.\ \eqref{eq:gradMED-diff} and \eqref{eq:diff-Heff-S-eps} imply
\begin{align}
\|(\nabla &\FMED)_{S_i'}-(-\hefftot{[i+1]}+\hefftot{[i]}) \| \le 2\eps .
\end{align}
Separately, for the $S_0'$-component
\begin{align*}
\|(\nabla\FMED)_{S_0'}&+\log(\Tr[e^{-H}])-\id+\hefftot{[1]}\|\\
&=\|h_0+\log(\rho_{[1]})+\log(\Tr[e^{-H}])+\hefftot{[1]}\|\le\eps\,.
\end{align*}
Therefore, we are done if we can construct $Z$ such that
\begin{equation}\label{eq:iterativeCondPrev}
Z_{S_i'} = \hefftot{[i+1]}-\hefftot{[i]}\,.
\end{equation}
Towards this end, we will construct a sequence of vectors $Z_k$, $k=0,...,N-1$, where $Z_{0}$ will satisfy Eq.\ \eqref{eq:propertyZ k=0}, and $Z_{k}$ will satisfy~Eq.\ \eqref{eq:propertyZ k=0} and~Eq.\ \eqref{eq:propertyZ k > 0} for any $i \leq k$. Then, $Z:=Z_{N-1}$ will satisfy all the desired conditions.

Let us construct $Z_0$ first. Note that for $i=0$ in Eq.~\eqref{eq:diff-Heff-S-eps}, we have
\[
\|h_0+\log(\rho_{[1]})+\log(\Tr[e^{-H}]) + \hefftot{[1]}\|\le\eps\, . 
\]
Thus, to approximately cancel the first term of $\nabla \FMED$, we need to add the following to the first element of the vector
\begin{align*}
(Z_0)_{S_0'}&=\hefftot{[1]}\, .
\end{align*}
We also need to subtract the above from a subsequent elements which we choose as
\begin{align*}
(Z_0)_{S_j'}&=-\sum_{X\cap\{1\}=\{1\}}\heff{[1]}{X}\\
(Z_0)_{S_{N-1}'}&=-\sum_{X\cap\{1\}=\emptyset}\heff{[1]}{X}\,
\end{align*}
for some $j > 0$, so that we satisfy Eq.~\eqref{eq:kerConstruction} and thereby ensure  $Z_0\in\ker(B)^\perp$.
Here we make the choice that $j$ is defined to be the \emph{last possible} index such that the term is included in the respective support, i.e., $j$ is such that $\{1\}\subset S_j'$ and $ (\forall l > j : \{1\}\not\subset S_l')$. All the remaining elements of $Z_0$ are set to equal $0$.

We continue the construction in an analogous fashion for $k>0$. 
For every $k$, we need to add in the $\hefftot{[k+1]}-\hefftot{[k]}$ which approximately cancels the $S'_k$-th component of $\nabla \FMED$. Moreover, we have to compensate for terms that have been added in a previous step.
We will confirm later that these terms are given by
\begin{equation}\label{eq:iterativeCond}
(Z_k)_{S_i'}=-\sum_{X : X\cap[k+1]\subset S_i' \text{ and } \forall j > i:X\cap[k+1]\not\subset S_j'}\heff{[k+1]}{X} \qquad \forall i>k\, .
\end{equation}
Assuming Eq.~\eqref{eq:iterativeCond} holds from the previous iteration for $Z_{k-1}$, we add the following to the $S_k'$ element of the vector $Z_k-Z_{k-1}$:
\begin{align}\label{eq:ZAddition}
(Z_k-Z_{k-1})_{S_k'}&=\hefftot{[k+1]}-\hefftot{[k]}+\sum_{X : X\cap[k]\subset S_k' \text{ and } (\forall j > k : X\cap[k]\not\subset S_j')}\heff{[k]}{X}\,,
\end{align}
which by Eq.~\eqref{eq:iterativeCond} gives
\[
(Z_k)_{S_k'}=\hefftot{[k+1]}-\hefftot{[k]}
\]
as desired. Before defining the subsequent elements of $Z_k-Z_{k-1}$ for the cancellation, let us note that the terms in Eq.~\eqref{eq:ZAddition} are supported in $S_k'$. The last term is trivially satisfied by definition. The argument for the remaining terms goes as follows. By property (iv), $\heff{[k+1]}{X}\neq\heff{[k]}{X}$ only if $d(X,\{k+1\})<\ell$.
Combined with (ii), we conclude that any nonzero terms $\heff{[k+1]}{X}$, $\heff{[k]}{X}$ must be for subsystem $X$ that satisfies $X\subset B_{2\ell}(\{k+1\})$ and so their supports satisfy $X\cap[k+1],X 
\cap[k] \subseteq B_{2\ell}(\{k+1\})\cap[k+1]\subseteq S_k'$.

By the construction in Eq.~\eqref{eq:kerConstruction}, each of the terms in Eq.~\eqref{eq:ZAddition} needs to be subtracted from the other subsequent entries of $Z_k-Z_{k-1}$.
In other words, we have to add the following terms to the remaining entries of $Z_k-Z_{k-1}$:
\begin{align*}
-\sum_{X} \heff{[k+1]}{X}+\sum_{X}\heff{[k]}{X}-\sum_{X : X\subset S_k'\text{ and }(\forall j>k: X\not\subset S_j')}\heff{[k]}{X} \, .
\end{align*}
We treat each $X$ separately (recalling the linearity of $\ker(B)^\perp$). They can be classified into the following cases.
\begin{itemize}
    \item Case 1: $d({k+1},\overline{[k+1]})\ge3\ell$, i.e., the site $k+1$ is surrounded by only previous sites allowing us to distinguish between $X$ close and far from the new site.
    \begin{itemize}
        \item Case 1.1: $X\subset B_\ell(\{k+1\})$. 

        Then $d(X,\overline{[k+1]})\ge d(k+1,\overline{[k+1]})-\ell\ge2\ell$ so by (iii), $\heff{[k+1]}{X}=0$. 
        The other two terms cancel if the condition in the sum holds. If it does not hold, we add $\heff{[k]}{X}$ to $(Z_k-Z_{k-1})_{S_j'}$ for the largest $j>k$ such that $X\cap[k]\subset S_j'$, which exists by the fact that the support condition is violated and $X\cap[k]\subset S_k'$.
        
        \item Case 1.2: $X\cap B_\ell(\{k+1\})=\emptyset$.

        Then also $d(X,\overline{[k+1]})\le \ell$ by (iii) and so there is $j>k+1$, $a\in X$ with $d(a,j)\le \ell$. Therefore, using (ii) $d(a',j)\le 2\ell$ $\forall a'\in X$ and so $X\cap[k+1]\subset S_{j-1}'$. Therefore, the condition in the sum does not hold and we add $\heff{[k]}{X}-\heff{[k+1]}{X}$ to $(Z_k-Z_{k-1})_{S_{j-1}'}$ for the largest such $j$. Note that this updates the terms in the last sum in the iterative condition~Eq.~\eqref{eq:iterativeCond}.
        \item There is no other case because by (iii) $X$ is disjoint from $B_{2\ell}(\{k+1\})\setminus B_\ell(\{k+1\})$ and by (ii) and a triangle inequality $X$ can not include sites more than $2\ell$ and less than $\ell$ sites from $k+1$ simultaneously.
    \end{itemize}

\item Case 2: If case 1 does not apply, then $X\cap[k+1]\subset S_j'$ for some $j>k$.
This can be seen as follows: By (iii) $\exists j\in \overline{[k]}$, $l\in X$, such that $d(j,l)\le\ell$ and by the case distinction there is $j'\in\overline{[k+1]}$, such that $d(j',j)\le3\ell$.
Now since $\diam(X)\le\ell$ by triangle inequality,  $\forall l'\in X$, we have $d(l', j')\le5\ell$ and so $X\cap[k+1]\subset B_{5\ell}(j')\cap[j']\subset S_{j'-1}'$.
Therefore, the second part of the condition in the sum is not fulfilled and we are only left with adding $-\heff{[k+1]}{X}$, $\heff{[k]}{X}$ to $(Z_k-Z_{k-1})_{S_j'}$ for the largest $j$ such that $X\cap[k+1]\subset S_j'$ or $X\cap[k]\subset S_j'$ respectively. Again this updates the terms in Eq.\ \eqref{eq:iterativeCond}.
\end{itemize}
It only remains to observe that as the cancellation does not affect the components $i < k$, we keep for $i < k$, $(Z_k)_{S'_i} = \hefftot{[i+1]} - \hefftot{[i]}$. This shows that $Z_k$ satisfies all the desired properties, namely Eqs.~\eqref{eq:iterativeCondPrev} and~\eqref{eq:iterativeCond}.

Note that for the last term $Z_{N-1}-Z_{N-2}$ no subtraction from subsequent terms is possible but the terms already cancel exactly since we can assume wlog that $\hefftot{[N]}=0$ and we are always in case 1.1.

To conclude the proof we choose $Z=Z_{N-1}$.
\end{proof}

\subsection{Main theorem}

With this we have the ingredients to prove the following theorem on the approximation error of the MED relaxation. 
\begin{theorem}\label{thm:errorEff}
Given an $(\eps,\ell)$-effective interaction and for $\ell_S=5\ell$ we have
\[
|\FMED-\F|\le2N\eps 
\]
Furthermore, in 1D, the marginals $\sigma_{S'_j}$ returned by the MED are close to the marginals $\rho_{S'_j}$ of the Gibbs state:
\begin{equation*}
\|\sigma_{S_j'}-\rho_{S_j'}\|_1\le\sqrt{4N\eps}
\end{equation*}
for all $j$. Moreover, beyond the 1D setting, we always have that $\| \sigma_{1} - \rho_{1} \|_{1} \leq \sqrt{4N\eps}$. For an order such that $[j+1]=B_{\ell_S/2}(\{i\})$ for some site $i$, we have
\begin{equation*}
\|\sigma_{S_j'}-\rho_{S_j'}\|_1\le\sqrt{4N\eps}.
\end{equation*}
\end{theorem}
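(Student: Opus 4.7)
The plan is to combine the two inequalities from Lemma~\ref{lem:MEDerror} with the cancellation supplied by Lemma~\ref{lem:gradient}. Since the MED minimizer $\{\sigma_{S_k'}\}$ satisfies $B(\{\sigma_{S_k'}\})=0$, we have $\langle Z,\{\sigma_{S_k'}\}\rangle=0$ for every $Z\in\ker(B)^\perp$. I therefore rewrite the error as
\[
\delta=-\log\Tr[e^{-H}]-\langle \nabla\FMED(\{\rho_{S_k'}\})+Z,\{\sigma_{S_k'}\}\rangle+1,
\]
and pick $Z$ to be the vector produced by Lemma~\ref{lem:gradient}, so that the $k=0$ component of $\nabla\FMED+Z$ is within operator-norm $2\eps$ of $-\log\Tr[e^{-H}]\,\id+\id$, and each subsequent component is within $2\eps$ of $0$.

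I then evaluate the inner product term by term via H\"older's inequality $|\Tr[X\sigma]|\le\|X\|$. For $k=0$ the scalar contributions $-\log\Tr[e^{-H}]+1$ cancel the explicit shifts inside the norm bound, leaving a residual of size at most $2\eps$; each of the remaining $N-1$ terms contributes at most $2\eps$. Summing gives $|\delta|\le 2N\eps$, and combined with Eq.~\eqref{eq:errFDelta} this yields the first claim $|\FMED-\F|\le 2N\eps$.

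For the marginal bounds I use the companion inequality~\eqref{eq:errDivDelta},
\[
\sum_{k=0}^{N-1}\bigl[D(\sigma_{S_k'}\|\rho_{S_k'})-D(\sigma_{S_k}\|\rho_{S_k})\bigr]\le\delta,
\]
every summand being nonnegative by data processing. In 1D with the consecutive ordering one checks $S_k\subset S_{k-1}'$, so local consistency identifies $\sigma_{S_k}$ with the partial trace of $\sigma_{S_{k-1}'}$, and DPI gives $D(\sigma_{S_k}\|\rho_{S_k})\le D(\sigma_{S_{k-1}'}\|\rho_{S_{k-1}'})$. Telescoping along this chain (using $D(\sigma_{S_0}\|\rho_{S_0})=0$) collapses the sum to $D(\sigma_{S_j'}\|\rho_{S_j'})\le\delta\le 2N\eps$, and quantum Pinsker finishes. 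In higher dimensions the $k=0$ term alone gives $D(\sigma_{S_0'}\|\rho_{S_0'})\le\delta$, handling the single-site bound. For the ball case, note that if $[j+1]=B_{\ell_S/2}(\{i\})$, then any two sites of $[j+1]$ are within distance $\ell_S$ by the triangle inequality, so $[k+1]\subset B_{\ell_S}(\{k+1\})$ and hence $S_k'=[k+1]$ for every $k\le j$; this reduces to exactly the same telescoping as in the 1D case, applied along the nested chain $\{1\}\subset[2]\subset\cdots\subset[j+1]$.

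The main obstacle is the higher-dimensional marginal bound: the telescoping hinges on the nestedness $S_k\subset S_{k-1}'$, which is automatic in 1D but in higher $D$ requires the geometric observation above, i.e., that the ball assumption on $[j+1]$ forces $S_k'=[k+1]$ for all preceding $k$. The choice $\ell_S=5\ell$ (inherited from Lemma~\ref{lem:gradient}) guarantees sufficient slack for this inclusion. Everything else is bookkeeping: H\"older, DPI, and Pinsker applied carefully to the two inequalities produced by Lemma~\ref{lem:MEDerror}.
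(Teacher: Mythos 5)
Your proposal is correct and follows essentially the same route as the paper: bound $\delta$ from Lemma~\ref{lem:MEDerror} by inserting the $Z$ of Lemma~\ref{lem:gradient} and applying H\"older componentwise to get $2N\eps$, then use inequality~\eqref{eq:errDivDelta} with data processing and the regrouping/telescoping along $S_k\subset S_{k-1}'$ (in 1D) resp.\ $S_k'=[k+1]$ under the ball ordering (in higher $D$), finishing with Pinsker. The geometric observation you flag for the higher-dimensional case is exactly the one the paper uses.
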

\begin{remark}
\label{rem:compute-marginal}
    Note that we can use the last part of the statement to compute an approximation of any desired marginal $T$. In fact, we first take any site $i$ such that $T \subset B_{\ell_S/2}(\{i\})$. Our ordering then starts with all the sites  $l \in B_{\ell_S/2}(\{i\})$ with $l \neq i$ and then site $i$ and then the sites outside of $B_{\ell_S/2}(\{i\})$. Then we have $[|B_{\ell_S/2}(\{i\})|] = B_{\ell_S/2}(\{i\})$.
\end{remark}
\begin{proof}
Let $\{\sigma_{S_k'}\}$ be 
For the free energy, we use Lemma~\ref{lem:gradient} to bound  $\delta$ in Lemma~\ref{lem:MEDerror}.
\begin{align*}
\delta&=-\log(\Tr\left[e^{-H}\right])-\left\langle\nabla\FMED(\{\rho_{S_k'}\}),\{\sigma_{S_k'}\}\right\rangle+1\\
&=-\log(\Tr[e^{-H}])-\left\langle\nabla\FMED(\{\rho_{S_k'}\})+Z,\{\sigma_{S_k'}\}\right\rangle+1\\
&\le2N\eps
\end{align*}
with $Z$ from Lemma~\ref{lem:gradient} and using $Z\in\ker(B)^\perp$, $\|AB\|_1\le\|A\|\|B\|_1$, and $\|\sigma_{S_k'}\|_1=1$.

For the marginals, let us first consider the 1D case and assume a consecutive ordering of sites.
In particular, we can use that $S_{k+1}\subset S_k'$.
Recall from Lemma~\ref{lem:MEDerror}
\begin{align*}
\sum_{k=0}^{N-1} D(\sigma_{S_k'}\|\rho_{S_k'})-D(\sigma_{S_k}\|\rho_{S_k})\le\delta\,.
\end{align*}
Notice that by data-processing, we have the inequalities
$D(\sigma_{S_k'}\|\rho_{S_k'})-D(\sigma_{S_k}\|\rho_{S_k})\ge0$ and\linebreak $D(\sigma_{S_k'}\|\rho_{S_k'})-D(\sigma_{S_{k+1}}\|\rho_{S_{k+1}})\ge0$.
Reordering the inequality, we can bound the marginal errors
\begin{align*}
    D(\sigma_{S_j'}\|\rho_{S_j'})&\le\delta-\left(\sum_{k=j+1}^{N-1}D(\sigma_{S_k'}\|\rho_{S_k'})-D(\sigma_{S_k}\|\rho_{S_k})\right)-\left(\sum_{k=0}^{j-1} D(\sigma_{S_k'}\|\rho_{S_k'})-D(\sigma_{S_{k+1}}\|\rho_{S_{k+1}})\right)\\
    &\le\delta\,,
\end{align*}
where we omitted the divergence on an empty set $D(\sigma_{S_0}\|\rho_{S_0})=0$. Thus, by Pinsker's inequality
\begin{equation*}
    \|\sigma_{S_j'}-\rho_{S_j'}\|_1\le\sqrt{2\delta}.
\end{equation*}
Beyond 1D, if $[k+1]=B_{\ell_S/2}(\{j\})$, then the distance between any two sites in $[k+1]$ is at most $\ell_{S}$. As a result, $[k+1] \subset B_{\ell_S}(\{i\})$ for $i \leq k+1$ and the shields are simply $S_{i} = S'_{i-1} = [(i-1)]$.
Therefore, by Eq. \eqref{eq:errDivDelta},
\begin{align*}
    D(\sigma_{S_j'}\|\rho_{S_j'})&=\sum_{k=0}^j D(\sigma_{S_k'}\|\rho_{S_k'})-D(\sigma_{S_k}\|\rho_{S_k})
    \\&\le\delta-\left(\sum_{k=j+1}^{N-1}D(\sigma_{S_k'}\|\rho_{S_k'})-D(\sigma_{S_k}\|\rho_{S_k})\right)\\
    &\le\delta\, .
\end{align*}
\end{proof}

Putting the above together with a suitable assumption on the decay of the effective interaction, we get the following efficient algorithms given by optimizing the MED relaxation.

\begin{corollary}
\label{cor:free_energy_algorithm}
Given a local Hamiltonian on a $D$-dimensional lattice of $N$ sites with an exponentially-decaying effective interaction, there exists an algorithm taking as input a description of the terms $h_X$ and an inverse error $\eps$ and outputting an approximation $\widetilde{\textup{F}}_\textup{MED}$ of the free energy such that
\[
|\F-\widetilde{\textup{F}}_\textup{MED}|\le\eps \, ,
\]
and furthermore outputting marginals $\sigma_A$ on subsets $A$ of constant diameter approximating the thermal state as
\[
\|\sigma_A-\rho_A\|\le\eps
\]
in time
\[
\poly\left(N, \exp(\log(\frac N\eps)^D)\right).
\]
\end{corollary}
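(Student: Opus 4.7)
The plan is to combine Theorem~\ref{thm:errorEff} with the assumed exponential decay of the effective interaction to fix the shield size, then bound the runtime of the resulting convex program.

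For the error analysis, the exponential-decay assumption supplies, for each $\ell$, an $(\eps',\ell)$-effective interaction with $\eps' = NC\exp(-\ell/\xi)$. Theorem~\ref{thm:errorEff} with $\ell_S = 5\ell$ then yields $|\FMED - \F| \le 2N\eps'$ and, via Remark~\ref{rem:compute-marginal}, a choice of ordering so that any target marginal $A$ of constant diameter is contained in $[|B_{\ell_S/2}(\{i\})|]$ for some $i$; extracting $\sigma_A$ from the corresponding $\sigma_{S'_j}$ by partial trace (which is contractive in trace distance) gives $\|\sigma_A - \rho_A\|_1 \le \sqrt{4N\eps'}$. The tighter marginal requirement forces $\eps' \le \eps^2/(4N)$, which is attained by $\ell = \Theta(\xi \log(N/\eps))$ and hence $\ell_S = \mathcal{O}(\log(N/\eps))$. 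This also automatically satisfies the mild technical condition $\ell_S \ge r$ used in the definition of the shield.

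For the runtime, each MED variable $\sigma_{S'_k}$ is a density matrix on at most $|B_{\ell_S}(\{k+1\})| = \mathcal{O}(\ell_S^D)$ sites, of Hilbert-space dimension $\exp(\mathcal{O}(\ell_S^D))$. The total number of real parameters is therefore $N \exp(\mathcal{O}(\ell_S^D)) = \poly(N)\exp(\mathcal{O}(\log(N/\eps)^D))$. The objective $\FMED$ is convex (linear terms plus differences of von Neumann entropies, convex since $-S$ is convex and conditioning preserves this), the local-consistency map $B$ is linear, and positivity plus trace-one are standard semidefinite constraints. Standard convex-optimization machinery—for instance interior-point methods based on self-concordant barriers for the semidefinite cone together with a barrier for quantum entropy, or, equivalently, the algorithm invoked in \cite[Corollary~2]{Bravyi2022}—solves the program to additive accuracy $\eps$ in time polynomial in the number of variables and $\log(1/\eps)$, giving the claimed $\poly(N, \exp(\log(N/\eps)^D))$.

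The only real subtlety is the rigorous handling of the entropy term, which blows up near the boundary of the density-matrix cone. This is managed either by slightly regularizing the variables (replacing $\sigma$ by $(1-\delta)\sigma + \delta\, \id/\dim$ for $\delta$ inverse-exponentially small in the problem size, which perturbs $\FMED$ and all marginals by at most $\eps$) or by directly invoking self-concordant barriers for quantum entropy; in either case the dependence on $\eps$ remains logarithmic and the stated asymptotic runtime is unchanged. I expect this technicality, rather than any new conceptual ingredient, to be the main obstacle in turning the a priori bound of Theorem~\ref{thm:errorEff} into the desired algorithm.
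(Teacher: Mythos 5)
Your proposal is correct and follows essentially the same route as the paper: choose $\ell_S=\mathcal{O}(\log(N/\eps))$ by combining Theorem~\ref{thm:errorEff} with the exponential decay of the effective interaction (rerunning the MED with a suitable ordering for each target marginal as in Remark~\ref{rem:compute-marginal}), then solve the resulting convex program over $N$ density matrices of dimension $\exp(\mathcal{O}(\ell_S^D))$. The only divergence is in the low-level handling of the entropy near the boundary: where you regularize toward the maximally mixed state (or invoke entropy barriers), the paper imposes the explicit constraint $\sigma_{S_k'}\ge e^{-JN}/2$ --- valid as a relaxation because $\lambda_{\mathrm{min}}(\rho)\ge e^{-JN}$ keeps the Gibbs marginals feasible --- and combines a weak membership oracle with a $\poly(N)\log(1/\eps)$-degree approximation of the matrix logarithm inside the Gr\"otschel--Lov\'asz--Schrijver oracle framework; both devices give the same asymptotic runtime.
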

\begin{proof}
We use an $\eps/2$ argument to combine the error estimate from Theorem~\ref{thm:errorEff} with the error from the convex optimization itself.
From Theorem~\ref{thm:errorEff}, we need an effective interaction with error $\eps_\text{eff}=\eps/4N$. 
We obtain from Definition~\ref{def:heffexp} for constants $C,\xi$ an
$(\eps_\text{eff},\ell)$-effective interaction if
$\ell_S=5\ell\ge5\xi\log(NC/\eps_\text{eff})=5\xi\log(4N^2C/\eps)$.
This ensures $|\FMED-\F|\le\eps/2$, and this holds for any valid choice of ordering. To obtain guarantees on the marginals, we use Remark~\ref{rem:compute-marginal}, which ensures that for any fixed marginal $A$, we can choose an ordering such that the estimates $\sigma_{A}$ that the MED outputs satisfy $\|\sigma_{A} - \rho_{A} \|_{1} \leq \sqrt{2 \eps}$. We can simply run the MED algorithm with an appropriate ordering for each set $A$.

The optimization of the functional $\FMED(\{\sigma_{S_k'}\})$ itself is a convex program, over $N$ positive semidefinite matrices of size up to $d^{\mathcal O(\ell_S^D)}$.
We omit the details of the convex optimization, but refer to \cite[Theorem (4.3.13)]{Grtschel1993} proving a polynomial-time oracle algorithm to give an $\eps/2$-approximation $\widetilde{F}_\text{MED}$ of $\FMED$.
We note a subtlety regarding the bit complexity: It is difficult to obtain general lower bounds on the minimal eigenvalues of marginals of the Gibbs state, which is needed to ensure accurate approximations of the logarithm.
However, for the purpose of theoretically proving a (quasi-)polynomial runtime it will be sufficient to consider lower bounds on the global Gibbs state, which are always exponentially small in $N$:
\begin{equation*}
\lambda_\text{min}(\rho_{S_k'})\ge \lambda_\text{min}(\rho)\ge e^{-2\|H\|-\log(d)}\ge e^{-J N} \, ,
\end{equation*}
where $J$ can be bounded as $J\le\log(d)+\max_X\|h_X\| 2^{(2r+1)^D}$.
We will impose the bound $\rho_{S_k'}\ge e^{-JN}$ explicitly in our optimization in order to ensure good approximations to the matrix logarithms involved:
\begin{equation*}
\min_{\{\sigma_{S_k'}\} \textrm{ s.t. }B\left(\{\sigma_{S_k'}\}\right)=0, \sigma_{S_k'}> e^{-JN}/2} \FMED\left(\{\sigma_{S_k'}\}\right) \, .
\end{equation*}
This is still a relaxation since a feasible point for which the MED functional lower bounds the free energy is given by the marginals of the Gibbs state.

Now we can use the following oracles (see \cite{Grtschel1993} for details) with polynomial bit complexity:
For the description of the convex set we need a weak membership oracle, which is given in \cite[Theorem 10.10]{Gupta2011}.
For the functional we use an approximation of the logarithm based on a Gaussian representation of the integral representation of the logarithm \cite[Proposition 1]{Fawzi2019}. Here, from the above additional constraint, we can assume that the argument of the logarithm is lower bounded by $e^{-JN}/2$. Under this assumption, the approximation requires a degree $\poly(N)\log(1/\eps)$, which guarantees a polynomial-time oracle for any required error given with polynomially sized bit precision.

The subroutines of the above oracles all run in polynomial-time. Together with the (quasi-)polynomial size of the matrix variables we conclude the runtime claimed in the theorem.
\end{proof}

\begin{remark}
The above proof, which is mostly for theoretical purposes requires a polynomial scaling of the bit complexity. This is not necessary if a better lower bound on the lowest eigenvalues for marginals of the Gibbs state is known to hold, such as exponential decay in the number of sites of the marginal rather than the overall system size. In 1D such bounds are available \cite[Lemma 4.2]{Fawzi2023}. In practical implementations using interior point methods, the lower bounds might not be enforced explicitly. 
\end{remark}

\section{Decay of the effective interaction}\label{sec:efficiency}

In this section we discuss settings where the decay of the effective interaction assumption has been shown to hold.

\paragraph{One dimension} In the one-dimensional case, the existence of an exponentially-decaying effective interaction was essentially proven in \cite{Kuwahara2024}, but we show how it translates to our setup in Appendix~\ref{sec:App1DEff} by proving the following proposition.
\begin{proposition}\label{prop:1DEff}
In one dimension, for the Gibbs state of a finite range interaction and at arbitrary constant temperature, there exists an exponentially decaying effective interaction.
\end{proposition}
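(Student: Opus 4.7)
The plan is to derive Proposition~\ref{prop:1DEff} by importing the locality-of-temperature result of \cite{Kuwahara2024} for 1D Gibbs states and repackaging it into the four-axiom form of Definition~\ref{def:heff}. No new physics input is needed beyond the 1D quantum belief-propagation and clustering machinery already developed there; the work is to organize the output as a \emph{consistent family} of local decompositions indexed by the subsystem $A$.

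The starting input is the quantitative existence theorem that Kuwahara's arguments supply in the connected 1D setting: at any constant temperature there exist $C,\xi>0$ such that for each connected interval $A\subset[1,N]$ and each truncation length $\ell\ge r$ there is an operator $\Phi_A^{(\ell)}\in\mathcal B(\mathcal H_A)$ with $\|\heffexact{A}-\Phi_A^{(\ell)}\|\le |\Lambda|Ce^{-\ell/\xi}$, supported within distance $\ell$ of $\partial A$ and naturally decomposed into local pieces on intervals of diameter at most $\ell$. Morally, this says that tracing out the complement of a connected interval only perturbs the Hamiltonian near the boundary with exponentially decaying tails; it is obtained by differentiating the partial-trace partition function along a path of boundary couplings and controlling the generator by 1D exponential clustering. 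This immediately gives properties (i), (ii) and (iii) of Definition~\ref{def:heff}.

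The delicate step is arranging consistency property (iv), namely $\heff{A}{X}=\heff{AB}{X}$ whenever $d(X,B)\ge\ell$. To force this, I would not decompose $\Phi_A^{(\ell)}$ directly; instead I would define each local charge $\heff{A}{X}$ by applying the same belief-propagation recipe restricted to the ball $B_\ell(X)$, reading only the Hamiltonian terms and the shape of $A\cap B_\ell(X)$ inside that ball. Any enlargement $AB\supset A$ with $d(X,B)\ge\ell$ then produces literally the same operator, giving (iv) by construction, and properties (ii)--(iii) remain built in. The main technical obstacle is re-verifying (i) under this \emph{local-in-$X$} rather than \emph{local-in-$A$} definition: one must show that $\sum_X\heff{A}{X}$ still approximates $\heffexact{A}$ to within $|\Lambda|Ce^{-\ell/\xi}$. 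I expect this to follow from a telescoping argument comparing the globally defined $\Phi_A^{(\ell)}$ to the sum of the locally defined charges, with the mismatch controlled by another application of 1D exponential clustering and absorbed into adjusted constants $C,\xi$.
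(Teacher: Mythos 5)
There is a genuine gap, and it sits exactly at the step you flag as ``delicate''. Your fix for property (iv) — defining each charge $\heff{A}{X}$ by a recipe that reads only the Hamiltonian terms and the shape of $A\cap B_\ell(X)$ inside the ball $B_\ell(X)$ — cannot be made consistent with property (i). Note that $\heffexact{A}=-H_A-\log\bigl(\Tr_{A^c}[e^{-H}]\bigr)$, so it contains (besides quasi-local boundary operators) an additive scalar of order $|A^c|$ that depends on \emph{every} coupling in the complement, arbitrarily far from $A$; this is visible already in the commuting warm-up, where the boundary charge is $-\log\bigl(\Tr_A[e^{-H_A-h_{\{a-1,a\}}}]\bigr)$ and involves all of $A^c$ on that side. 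Property (iii) forces all nonzero charges to live in the $\ell$-collar of $A$, so those few charges must carry this far-complement dependence. Charges that are functions of the data in $B_\ell(X)$ alone therefore sum to something whose deviation from $\heffexact{A}$ is extensive (order $|\Lambda|$), not $e^{-\Omega(\ell)}$, and no application of 1D clustering in your deferred ``telescoping'' step can close that gap: clustering controls operator tails, not the missing free-energy-type scalar of the distant complement.

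The underlying misreading is of what (iv) demands: it requires stability of $\heff{A}{X}$ only under enlarging the \emph{subsystem} by a region $B$ with $d(X,B)\ge\ell$; it does not require locality of the charge in a ball around $X$, and in particular the charges are allowed to depend on the entire adjacent complement. This is precisely how the paper's proof proceeds: it takes the explicit boundary decomposition of \cite{Kuwahara2024} (their (S.717)--(S.718)), whose left and right pieces are built from objects like $U'_{B_1B_2}$, $\hat{\tilde\Phi}'_{B_1B_2}$ and $\tilde H^*_{B_1}=-\log\bigl(\Tr_A[e^{-H_{AB_1}}]\bigr)$ that depend on the couplings near the respective boundary and on the complement on that side, but \emph{not} on the bulk of the interval; hence (iv) holds directly, (ii)--(iii) hold by support, and (i) is Kuwahara's error bound — no re-localization or telescoping is needed. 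Your argument could be repaired by keeping the complement dependence in the outermost charge (e.g.\ telescoping in the distance from the boundary with the full adjacent complement included), but as written the construction does not satisfy (i), and small subsystems $|A|<\ell$ (handled in the paper by taking the exact effective interaction) are also left untreated.
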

\begin{remark}
    Note that this result holds at arbitrary temperature but that the decay rate depends on temperature.
    An extension to the computationally provably hard ground-state problem is therefore not possible. While included in the above proposition, in the commuting setting the stronger result in Example~\ref{example:commuting1D} shows a constant locality, independent of temperature for the exact effective Hamiltonian.
\end{remark}

\paragraph{Higher dimensions} Beyond 1D, we can make use of cluster expansions in the high-temperature regime. While a general theory for effective interactions of noncommuting Hamiltonians is still an open problem, the following lemma proves its existence under a commutativity assumption on the Hamiltonian terms and its marginals based on \cite{bluhm2024}. This includes Hamiltonians constructed from a set of geometrically local generators of CSS codes. We show how the result of \cite{bluhm2024} translates to our setup in Appendix~\ref{sec:appendixHighT} by proving the following proposition.
\begin{proposition}\label{prop:highDEff}
    In any dimension, for the Gibbs state of a finite-range interaction with commuting marginals $[h_X,h_Y]=[\Tr_R[h_{X_1}\ldots h_{X_n}],\Tr_S[h_{Y_1}\ldots h_{Y_m}]]=0$ for all $X,Y,X_1,\ldots, X_n,Y_1,\ldots Y_m,\allowbreak R,\allowbreak S$, and for $\sup_X\|h_X\|=\beta < \beta_*$, where $\beta_*$ is a critical temperature only depending on the lattice and locality, there exists an exponentially-decaying effective interaction.
\end{proposition}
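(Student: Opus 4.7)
The plan is to leverage the cluster/polymer expansion machinery developed in \cite{bluhm2024} at high temperature. Writing $\heffexact{A} = -H_A - \log \rho_A - \log Z$ with $Z = \Tr e^{-H}$, under the stated commutativity assumptions both $\log \rho_A$ and $\log Z$ admit convergent expansions indexed by multisets $W$ of Hamiltonian terms $h_X$, with weights $\phi(W)$ supported on $\mathrm{supp}(W) = \bigcup_{h_X \in W} X$ and satisfying a uniform bound of the form $\|\phi(W)\| \le C_1 e^{-|W|/\xi_0}$ whenever $\beta < \beta_*$. The commutativity of Hamiltonian terms together with the commutativity of their partial traces is essential here: it forces the partial traces that appear when computing $\rho_A$ to factor as in a classical polymer expansion, so that the cluster weights for $\log \rho_A$ and $\log Z$ differ only by contributions from clusters that intersect or lie close to $\bar A$.

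Given such an expansion, I would define $\heff{A}{X}$ by collecting, for each $X \subset A$ with $\diam(X) < \ell$, the contributions $\phi(W)$ of every cluster $W$ with $\mathrm{supp}(W) = X$ and $\diam(W) < \ell$, minus the corresponding $H_A$-terms supported on $X$. Property (ii) of Definition \ref{def:heff} then holds by construction of the diameter truncation. For property (iii), clusters strictly inside $A$ and far from $\bar A$ appear with identical weight in $\log \rho_A$ and in $-H_A - \log Z$ and therefore cancel in the effective interaction; only clusters touching a neighborhood of $\bar A$ survive, so $\heff{A}{X}$ vanishes unless $X \subset A$ and $X$ lies within distance $\ell$ of $\bar A$. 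Property (i) is then controlled by the tail of the cluster expansion: summing $\|\phi(W)\|$ over clusters with $\diam(W) \ge \ell$ contributes at most $|\Lambda| C e^{-\ell/\xi}$ in operator norm by absolute convergence of the cluster sum at $\beta < \beta_*$.

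The central step, and the main technical obstacle, is property (iv): showing $\heff{A}{X} = \heff{AB}{X}$ whenever $d(X,B) \ge \ell$. This reduces to verifying that the weight $\phi(W)$ of any cluster with diameter below $\ell$ and support $X$ depends only on the Hamiltonian terms $h_Y$ with $Y$ within distance $\ell$ of $X$, and is therefore unchanged whether the sites of $B$ are traced out or retained. In the fully commuting setting, the partial traces over distant regions factor out and cancel between the expansions of $\log \rho_A$ and $\log \rho_{AB}$ cluster by cluster; this is precisely where the commuting-marginals hypothesis $[\Tr_R[h_{X_1}\cdots h_{X_n}], \Tr_S[h_{Y_1}\cdots h_{Y_m}]] = 0$ is indispensable, since without it partial traces past one another would generate nonlocal couplings between factors localized at $X$ and at $B$. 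The remaining work is then careful bookkeeping to translate the combinatorial bounds and factorization lemmas of \cite{bluhm2024} into the precise form required by Definition \ref{def:heff}, as deferred to Appendix \ref{sec:appendixHighT}.
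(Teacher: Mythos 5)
Your overall route is the same as the paper's: take the cluster-expansion ``strong effective Hamiltonian'' of \cite{bluhm2024}, truncate it at diameter $\ell$, subtract $H_A$ (and the normalization), and check (i)--(iv) of Definition~\ref{def:heff}. Two remarks on where your plan diverges from what actually has to be done. First, you identify property (iv) as the main technical obstacle, but in \cite{bluhm2024} the strong effective Hamiltonian is constructed so that $\Phi^A_X$ is supported in $X\cap A$ and satisfies $\Phi^A_X=\Phi^{A'}_X$ whenever $X\cap A=X\cap A'$; this consistency property is strictly stronger than (iv), so (iv) (and (ii), (iii), using $\Phi^A_X=h_X$ for $X\subset A$) follows immediately with no new factorization argument. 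The effort you allocate there is already done in the reference.

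Second, and this is the genuine gap: your justification of property (i) does not go through as stated. You bound the truncation error ``by absolute convergence of the cluster sum at $\beta<\beta_*$,'' claiming a tail of size $|\Lambda|Ce^{-\ell/\xi}$. Absolute convergence only tells you the tail over clusters of diameter $\ge\ell$ tends to zero; it gives no exponential rate in $\ell$. Likewise, a per-cluster bound of the form $\|\phi(W)\|\le C_1e^{-|W|/\xi_0}$ cannot be summed naively, since the number of clusters of a given diameter grows exponentially; the convergence criterion in \cite{bluhm2024} controls the \emph{summed} quantity $\sup_x\sum_{X\ni x}\|\Phi^A_X\|\le1$ at the threshold, not individual weights with room to spare. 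The missing idea is the rescaling argument: every cluster contributing to a term $\Phi^A_X$ with $\diam(X)=m$ contains at least $m/r$ Hamiltonian factors (connectedness of the cluster plus range $r$), so comparing the expansion for $h_X$ with the rescaled interaction $h_X'=h_X\beta^*/\beta$, for which the summed bound holds, yields
\begin{equation*}
\sum_{X,\;\diam(X)=m}\|\Phi^A_X\|\;\le\;|\Lambda|\left(\frac{\beta}{\beta^*}\right)^{m/r},
\end{equation*}
and summing the geometric series over $m\ge\ell$ gives the required bound $|\Lambda|\,(\beta/\beta^*)^{\ell/r}/(1-\beta/\beta^*)$. Without this (or an equivalent diameter-sensitive cluster bound), your property (i) is an assertion rather than a proof, and it is precisely the quantitative heart of the proposition since it is the only place the strict inequality $\beta<\beta_*$ is used.
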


While the above proofs are just immediate consequences of prior work, by proving the conditions in Section~\ref{sec:general_theory}, we find the following runtimes for the MED algorithm as of Corollary~\ref{cor:free_energy_algorithm}.

\begin{corollary}
Given as input $\eps>0$  and a local Hamiltonian on a lattice either fulfilling
\begin{itemize}
    \item D=1, or
    \item commutativity $[\Tr_R[h_{X_1}\ldots h_{X_n}],\Tr_S[h_{Y_1}\ldots h_{Y_m}]]=0$ for all $X_1,\ldots, X_n,Y_1,\ldots Y_m, R,S$, and $\sup_X\|h_X\|=\beta < \beta_*$, where $\beta_*$ is a critical temperature only depending on the lattice and locality,
\end{itemize}
then we have that the MED algorithm gives approximations of the free energy $\widetilde\F_\textrm{MED}$ and marginals $\sigma_A$ on sets of constant diameter
\begin{align*}
    |\F-\widetilde\F_\textrm{MED}|&\le\eps \, ,\\
    \|\sigma_A-\rho_A\|_1&\le\eps \, ,
\end{align*}
in time
\[
\poly\left(N, \exp(\log(\frac N\eps)^D)\right).
\]
\end{corollary}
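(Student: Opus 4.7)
The proof plan is essentially to assemble the machinery already established in the paper: the algorithmic guarantee from Corollary~\ref{cor:free_energy_algorithm} takes an exponentially-decaying effective interaction as input and produces exactly the kind of approximation and runtime promised here, so all I need to do is verify that each of the two hypotheses (1D, or higher-dimensional commuting in the high-temperature regime) implies that an exponentially-decaying effective interaction exists in the sense of Definition~\ref{def:heff}.

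First, I would dispatch the 1D case directly by invoking Proposition~\ref{prop:1DEff}, which asserts the existence of an exponentially-decaying effective interaction for any finite-range 1D Hamiltonian at constant temperature. This supplies constants $C,\xi$ (depending on temperature, locality, and interaction strength but not on $N$ or $\eps$) so that Definition~\ref{def:heff} is met with an $(|\Lambda|C e^{-\ell/\xi},\ell)$-effective interaction for every $\ell$. Plugging into Corollary~\ref{cor:free_energy_algorithm} yields precisely the stated approximation quality and runtime $\poly(N,\exp(\log(N/\eps)^{D}))$ with $D=1$.

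Second, for the higher-dimensional case I would apply Proposition~\ref{prop:highDEff} under the given commutativity and high-temperature hypothesis $\beta<\beta_*$, obtaining the same conclusion about the existence of constants $C,\xi$ (now depending additionally on $\beta_*-\beta$) for which the effective interaction decays exponentially. Again the conclusion follows from a direct application of Corollary~\ref{cor:free_energy_algorithm}; the exponent $D$ in the runtime is the lattice dimension and reflects the fact that the MED operates over regions of diameter $\ell_S=\mathcal{O}(\log(N/\eps))$, so each marginal has Hilbert space dimension $\exp(\mathcal{O}(\ell_S^{D}))$.

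No genuinely hard step remains, and there is no significant obstacle, since the two propositions encapsulate all the nontrivial physics and the corollary encapsulates all the nontrivial algorithmic analysis (convex optimization with weak membership oracle, Gaussian approximation of the matrix logarithm, the explicit lower bound $\rho_{S_k'}\ge e^{-JN}/2$ to control bit complexity). The one thing I would make sure to state explicitly is that, as emphasized in the temperature remark earlier, the constants hidden in the $\poly$ and $\exp(\log(\cdot)^D)$ notation depend on the temperature (and diverge as $\beta\to\beta_*$ in the second case), but are independent of $N$ and $\eps$, so the claimed asymptotic runtime is valid at any fixed constant temperature in the respective regime.
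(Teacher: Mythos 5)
Your proposal is correct and follows exactly the paper's route: the corollary is stated there as an immediate consequence of Propositions~\ref{prop:1DEff} and~\ref{prop:highDEff} establishing the exponentially-decaying effective interaction in the two regimes, combined with the algorithmic guarantee of Corollary~\ref{cor:free_energy_algorithm}. Your added remark about the temperature dependence of the hidden constants is consistent with the paper's own discussion and does not change the argument.
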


\section{Rounding} \label{sec:rounding}
While we discussed convergence guarantees for approximations to marginals of the Gibbs states, these approximations are not necessarily compatible with a global state.
In this section, we discuss the problem of rounding this relaxed solution to a global state.
The construction is given by a sequence of recovery channels.
We present two consequences of the scheme.

Firstly, the implementation of the channels allows for the efficient preparation of all states fulfilling the requirements on a quantum computer.
In fact these requirements, namely the decay of conditional mutual information and access to approximations of marginals, are a subset of the convergence criteria of the MED.
They guarantee efficient sampling for a potentially wider class of states than those presented in Section~\ref{sec:efficiency}.
In fact, in some models such as the toric code~\cite{castelnovo2007entanglement}, the CMI decay can be shown even at low temperature\footnote{More precisely, CMI decay can be guaranteed provided that the size of the subsystem is larger compared to $e^{\beta}$, where $\beta$ is the inverse temperature.}.
For such systems, any method that yields an accurate approximation of the marginals can be used in our scheme.
Secondly, we also explain how the construction provides a tensor network description of the underlying states.

\subsection{Recovery Channels}

In the previous sections we made use of an effective interaction, a strong notion of decay of correlations that is related to an operator version of the conditional mutual information \cite{capel2024}.
In this section, we will only use the related and more established conditional mutual information itself.
Small or vanishing CMI is known to imply various structural properties of the state including its blockwise factorization \cite{Hayden_2004,salzmann2024} and a representation using recovery channels \cite{petz1988,fawzi_quantum_2015,junge2018}.

In particular, we will make use of the latter results. \cite{junge2018} gave an explicit expression for an approximate recovery channel for states with small CMI, whose existence had previously been proven in \cite{fawzi_quantum_2015}.
We introduce the following definitions and results from there as our construction is heavily based on those (the reference deals with more general cases of the data-processing inequality for the quantum relative entropy).
Consider a state $\rho$ on a tripartite system $ABC$. The rotated Petz map is defined as
\begin{align*}
\beta_0(t)&=\frac\pi2(\cosh(\pi t)+1)^{-1} \, ,\\
\mathcal R^t_{\rho_{BC},\Tr_C}(X)&=\rho_{BC}^{1/2-it}\rho_B^{it-1/2}X\rho_B^{-it-1/2}\rho_{BC}^{1/2+it} \, ,\\
\mathcal R_{\rho_{BC},\Tr_C}(X)&=\int_{\mathbb R}\beta_0(t)\mathcal R_{\rho_{BC},\Tr_C}^{t/2}(X) \, .
\end{align*}
The main result regarding the approximate recovery is an upper bound on the recovery error in terms of the CMI.
\begin{theorem}[{\cite[Theorem 2.1, Remark 2.2]{junge2018}}]
The following inequalities hold:
\begin{align}\label{eq:CMItoF}
I(A:C|B)_\rho&\ge-2\log (F(\rho_{ABC},\mathcal R_{\rho_{BC},\Tr_C}(\rho_{AB}))) \geq \frac{1}{4} \left\| \rho_{ABC} - \mathcal R_{\rho_{BC},\Tr_C}(\rho_{AB})) \right\|^2_1
\end{align}
where the fidelity $F(\rho, \sigma) = \| \sqrt{\rho} \sqrt{\sigma} \|_1$ and $\| \cdot \|_1$ is the trace norm.
\end{theorem}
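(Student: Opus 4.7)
The plan is to split the theorem into its two inequalities and treat them separately, since they are of very different character.

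For the second inequality relating $-2\log F$ to the squared trace norm, I would invoke the Fuchs--van de Graaf inequality $\|\rho - \sigma\|_1 \leq 2\sqrt{1 - F(\rho,\sigma)^2}$, which rearranges to $F^2 \leq 1 - \tfrac{1}{4}\|\rho - \sigma\|_1^2$. Applying $-\log$ on both sides and using the elementary bound $-\log(1-x) \geq x$ for $x \in [0,1)$ then yields $-2 \log F \geq \tfrac{1}{4}\|\rho - \sigma\|_1^2$. This is routine and self-contained.

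The first inequality is the substantive part, namely the Fawzi--Renner-type strengthening of the data-processing inequality. The approach I would take is complex interpolation on the strip $S = \{z : 0 \leq \mathrm{Re}(z) \leq 1\}$. Starting from the identity $I(A:C|B)_\rho = D(\rho_{ABC} \| \rho_{BC}) - D(\rho_{AB} \| \rho_B)$ (viewing $\rho_{BC}$, $\rho_B$ via the obvious embeddings), I would introduce an analytic operator-valued function of the form
\begin{equation*}
G(z) = \rho_{ABC}^{(1-z)/2} \, \rho_{BC}^{z/2} \, \rho_B^{-z/2} \, \rho_{AB}^{1/2},
\end{equation*}
and consider $\log \|G(z)\|_{2/(1-z)}$ on the strip. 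At $\mathrm{Re}(z)=0$, the derivative in the real direction is controlled by the CMI (this is the differential form of DPI for sandwiched R\'enyi divergences in the limit of R\'enyi parameter one). At $\mathrm{Re}(z)=1$, the quantity reduces to an expression involving the fidelity between $\rho_{ABC}$ and a phase-rotated Petz map applied to $\rho_{AB}$. Integrating over $t \in \mathbb{R}$ with the Hirschman-improved three-line weight $\beta_0(t) = \tfrac{\pi}{2}(\cosh(\pi t) + 1)^{-1}$ produces exactly the rotated Petz map $\mathcal{R}_{\rho_{BC}, \Tr_C}$ defined in the excerpt, together with the factor of $2$ in $-2\log F$. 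Concavity of $\log$ followed by joint concavity of fidelity then lets me pull the integral inside and obtain the bound in terms of $F(\rho_{ABC}, \mathcal{R}_{\rho_{BC},\Tr_C}(\rho_{AB}))$.

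The main obstacle I expect is the careful analytic setup: verifying that $G(z)$ extends analytically to the strip without degeneracies coming from kernels of $\rho_B$ or $\rho_{BC}$ (which would need a perturbation $\rho_B + \epsilon I$ argument and passage to the limit), checking the integrability required to apply Hirschman's theorem, and cleanly identifying the boundary values with the right operator norms so that the rotated Petz map emerges with the exact weight $\beta_0(t)$. A secondary obstacle is the differentiation of sandwiched R\'enyi divergences at $z=0$, which requires care to reproduce the CMI rather than a larger quantity. Once these analytic technicalities are discharged, the two inequalities chain together to give the stated theorem.
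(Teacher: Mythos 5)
You should note at the outset that the paper does not prove this statement at all: it is imported verbatim, with citation, from \cite{junge2018} (Theorem 2.1 and Remark 2.2 there), so there is no internal proof to compare against. Your sketch is essentially a reconstruction of the proof in that reference rather than a new route: the second inequality is exactly Fuchs--van de Graaf combined with $-\log x\ge 1-x$ (the content of Remark 2.2), and the first follows the Stein--Hirschman complex-interpolation argument, with the weight $\beta_0(t)$ coming from Hirschman's improvement of the three-line theorem and the $\mathrm{Re}(z)=1$ boundary producing the rotated Petz map; your identity $I(A:C|B)_\rho=D(\rho_{ABC}\|\id_A\otimes\rho_{BC})-D(\rho_{AB}\|\id_A\otimes\rho_B)$ is the correct way to cast the CMI as a data-processing difference for the partial trace over $C$. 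Two points to tighten: first, with $p_0=2$ at $\mathrm{Re}(z)=0$ and $p_1=1$ at $\mathrm{Re}(z)=1$, the interpolated norm index is $2/(1+\mathrm{Re}(z))$, not $2/(1-\mathrm{Re}(z))$ --- your choice blows up at the right boundary, where you need the trace norm so that $\|G(1+it)\|_1$ is a fidelity; second, the left boundary does not directly give the CMI but rather the limit/derivative as the interpolation parameter tends to $0$ yields the relative-entropy difference, which is the differentiability step you correctly flag, and together with the standard $\rho_B+\epsilon\,\id$ regularization for possible kernels this discharges the analytic caveats. With those corrections the outline is sound, and it buys nothing beyond the cited result --- which is why the paper simply quotes it.
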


\subsection{Rounding via Channel Concatenation}
We still consider the same setup as in the MED with an ordered set of sites $1,\ldots,N$ and Markov Shields $S_k=B_\ell(k+1)\cap[k]$.
In addition, we will need a doubled Markov shield $S^2_k=B_{2\ell}(k+1)\cap[k]$, $S'^2_k=S^2_k\cup\{k+1\}$.

The idea of the following theorem is to construct a state from a sequence of recovery channels site by site.
We require the channel to act on a shield of radius $\ell$ and to recover the approximate marginals well on a shield of radius $2\ell$, but not globally so that the approach remains computationally feasible.
To prove a good global approximation, we leverage the existence of a second channel reconstructing the original global state from the enlarged shield, which commutes with the constructive channel.
This latter channel cannot be efficiently constructed, however, it is only part of the proof, not the construction, so no computational access to it is needed.

\begin{theorem}\label{thm:rounding}
Let $\rho$ be a state such that $\forall$ disjoint $A,B,C\subset\Lambda$ with $d(A,C)\ge\ell$, $I(A:C|B)\le\eps_{CMI}$. 
For a set of marginals $\sigma_{S'^2_k}$ such that $\|\sigma_{S'^2_k}-\rho_{S'^2_k}\|_1\le\eps_\sigma$, we define a rounding scheme.
Let
\begin{align*}
\phi_{k+1}&: \mathcal{B}(\mathcal H_{S_k})\to\mathcal{B}(\mathcal H_{S_k'})\\
\phi_{k+1}&=\mathcal R_{\sigma_{S_k'},\Tr_{k+1}}.
\end{align*}
Then the rounded state is given as
\[
\widetilde\sigma=\bigcirc_{i=2}^N(\mathrm{id}_{\overline{S_{i-1}}} \otimes \phi_i)(\sigma_{\{1\}})
\]
and fulfills
\[
\|\widetilde\sigma-\rho\|_1\le N(4\sqrt{\eps_{CMI}}+2\eps_\sigma+2\sqrt{\eps_{CMI}+3\log d\sqrt\eps_\sigma}).
\]
The channels and thereby the circuit constructing $\widetilde\sigma$, are given explicitly in terms of the marginals $\sigma_{S'_k}$.
\end{theorem}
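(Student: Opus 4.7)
The plan is to control $\|\widetilde\sigma - \rho\|_1$ by a telescoping argument over the $N-1$ recovery steps, bounding each per-step error by comparing the approximate channel $\phi_{k+1}=\mathcal R_{\sigma_{S_k'},\Tr_{k+1}}$ with the ideal rotated Petz map built from $\rho$, and then transferring the resulting state-continuity estimate to the doubled shield where $\sigma$ and $\rho$ are known to be close.

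I would first introduce the hybrid states $\rho^{(k)} := \phi_N \circ \cdots \circ \phi_{k+1}(\rho_{[k]})$, so that $\rho^{(N)} = \rho$ while $\widetilde\sigma$ is the same composition applied to $\sigma_{\{1\}}$. The triangle inequality combined with the contractivity of the trace norm under each channel $\phi_j$ then yields
\[
\|\widetilde\sigma - \rho\|_1 \le \|\sigma_{\{1\}} - \rho_{\{1\}}\|_1 + \sum_{k=1}^{N-1} E_k, \qquad E_k := \|\phi_{k+1}(\rho_{[k]}) - \rho_{[k+1]}\|_1,
\]
and by assumption $\|\sigma_{\{1\}} - \rho_{\{1\}}\|_1 \le \eps_\sigma$, so the task reduces to bounding each $E_k$ by the per-step quantity $4\sqrt{\eps_{CMI}}+2\eps_\sigma+2\sqrt{\eps_{CMI}+3\log d\sqrt{\eps_\sigma}}$.

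To bound $E_k$, set $A := [k] \setminus S_k$, $B := S_k$, $C := \{k+1\}$ and split via the ideal channel $\phi^\rho_{k+1} := \mathcal R_{\rho_{S_k'},\Tr_{k+1}}$:
\[
E_k \le \|\phi^\rho_{k+1}(\rho_{AB}) - \rho_{ABC}\|_1 + \|(\phi_{k+1} - \phi^\rho_{k+1})(\rho_{AB})\|_1.
\]
The first term is exactly the rotated Petz recovery error for $\rho$ on $(A,B,C)$, so by \eqref{eq:CMItoF} together with the shield condition $d(A,C)\ge\ell$ it is at most $2\sqrt{I(A:C|B)_\rho}\le 2\sqrt{\eps_{CMI}}$. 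For the second term I would exploit that both channels act as identity on $[k] \setminus S_k$, and work on the annulus $A' := S_k^2 \setminus S_k$, where $\|\sigma_{A'BC} - \rho_{A'BC}\|_1 \le \eps_\sigma$. Inserting the intermediates $\phi_{k+1}(\rho_{A'B}) \to \phi_{k+1}(\sigma_{A'B}) \to \sigma_{A'BC} \to \rho_{A'BC} \to \phi^\rho_{k+1}(\rho_{A'B})$ and applying in turn the contractivity of $\phi_{k+1}$, the recovery bound \eqref{eq:CMItoF} for $\sigma$ on $(A',B,C)$, and the recovery bound for $\rho$ on $(A',B,C)$, yields
\[
\|(\phi_{k+1} - \phi^\rho_{k+1})(\rho_{A'B})\|_1 \le 2\eps_\sigma + 2\sqrt{I(A':C|B)_\sigma} + 2\sqrt{\eps_{CMI}}.
\]
The CMI of $\sigma$ is then transferred to that of $\rho$ via the Alicki--Fannes--Winter-type continuity of conditional mutual information, applied with $\|\sigma_{S_k'^2}-\rho_{S_k'^2}\|_1\le\eps_\sigma$, giving $I(A':C|B)_\sigma \le \eps_{CMI}+3\log d\sqrt{\eps_\sigma}$. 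Summing over $k$ produces the claimed factor of $N$.

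The main technical obstacle I anticipate is the ``lift'' from the doubled-shield estimate on the state-continuity term back to the genuinely global statement: the input $\rho_{[k]}$ is supported on all of $[k]$, whereas the approximate marginals of $\sigma$ are only controlled on $S_k'^2$. Because $\phi_{k+1}$ and $\phi^\rho_{k+1}$ agree as the identity on $[k] \setminus S_k$, one has to argue, either through a Stinespring-type purification of the rotated Petz maps or by direct manipulation of the Kraus operators, that $\|(\phi_{k+1}-\phi^\rho_{k+1})(\rho_{AB})\|_1$ is governed by the same quantity on $\rho_{A'B}$. This is also the step responsible for the square-root dependence on $\eps_\sigma$ in the final bound: the approximate Markov property transfers from $\rho$ to $\sigma$ most naturally in fidelity form, and only the Fuchs--van de Graaf conversion produces the $\sqrt{\eps_\sigma}$ factor that ultimately appears inside the AFW-style CMI-continuity term.
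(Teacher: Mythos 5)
Your telescoping setup and the per-step budget are fine, and the local chain on the doubled shield (contractivity, the Junge et al.\ recovery bound for $\sigma$ and $\rho$, and the Shirokov/AFW continuity of the CMI giving $I(A':C|B)_\sigma\le\eps_{CMI}+3\log d\sqrt{\eps_\sigma}$) is essentially the same bookkeeping as the paper's. But the step you yourself flag as the ``main technical obstacle'' is a genuine gap, not a technicality you can defer: you bound $\|(\phi_{k+1}-\phi^{\rho}_{k+1})(\rho_{A'B})\|_1$ on the annulus-plus-shield marginal, whereas your decomposition of $E_k$ needs $\|(\phi_{k+1}-\phi^{\rho}_{k+1})(\rho_{[k]})\|_1$ on the full input. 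Trace-norm closeness of two channel outputs on a \emph{marginal} does not control their closeness on an \emph{extension} of that marginal; that is precisely the difference between the induced trace norm and the stabilized (diamond) norm. Neither a Stinespring purification nor Kraus-operator manipulation closes this: to globalize you would need a diamond-norm continuity bound for the rotated Petz map in its defining state, which is exactly the dimension-dependent estimate the paper isolates in Proposition~\ref{prop:rotPetzCont} (a factor $d_{BC}/a^{5/2}$, requiring an eigenvalue lower bound $a$ on $\sigma_{S'_k},\rho_{S'_k}$ that is not assumed in the theorem). Plugging that in would destroy the claimed $N(4\sqrt{\eps_{CMI}}+2\eps_\sigma+2\sqrt{\eps_{CMI}+3\log d\sqrt{\eps_\sigma}})$ bound, so your route, as written, does not reach the statement.

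The paper avoids comparing the two Petz maps on a global input altogether. It introduces a \emph{second, purely analytical} recovery channel $\mathcal R_{k+1}=\mathcal R_{\rho_{[k]\setminus S_k},\Tr_{[k]\setminus S^2_k}}$, built from $\rho$, which reconstructs the far region $[k]\setminus S^2_k$ from the annulus $S^2_k\setminus S_k$; its recovery error is $2\sqrt{\eps_{CMI}}$ by the Markov property of $\rho$ across the width-$\ell$ annulus. Since $\mathcal R_{k+1}$ and the constructive channel $\phi_{k+1}$ act on disjoint sets of sites, they commute, so the locally verified error $\|\phi_{k+1}(\sigma_{S^2_k})-\sigma_{S'^2_k}\|_1\le2\sqrt{\eps_{CMI}+3\log d\sqrt{\eps_\sigma}}$ can be pushed through $\mathcal R_{k+1}$ by contractivity, and the globalization is paid for by two applications of the $2\sqrt{\eps_{CMI}}$ recovery bound for $\rho$ plus two $\eps_\sigma$ marginal swaps. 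No channel-continuity estimate is needed, and $\mathcal R_{k+1}$ never has to be implemented. To repair your argument you would need to import this device (or an equivalent way to certify the global action of $\phi_{k+1}$ using only local data), rather than a continuity bound for $\phi_{k+1}-\phi^{\rho}_{k+1}$.
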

\begin{proof}
We will leverage the small conditional mutual information to construct good recovery channels for both, the global state $\rho$ as well as, using an appropriate continuity bound, for the approximate marginals $\sigma$. By  \cite[Lemma 4]{Shirokov-ContinuityBounds-2019}, we have
\begin{align*}
  I(k+1:S^2_k\setminus S_k|S_k)_{\sigma_{S'^2_k}}& \le I(k+1:S^2_k\setminus S_k|S_k)_{\rho_{S'^2_k}}+2\log d \, \frac{\eps_\sigma}{2} + \left(1+\frac{\eps_\sigma}{2}\right) h\left(\frac{1}{1+\eps_\sigma/2}\right)\\
  & \leq \eps_{CMI} + 2  \log d \sqrt{\frac{\eps_\sigma}{2}} + \sqrt{2 \frac{\eps_\sigma}{2}} \\
  & \leq \eps_{CMI} + 
  3\log d\sqrt{\eps_\sigma} \, ,
\end{align*}
where we are using that $\eps_\sigma/2 \in (0,1)$, and thus ${ \frac{\eps_\sigma}{2}}\leq  \sqrt{ \frac{\eps_\sigma}{2}} $, $h(x)=-x\log x -(1-x) \log(1-x)$ is the binary entropy, and that $(1+x)h\left( \frac{1}{1+x}\right) \leq \sqrt{2x}$. 
Moreover, by Eq. \eqref{eq:CMItoF},
\[
\|\phi_{k+1}(\sigma_{S^2_k})-\sigma_{S'^2_k}\|_1\le2\sqrt{\eps_{CMI}+3\log d\sqrt{\eps_\sigma}} \, .
\]
Since the error bound for this recovery channel only applies for the small subsystem $S'^2_k$, we will in addition make use of the global CMI condition which only applies to $\rho$ (since the approximate marginals are not globally defined/might not be consistent):
\[
\mathcal{R}_{k+1}=\mathcal R_{\rho_{[k]\setminus S_k},\Tr_{[k]\setminus S^2_k}} \, ,
\]
which acts on the outer part of the extended Markov shield $S^2_k\setminus S_k$ and recovers all preceding sites $[k+1]\setminus S'^2_k$.
Using the Markov property of $\rho$, we have
\[
\|\mathcal R_{k+1}(\rho_{S'^2_k})-\rho_{[k+1]}\|_1\le2\sqrt{\eps_{CMI}} \, .
\]
We define the rounding iteratively as 
\begin{align*}
\widetilde\sigma_1&=\sigma_{\{1\}}\\
\widetilde\sigma_{k+1}&=\phi_{k+1}(\widetilde\sigma_k) \, .
\end{align*}
Finally, using that the trace-norm is contractive under quantum channels, we have
\begin{align*}
\|\widetilde\sigma_{k+1}-\rho_{[k+1]}\|_1& =\|\phi_{k+1}(\widetilde\sigma_{k})-\rho_{[k+1]}\|_1\\
&\le\|\phi_{k+1}(\widetilde\sigma_k)-\phi_{k+1}(\rho_{[k]})\|_1\\
&\quad+\|\phi_{k+1}(\rho_{[k]})-\phi_{k+1}(\mathcal R_{k+1}(\rho_{S^2_k}))\|_1\\
&\quad+\|\phi_{k+1}(\mathcal R_{k+1}(\rho_{S^2_k}))-\phi_{k+1}(\mathcal R_{k+1}(\sigma_{S^2_k}))\|_1\\
&\quad+\|\phi_{k+1}(\mathcal R_{k+1}(\sigma_{S^2_k}))-\mathcal R_{k+1}(\sigma_{S'^2_k})\|_1\\
&\quad+\|\mathcal R_{k+1}(\sigma_{S'^2_k})-\mathcal R_{k+1}(\rho_{S'^2_k})\|_1\\
&\quad+\|\mathcal R_{k+1}(\rho_{S'^2_k})-\rho_{[k+1]}\|_1\\
&\le\|\widetilde\sigma_k-\rho_{[k]}\|_1\\
&\quad+\|\rho_{[k]}-\mathcal R_{k+1}(\rho_{S^2_k})\|_1\\
&\quad+\|\rho_{S^2_k}-\sigma_{S^2_k}\|_1\\
&\quad+\|\phi_{k+1}(\sigma_{S^2_k})-\sigma_{S'^2_k}\|_1\\
&\quad+\|\sigma_{S'^2_k}-\rho_{S'^2_k}\|_1\\
&\quad+\|\mathcal R_{k+1}(\rho_{S'^2_k})-\rho_{[k+1]}\|_1\\
&\le\|\widetilde\sigma_{k}-\rho_{[k]}\|_1+4\sqrt{\eps_{CMI}}+2\eps_\sigma+2\sqrt{\eps_{CMI}+3\log d\sqrt\eps_\sigma} \, .
\end{align*}
The claim follows by induction.
\end{proof}

Decomposing the channels in the previous theorem into a quantum circuit (of exponential size in the number of qubits it acts on), we obtain the following result on efficient Gibbs sampling for logarithmically sized shields:
\begin{corollary}\label{cor:circuitDecomposed}
Let $\eps>0$.
    Under the conditions of Theorem~\ref{thm:rounding}, where $1/\eps_{CMI},1/\eps_\sigma$ need to be chosen to some accuracy $\poly(1/\eps)$, there is a quantum algorithm that runs in time
    \begin{equation*}
        \poly\left(\exp(\ell^D),N,1/\eps\right)
    \end{equation*}
    and outputs an approximation $\sigma$ of the Gibbs state with
    \[
    \|\sigma-\rho\|_1\le\eps \, .
    \]
\end{corollary}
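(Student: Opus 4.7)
The plan is to convert the channel construction of Theorem~\ref{thm:rounding} into an explicit classical-quantum algorithm by making concrete parameter choices, computing the required marginals classically, writing down matrix descriptions of the recovery channels $\phi_{k+1}$, and then synthesizing quantum circuits that implement them.

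First I would choose $\eps_{CMI}$ and $\eps_\sigma$ so that the error bound of Theorem~\ref{thm:rounding} is at most $\eps$. Since that bound scales like $N(\sqrt{\eps_{CMI}} + \eps_\sigma + (\log d)^{1/2}\eps_\sigma^{1/4})$, setting $\eps_{CMI}=\Theta((\eps/N)^2)$ and $\eps_\sigma = \Theta((\eps/(N\log d))^4)$ suffices; both are $\poly(\eps/N)$ as demanded by the hypothesis, and $\log(1/\eps_{CMI})$ and $\log(1/\eps_\sigma)$ are $O(\log(N/\eps))$, which governs the downstream bit-precision overhead.

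Next I would compute, using any classical procedure with the required guarantee (for instance the MED together with Remark~\ref{rem:compute-marginal} to target the correct marginals), approximate marginals $\sigma_{S'^2_k}$ with $\|\sigma_{S'^2_k}-\rho_{S'^2_k}\|_1\le\eps_\sigma$ for every $k$. Each $S'^2_k$ contains $O(\ell^D)$ sites, so the marginal lives in a space of dimension $\exp(O(\ell^D))$ and producing each one takes time $\poly(\exp(\ell^D),N,1/\eps)$. Taking partial traces supplies the single-shield marginals $\sigma_{S_k'}$ and $\sigma_{S_k}$ that enter $\phi_{k+1}=\mathcal R_{\sigma_{S_k'},\Tr_{k+1}}$. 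I would then produce a matrix description of $\phi_{k+1}$ by truncating the integral $\int \beta_0(t)\,\mathcal R^{t/2}_{\sigma_{S'_k},\Tr_{k+1}}(\cdot)\,dt$ to $|t|\le O(\log(1/\eps))$ (using the exponential decay of $\beta_0$) and applying polynomial-degree quadrature, and by evaluating the fractional powers $\sigma_{S_k'}^{\pm 1/2 \pm it}$, $\sigma_{S_k}^{\mp 1/2\mp it}$ from the eigendecompositions of the computed marginals. All operations are on matrices of size $\exp(O(\ell^D))$, so the resulting Choi/Kraus description of $\phi_{k+1}$ has size $\poly(\exp(\ell^D))$ and is produced in time $\poly(\exp(\ell^D),N,1/\eps)$.

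On the quantum side, I would Stinespring-dilate each $\phi_{k+1}$ to a unitary on $S'_k$ together with $O(\ell^D)$ fresh ancillas and compile this unitary into elementary gates (e.g.\ by Solovay--Kitaev) at a cost of $\poly(\exp(\ell^D),\log(1/\eps))$ per channel. Composing the $N$ gadgets in the order specified in Theorem~\ref{thm:rounding}, starting from the trivial-to-prepare single-site state $\sigma_{\{1\}}$, yields a circuit of the advertised runtime whose output is within $\eps$ in trace distance of $\rho$ by Theorem~\ref{thm:rounding}. The main obstacle I anticipate is the propagation of error through the classical construction of $\phi_{k+1}$: the rotated Petz map involves negative fractional powers of the approximate marginals, so small eigenvalues of $\sigma_{S'_k}$ can amplify the $\eps_\sigma$-level error. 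Controlling this requires either a spectral lower bound on the marginals (as in Corollary~\ref{cor:free_energy_algorithm}, where $\rho_{S_k'}\ge e^{-JN}$ keeps the bit complexity at $\poly(N)$) or an explicit regularization $\sigma\mapsto(1-\eta)\sigma+\eta\,\id/\dim$ whose perturbation must be reabsorbed into the error budget, together with a quantitative Lipschitz bound on the dependence of $\mathcal R_{\sigma,\Tr_{k+1}}$ on $\sigma$.
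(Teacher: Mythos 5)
Your proposal follows essentially the same route as the paper's proof: classically compute the marginals, numerically assemble the rotated Petz channels (truncated integral, quadrature, matrix fractional powers), Stinespring-dilate and compile via Solovay--Kitaev, and handle the small-eigenvalue problem by the regularization $\sigma_{S_k'}\mapsto\sigma_{S_k'}+\id\,\eps_\sigma/\bigl(2\dim(\mathcal H_{S_k'})\bigr)$, which is exactly the paper's choice. The only refinement you miss is that no Lipschitz bound on $\mathcal R_{\sigma,\Tr_{k+1}}$ is needed to absorb the regularization: the regularized marginals are still $\eps_\sigma$-close to the true ones, so Theorem~\ref{thm:rounding} applies to the channels built from them directly, and the eigenvalue lower bound $a\ge\eps_\sigma/\bigl(2\dim(\mathcal H_{S_k'})\bigr)$ only enters the runtime of the circuit-synthesis step.
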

\begin{proof}
We can use Theorem~\ref{thm:rounding} which gives an explicit formula for $N$ channels constructing an $\eps/2$ approximation to the Gibbs state using $N$ channels (the input marginal can be seen as another channel without input).
We need to give an $\eps/2N$ approximate circuit decomposition in diamond norm for each of these channels, such that the overall output fulfills the required error bound.
Due to the decomposition, each channel acts on/outputs $\mathcal O(\ell^D)$ many qubits and by a Stinespring dilation can be seen as a unitary of that size (adding and tracing out $\mathcal{O}(\ell^D)$ ancilla qubits).

We defer the discussion of the numerical computation of a gate decomposition of the rotated Petz map to Appendix~\ref{sec:AppDetailsRotPetz}, showing that an algorithm with runtime $\exp(\mathcal{O}(n))\poly\log(1/a\eps)/a$ exists that constructs a decomposition into gates with error $\eps/2$ for the rotated Petz map acting on $n$ qubits and defined via marginals with smallest eigenvalue lower bounded by $a$.

However, in order to avoid additional assumptions on these marginals we note the following:
We consider the error $\eps_\sigma=1/\poly(1/\eps)$ needed for an $\eps/2$ approximation in Theorem~\ref{thm:rounding}, find $\eps_\sigma/2$ approximations to the marginals, and apply the construction in the theorem to $\sigma_{S_k'}+\id \eps_\sigma/2\dim\left(\mathcal H_{S_k'}\right)$ such that the marginals' smallest eigenvalues are lower bounded by $a\ge\eps_\sigma/2\dim(\mathcal H_{S_k'})$ and the marginals still fulfill the required error bounds.
\end{proof}
The construction can also be seen from a classical computational perspective. Consider the sequence of channels in the above Corollary~\ref{cor:circuitDecomposed}. We can view the application of each channel as a tensor applied to the respective sites and decompose each of these tensors using singular value decomposition into local tensors only connected to neighbouring sites whose bond dimension is no larger than exponential in the number of sites per channel. Furthermore each site is only acted on by a number of channels scaling as $\mathcal{O}(\ell^D)$. Collecting these tensors for each site we obtain the following corollary.
\begin{corollary}\label{cor:mpo}
    Under the conditions of Corollary~\ref{cor:circuitDecomposed}, there is a projected entangle pair operator $\sigma$ approximating the Gibbs state, $\|\sigma-\rho\|_1\le\eps$ with bond dimension
    \begin{equation*}
    D_\textrm{bond}=\poly\left(\exp(\ell^D),N,1/\eps\right)\,.
    \end{equation*}
\end{corollary}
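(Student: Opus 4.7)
The plan is to convert the explicit sequence of channels from Corollary~\ref{cor:circuitDecomposed} into a PEPO by singular value decomposing each channel into local tensors and fusing them site by site on the lattice. Recall that Theorem~\ref{thm:rounding} produces channels $\phi_2, \ldots, \phi_N$ acting trivially outside Markov shields $S_k'$ of size $|S_k'| = \mathcal{O}(\ell^D)$, together with an input marginal $\sigma_{\{1\}}$ that can be treated as an extra tensor with output legs only.

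First, I would represent each channel $\phi_k$ as a superoperator tensor carrying two physical indices (input and output) at each site of $S_k'$ and apply iterative SVD along a path ordering of $S_k'$ to obtain an exact local-tensor-network representation on that support, with internal bond dimensions bounded by $d^{\mathcal{O}(|S_k'|)} = \exp(\mathcal{O}(\ell^D))$. The internal bonds of this local network can be routed along lattice edges lying inside $S_k'$; because the decomposition is exact, no additional error is introduced beyond that of Corollary~\ref{cor:circuitDecomposed}.

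Next, assemble the global PEPO by placing at each lattice site $v \in \Lambda$ the fusion of the local tensors coming from every channel whose shield contains $v$. The geometric content is that each site, and hence each nearest-neighbor edge of the lattice, lies in only $\mathcal{O}(\ell^D)$ distinct shields, since shields have radius $\ell$. Consequently, the bond dimension on any edge is bounded by a product of at most $\mathcal{O}(\ell^D)$ factors of size $\exp(\mathcal{O}(\ell^D))$, which together with the logarithmic scaling $\ell = \mathcal{O}(\log(N/\eps))$ inherited from Corollary~\ref{cor:circuitDecomposed} fits inside the claimed bound $D_\textrm{bond} = \poly(\exp(\ell^D), N, 1/\eps)$.

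The main step requiring attention is this geometric bookkeeping: correctly counting how many channels' SVD-bonds pass through a fixed lattice edge and stacking the resulting factors under composition of channels. Beyond that, the trace-distance bound $\|\sigma - \rho\|_1 \le \eps$ follows immediately from Corollary~\ref{cor:circuitDecomposed}, since the SVD decompositions are exact and the final PEPO is simply a tensor-network encoding of the same state produced by the channel circuit.
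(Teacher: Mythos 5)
Your proposal is correct and follows essentially the same route as the paper's own argument: the paper likewise views each channel from Corollary~\ref{cor:circuitDecomposed} as a tensor supported on its shield $S'_k$, decomposes it by SVD into local tensors of bond dimension exponential in $|S'_k|=\mathcal{O}(\ell^D)$, and uses the fact that each site lies in only $\mathcal{O}(\ell^D)$ shields before collecting the resulting tensors site by site into the PEPO. The per-edge bookkeeping you single out (stacking $\mathcal{O}(\ell^D)$ factors of size $\exp(\mathcal{O}(\ell^D))$ on a lattice edge) is exactly the step the paper handles at the same informal level, so nothing further is required.
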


The results in this section are conditional on the existence of an efficient classical algorithm for marginal approximations and the decay of the CMI. Both follow from the existence of an effective interaction, the former by using the MED algorithm, the latter quite directly by Lemma~\ref{lem:CMI_decay}.
In the settings where we were able to prove this condition, we collect these results and state the following corollary.
Note that alternative classical algorithms and proofs of the CMI decay are possible under weaker conditions than the exponentiual decay of the effective interaction, but we leave exploring such results to future work.

\begin{corollary}
Given as input $\eps>0$  and a local Hamiltonian on a lattice either fulfilling
\begin{itemize}
    \item D=1, or
    \item commutativity $[\Tr_R[h_{X_1}\ldots h_{X_n}],\Tr_S[h_{Y_1}\ldots h_{Y_m}]]=0$ for all $X_1,\ldots, X_n,Y_1,\ldots Y_m, R,S$, and $\sup_X\|h_X\|=\beta < \beta_*$, where $\beta_*$ is a critical temperature only depending on the lattice and locality,
\end{itemize}
then the Gibbs sampling quantum algorithm in Corollary~\ref{cor:circuitDecomposed}
outputs an approximation to the Gibbs state $\sigma$ with
\[
\|\sigma-\rho\|_1\le \eps
\]
in time
\[
\poly\left(N, \exp(\log(\frac N\eps)^D)\right)\,.
\]
In addition, there exists a classical MPO approximation $\sigma_{MPO}$ with $\|\sigma_{MPO}-\rho\|_1\le\eps$ and bond dimension $D_\textrm{bond}=\poly\left(N, \exp(\log(\frac N\eps)^D)\right)$.
\end{corollary}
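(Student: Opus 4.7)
The plan is to chain together the main results of Sections~\ref{sec:general_theory}, \ref{sec:efficiency}, and \ref{sec:rounding}. In both cases (1D, or high-temperature commuting with commuting marginals) Propositions~\ref{prop:1DEff} and~\ref{prop:highDEff} supply an exponentially-decaying effective interaction: there exist constants $C,\xi$ such that for every $\ell$, a $(N C e^{-\ell/\xi},\ell)$-effective interaction exists. From this starting point the corollary will follow by applying Lemma~\ref{lem:CMI_decay}, Corollary~\ref{cor:free_energy_algorithm}, and Corollary~\ref{cor:circuitDecomposed} essentially as black boxes.

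First I would fix target precisions for the two subroutines. Corollary~\ref{cor:circuitDecomposed} requires marginals on shields $S'^2_k$ with precision $\eps_\sigma=1/\poly(1/\eps)$ and a CMI bound $\eps_{CMI}=1/\poly(1/\eps)$ on regions separated by the shield radius $\ell$. To realize the CMI bound, I apply Lemma~\ref{lem:CMI_decay}: since the Hamiltonian has exponentially-decaying effective interaction, the Gibbs state satisfies $I(A\!:\!C|B)\le N D\exp(-d(A,C)/\eta)$ for constants $D,\eta$, and so choosing $\ell=\Theta(\log(N/\eps))$ makes the separation $d(A,C)\ge\ell$ force $I(A\!:\!C|B)\le\eps_{CMI}$. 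To realize the marginal precision I invoke Corollary~\ref{cor:free_energy_algorithm} (used iteratively as in Remark~\ref{rem:compute-marginal}) to classically produce, in time $\poly(N,\exp(\log(N/\eps)^D))$, approximations $\sigma_{S'^2_k}$ with $\|\sigma_{S'^2_k}-\rho_{S'^2_k}\|_1\le\eps_\sigma$ for every shield.

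Next I feed these marginals into Corollary~\ref{cor:circuitDecomposed}. The rounding theorem (Theorem~\ref{thm:rounding}) combined with the gate decomposition of the rotated Petz maps gives a quantum algorithm running in time $\poly(\exp(\ell^D),N,1/\eps)$ whose output $\sigma$ satisfies $\|\sigma-\rho\|_1\le\eps$. Substituting the choice $\ell=\Theta(\log(N/\eps))$ yields the stated runtime $\poly(N,\exp(\log(N/\eps)^D))$ in both the 1D case ($D=1$, polynomial) and the high-temperature case (quasi-polynomial). The MPO/PEPO statement follows by simply applying Corollary~\ref{cor:mpo} to the same sequence of channels, since each channel acts only on $\mathcal{O}(\ell^D)$ sites; the resulting bond dimension is then $\poly(\exp(\ell^D),N,1/\eps)=\poly(N,\exp(\log(N/\eps)^D))$.

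There is no genuinely hard step here, as everything is a matter of bookkeeping the error and shield parameters. The one point that deserves care is consistency of the shield size $\ell$ and $\ell_S$ between the MED (which wants $\ell_S=5\ell$ in Theorem~\ref{thm:errorEff}) and the rounding construction (which uses doubled shields $S^2_k=B_{2\ell}(k+1)\cap[k]$), so one must pick a single $\ell=\Theta(\log(N/\eps))$ large enough to simultaneously: (a) make the effective-interaction error per site below $\eps_\sigma/N$ so that the MED marginals on shields $S'^2_k$ are within $\eps_\sigma$ of $\rho_{S'^2_k}$ in trace norm; (b) make $NDe^{-\ell/\eta}\le\eps_{CMI}$; and (c) ensure $\eps_\sigma$ and $\eps_{CMI}$ are small enough that the propagated error bound of Theorem~\ref{thm:rounding} remains below $\eps$. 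All three constraints are simultaneously satisfied by taking $\ell$ to be $\Theta(\log(N/\eps))$ with a sufficiently large implicit constant, which preserves the quasi-polynomial runtime.
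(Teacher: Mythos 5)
Your proposal is correct and follows essentially the same route as the paper: both concatenate Propositions~\ref{prop:1DEff}/\ref{prop:highDEff}, Lemma~\ref{lem:CMI_decay}, the MED (Corollary~\ref{cor:free_energy_algorithm} via Remark~\ref{rem:compute-marginal}) for the marginals on the doubled shields, and Corollary~\ref{cor:circuitDecomposed}, with Corollary~\ref{cor:mpo} for the MPO claim. The shield-size consistency issue you flag is handled identically in the paper by taking $\ell_S\ge\max\{2\ell,\log(\mathcal{O}(N/\eps))\}$.
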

\begin{proof}
    The proof is simply the concatenation of the results in this paper.
    By Propositions~\ref{prop:1DEff} and~\ref{prop:highDEff}, there exists an exponentially decaying effective interaction as of Definition~\ref{def:heffexp}.
    This also implies the exponential decay of the CMI by Lemma~\ref{lem:CMI_decay}.
    By Corollary~\ref{cor:circuitDecomposed}, we need to ensure sufficiently small $1/\eps_{CMI},1/\eps_\sigma=\poly(1/\eps)$.
    To ensure small $\eps_{CMI}$, using the decay we can choose $\ell=\log(\mathcal{O}(N/\eps_{CMI}))=\log(\mathcal{O}(N/\eps))$.
    We run the MED choosing the Markov shield size $\ell_S\ge\max\{2\ell,\log(\mathcal{O}(N/\eps))\}$, where the first lower bound is a requirement of Theorem~\ref{thm:rounding} on the marginals we need to obtain, and the second ensures the error bound, see Corollary~\ref{cor:free_energy_algorithm}.
    Using the classical algorithm to construct the decomposition into quantum gates as of Corollary~\ref{cor:circuitDecomposed}, the result follows.
    The statement on MPO approximations follows from Corollary~\ref{cor:mpo} using the same requirements.
    
\end{proof}

\subsection{Comparison with prior work}\label{sec:comparisonBrandao}
We compare Theorem \ref{thm:rounding} to the result of \cite{brandao2019}. This work also describes an approach to Gibbs state preparation of lattice models based on the Petz recovery map, but differs in that the inputs of the algorithm are Gibbs states of truncated versions of the global Hamiltonian, which are then acted upon by a constant-depth sequence of Petz maps, to which the authors assume black-box access. To derive their result, they assume two constraints on the Gibbs states of any truncated Hamiltonian $H_{X}$ for $X\subset \Lambda$: exponential decay of correlations and  exponential decay of the CMI for all tripartitions $ABC=X\subset \Lambda$ where $A$ is shielded from $C$ via $B$.

Our construction on the other hand only assumes a system with an exponentially decaying effective interaction, and takes as inputs approximate marginals of the global Gibbs states obtained from the MED as inputs. It then constructs $N$ layers of this recovery map from said marginals and prepares the global Gibbs state. It should be noted that in terms of a decomposition into one and two-qubit gates, our construction has a polynomial (for 1D) or quasi-polynomial (for higher dimensions) depth in $N,1/\eps$, whereas this is unclear for \cite{brandao2019} due their construction requiring the exact marginals of the Gibbs state with no clear description of how to attain them.
Also note that our algorithm does only depend on the Markov structure of the states and therefore is applicable to settings outside of thermal state preparation, whereas \cite{brandao2019} refers to the truncated Hamiltonians and thereby only to the setting of Gibbs states.

To get some additional insight into the issue of constructing the Petz map using approximations to the marginals, we present a continuity bound for the rotated Petz recovery map, which we prove in Appendix~\ref{sec:AppPetzCont}.
\begin{proposition}\label{prop:rotPetzCont}
For quantum states $\rho_{BC}$, $\sigma_{BC}$ lower bounded by a constant $a\id\le\rho_{BC},\sigma_{BC}$, the following perturbation bound for the rotated Petz recovery map holds:
\begin{equation*}
\left\|\mathcal R_{\rho_{BC},\Tr_C}-\mathcal R_{\sigma_{BC}, \Tr_C}\right\|_\diamondsuit\le\frac{6d_{BC}}{a^{5/2}}\|\rho_{BC}-\sigma_{BC}\|
\end{equation*}
\end{proposition}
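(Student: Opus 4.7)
The plan is to reduce the diamond-norm bound to a weighted integral of operator-norm perturbations, expand the pointwise difference via a four-factor telescope, and then apply matrix-function perturbation bounds to each perturbed factor. Starting from the integral definition,
\begin{equation*}
\mathcal R_{\rho_{BC},\Tr_C}-\mathcal R_{\sigma_{BC},\Tr_C}=\int_{\mathbb R}\beta_0(t)\left(\mathcal R^{t/2}_{\rho_{BC},\Tr_C}-\mathcal R^{t/2}_{\sigma_{BC},\Tr_C}\right)dt,
\end{equation*}
the triangle inequality for the diamond norm, together with the fact that $\beta_0(t)$ is a normalized weight decaying exponentially, reduces the problem to bounding $\|\mathcal R^{t/2}_{\rho_{BC}}-\mathcal R^{t/2}_{\sigma_{BC}}\|_\diamondsuit$ pointwise in $t$ with at most linear growth in $|t|$ and then integrating.

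For each fixed $t$, I would decompose the inner map as the sandwich $X\mapsto L_2(L_1 X R_1\otimes \id_C)R_2$ with $L_2=\rho_{BC}^{1/2-it/2}$, $R_2=\rho_{BC}^{1/2+it/2}$, $L_1=\rho_B^{it/2-1/2}$, $R_1=\rho_B^{-it/2-1/2}$, and expand the difference with the $\sigma$-expressions $L_i',R_i'$ by a four-term telescope that replaces exactly one factor at a time. Each resulting map has the form $X\mapsto A(MXN\otimes \id_C)B$ and its diamond norm is controlled by $\|A\|\|B\|\|M\|\|N\|$ multiplied by a dimension factor arising from the embedding $X\mapsto X\otimes\id_C$ and the partial-trace conversion in the next step. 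After using $d_C\le d_{BC}$ this dimension contribution consolidates to at most $d_{BC}$.

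The individual factor norms are immediate: $\|\rho_{BC}^{1/2\pm it/2}\|\le 1$ since $\rho_{BC}\le\id$, and $\|\rho_B^{-1/2\pm it/2}\|\le a^{-1/2}$ since $\rho_B\ge a\id$. For the perturbations of the complex powers I would split $A^{\pm 1/2+it/2}=A^{\pm 1/2}A^{it/2}$ and apply the triangle inequality. The real parts are handled by classical operator-Lipschitz bounds on matrices bounded below by $a\id$,
\begin{equation*}
\|A^{1/2}-B^{1/2}\|\le\frac{1}{2\sqrt a}\|A-B\|,\qquad \|A^{-1/2}-B^{-1/2}\|\le\frac{1}{2a^{3/2}}\|A-B\|,
\end{equation*}
the latter via the identity $A^{-1/2}-B^{-1/2}=A^{-1/2}(B^{1/2}-A^{1/2})B^{-1/2}$. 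The imaginary power is treated through $A^{it/2}=\exp((it/2)\log A)$ and the Duhamel bound $\|e^{iH_1}-e^{iH_2}\|\le\|H_1-H_2\|$ for Hermitian $H_1,H_2$, yielding $\|A^{it/2}-B^{it/2}\|\le(|t|/2)\|\log A-\log B\|$. The log perturbation is in turn bounded via the integral representation $\log x=\int_0^\infty\bigl[(1+s)^{-1}-(x+s)^{-1}\bigr]\,ds$, which gives $\|\log A-\log B\|\le\|A-B\|/a$ for $A,B\ge a\id$. Perturbations of $\rho_B=\Tr_C\rho_{BC}$ are transferred to those of $\rho_{BC}$ via contractivity of the trace norm under partial trace together with $\|X\|_1\le d_{BC}\|X\|$, so that $\|\rho_B-\sigma_B\|\le d_{BC}\|\rho_{BC}-\sigma_{BC}\|$.

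Combining, each of the four telescope terms contributes a bound of the form $(1+|t|)\cdot d_{BC}\cdot a^{-5/2}\|\rho_{BC}-\sigma_{BC}\|$ once powers of $a$ are consolidated using $a\le 1$, and integrating against $\beta_0(t)$ (whose zeroth and first absolute moments are finite absolute constants) yields the stated $6d_{BC}/a^{5/2}$ estimate. The main technical obstacle is the perturbation bound for the purely imaginary power $A^{it}$: it is not directly operator-Lipschitz and requires the detour through the matrix logarithm and its integral representation. Care is also needed to balance the resulting linear-in-$|t|$ factor against the exponential decay of $\beta_0$, and to track the interaction between the $a$-powers arising from the real and imaginary halves across the four telescope terms so that a uniform $a^{-5/2}$ scaling is obtained.
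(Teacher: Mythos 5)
Your proposal is correct in substance and follows the paper's overall skeleton (integral representation of $\mathcal R_{\cdot,\Tr_C}$, triangle inequality in $t$, four-factor telescope with H\"older, transfer of the $B$-marginal perturbation, integration against $\beta_0$), but it replaces the paper's key technical lemma by a genuinely different and more elementary argument. The paper proves Lemma~\ref{lem:contSqrtRot} by splitting the scalar function $x^{\pm1/2\pm it}$ into real and imaginary parts and bounding the Fr\'echet derivative through divided differences and a Hadamard-product estimate, which is the source of the explicit $\sqrt{n}$ dimension factor and of the remark that the non-monotonicity caused by the rotation obstructs a dimension-free bound. You instead factor $A^{\pm1/2+it}=A^{\pm1/2}A^{it}$ and control the real part by the standard Sylvester/resolvent Lipschitz bounds for $x^{\pm1/2}$ and the unitary part by Duhamel plus the integral representation of $\log$, i.e.\ $\|A^{it}-B^{it}\|\le|t|\,\|\log A-\log B\|\le|t|\,a^{-1}\|A-B\|$; all of these are dimension-free, so your route actually removes the $\sqrt{d}$ from the function-perturbation step and, if the marginal transfer is done as in the paper via $\|\rho_B-\sigma_B\|\le d_C\|\rho_{BC}-\sigma_{BC}\|$, yields a bound of order $d_{BC}a^{-2}$, slightly stronger than the stated $6d_{BC}a^{-5/2}$. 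The one loose point is your final dimension bookkeeping: for the two middle telescope terms you simultaneously charge an embedding factor $d_C$ (from $X\mapsto X\otimes\id_C$) and a trace-norm-based transfer $\|\rho_B-\sigma_B\|\le d_{BC}\|\rho_{BC}-\sigma_{BC}\|$, which gives $d_Cd_{BC}$ and does not consolidate to the claimed ``at most $d_{BC}$ per term''; note that the paper's own H\"older chain carries no embedding factor at all, and the overshoot disappears once you use the operator-norm transfer $\|\rho_B-\sigma_B\|\le d_C\|\rho_{BC}-\sigma_{BC}\|$ instead. With that minor repair your argument establishes the proposition (indeed with better constants), so the difference from the paper is one of technique in the key lemma rather than of validity.
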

While we cannot rule out the existence of a result without dimensional dependence as it has been achieved for the standard Petz map \cite[Theorem 3.9]{gao2022}, the rotation in the maps which is not an operator monotone function poses a substantial technical challenge to improvement.
However, this result already is sufficient to provide an efficient construction whose gates are efficiently computable in (quasi-)polynomial time as long as a bound of the form $a\le\exp(\mathcal O(|BC|))$ can be shown. For 1D translation-invariant systems this is known \cite[Lemma 4.2]{Fawzi2023}. While in a high-temperature setting in arbitrary dimensions such a result seems reasonable to expect, we are not aware of any results in this or other settings at this point.

While preparing this manuscript we became aware of the concurrent work \cite{chen2025quantumgibbsstateslocally}. The authors present an algorithm very similar to the one in \cite{brandao2019}, but also provide an explicit circuit description in terms of time-averaged Lindbladian Gibbs samplers. The quality of recovery of these maps is proven directly and implies the decay of CMI, whereas here and in \cite{brandao2019} they are a consequence of the CMI decay. The additional condition on the decay of correlations remains.

\section*{Acknowledgments}
A.N.C. thanks Adam Bene Watts for discussions.
S.O.S. acknowledges support from the UK Engineering and Physical Sciences Research Council (EPSRC) under grant number EP/W524141/1.
A.C. acknowledges the support of the Deutsche Forschungsgemeinschaft (DFG, German Research Foundation) - Project-ID 470903074 - TRR 352.  This project was funded within the QuantERA II Programme which has received funding from the EU’s H2020 research and innovation programme under the GA No 101017733. A.C., A.N.C. and I.K. are grateful for the hospitality of Perimeter Institute where part of this work was carried out. Research at Perimeter Institute is supported in part by the Government of Canada through the Department of
Innovation, Science and Economic Development and by the Province of Ontario through the Ministry of
Colleges and Universities. This research was also supported in part by the Simons Foundation through the
Simons Foundation Emmy Noether Fellows Program at Perimeter Institute. I.K. acknowledge support from NSF under award number PHY-2337931. This work was done in part while I.K., A.C., and A.N.C. were visiting the Simons Institute for the Theory of Computing, supported by NSF QLCI Grant No. 2016245. H.F. was partially funded by UK Research and Innovation (UKRI) under the UK government’s Horizon Europe funding guarantee EP/X032051/1.
A.T. acknowledges support from the Australian Research Council (FT230100653).

\section*{Statements and Declarations}
\paragraph{Data availability} Data sharing is not applicable to this article as no datasets were generated or analyzed during the current study.
\paragraph{Conflict of interest} The authors have no Conflict of interest to declare that are relevant to the content of this article.

\clearpage
\bibliographystyle{myhamsplain2}
\bibliography{main}

\clearpage
\appendix
\section{Existence proof of an exponentially decaying effective interaction in 1D}\label{sec:App1DEff}
The following proof of Proposition~\ref{prop:1DEff} adapts the results from~\cite{Kuwahara2024} to our definitions.
\begin{proof}
We assume without loss of generality a 2-local interaction.
In one dimension this covers the general case by considering a blocking of any finite-range system.
Also note, that in \cite{Kuwahara2024} it is assumed that $\log(\Tr[e^{-H}])=0$ and we make the same assumption below. This can always be achieved by adding a constant to the Hamiltonian terms, but the general case follows, since the definition of the exact and thereby approximate effective interaction, Definition~\ref{def:heffexact} and \ref{def:heff} does not change under addition of a constant.

In the following we will construct an $\ell/2$-local effective interaction for any connected subset $|B|\ge\ell$ with error $\exp(-\Omega(\ell))$.
For subsets $|B|<\ell$, we simply choose the exact effective interaction since it automatically fulfills the locality conditions.
Together, these construct an $(\exp(-\Omega(\ell)),\ell)$-local effective interaction as of Definition~\ref{def:heff}.

The main technical tool of \cite{Kuwahara2024} is the repeated application of the Quantum Belief Propagation, or rather, several versions of it \cite{Hastings2010,Kim2012}.
We extract the relevant results and discuss how they fulfill our definition of an exponentially decaying Hamiltonian.

 \begin{figure}[h!]
\begin{center}

\includegraphics[scale=0.5]{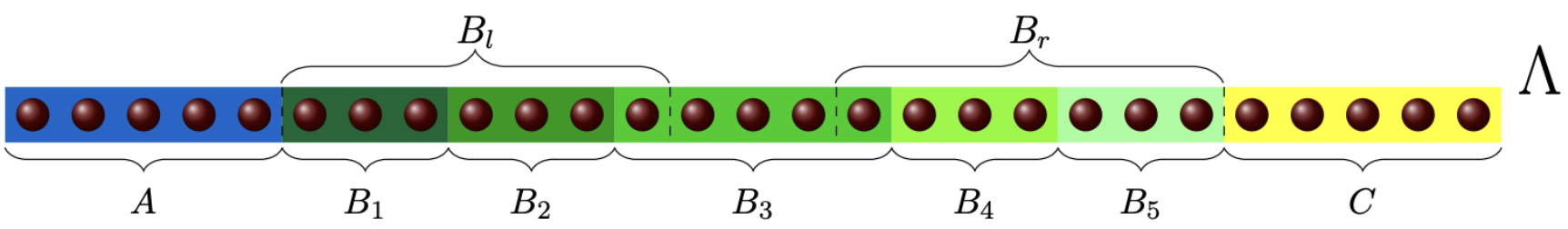}
  \caption{An interval $\Lambda$ split into three subintervals $\Lambda=ABC$ such that $B$ shields $A$ from $C$, where $B$ is further split into five subintervals $B_1,\ldots, B_5$ with $B_1B_2$ and $B_4B_5$ corresponding to the left and right effective interaction terms respectively. }
  \label{fig:1}
 \end{center}
 \end{figure}

We consider the effective interaction on a chain $ABC$, where $B$ is further subdivided into consecutive connected subregions $B_1,\ldots,B_5$ as in Figure \ref{fig:1}. We further denote by $B_l$ the regions $B_1B_2$ together with the succeeding site, i.e., including the support of the interaction with $B_3$, and equivalently $B_r$ for $B_4B_5$ together with the preceeding site.
Note that, in \cite{Kuwahara2024}, thermal states are defined as the exponential of the positive Hamiltonian but we amend the minus sign in line with standard notation.
The construction is given by \cite[(S.717)]{Kuwahara2024} as follows
\begin{align}\label{eq:kuwaharaEff}
\tilde H_{B_l}+H_{B_3}+\tilde H_{B_r}&=U_{B_1B_2}'U_{B_4B_5}'(\tilde H_{B_1}^*+H_{B_2B_3B_4}+\tilde H_{B_5}^*+\hat{\tilde\Phi}'_{B_1B_2}+\hat{\tilde\Phi}'_{B_4B_5})U_{B_4B_5}^{'\dagger}U_{B_1B_2}^{'\dagger}\\
\tilde H_{B_l}&=U_{B_1B_2}'(H_{B_1}^*+H_{B_l\setminus B_1}+\hat{\tilde\Phi}'_{B_1B_2})U_{B_1B_2}^{'\dagger}\\
\tilde H_{B_r}&=U_{B_4B_5}'(H_{B_5}^*+H_{B_r\setminus B_5}+\hat{\tilde\Phi}'_{B_4B_5})U_{B_4B_5}^{'\dagger}
\end{align}
We refrain from giving detailed definitions of all operators involved but recall the following facts:
Each of these operators acts on the subsystems as designated, namely e.g. $U'_{B_1B_2}$ is a unitary acting on $B_1B_2$.
The definition of
\[
\tilde H^*_{B_1}=-\log(\tr_A[e^{-H_{AB_1}}])
\]
does not involve the total size of $B$.
Similarly, the definitions of $U_{B_1B_2}'$ and $\hat{\tilde\Phi}'_{B_1B_2}$ only depend on the choice of $B_1B_2$ and the Hamiltonian terms in that region, and equivalently for the operators acting on $B_4B_5$.

In the 1D setting it is sufficient to define two local terms for the decomposition of the effective interaction. For a given $\ell$, we choose $|B_1|=|B_2|=|B_4|=|B_5|=\lfloor(\ell-1)/4\rfloor$.

The main result from \cite{Kuwahara2024} that we use is equation (S.718), note the missing log-partition function which is assumed to be zero:
\begin{equation}\label{eq:errEff1D}
\|\log(\rho_B)+\tilde H_{B_1B_2}+H_{B_3}+\tilde H_{B_4B_5}\|\le e^{-\Omega(\ell)}
\end{equation}

Then the only non-vanishing terms of the effective interactions can be defined as
\begin{align*}
    \heff{B}{B_l}&=\tilde H_{B_1B_2}-H_{B_l}=U_{B_1B_2}'(\tilde H^*_{B_1}+H_{B_l\setminus B_1}+\hat{\tilde\Phi}'_{B_1B_2})U_{B_1B_2}^{'\dagger}-H_{B_l} \, ,\\
    \heff{B}{B_r}&=\tilde H_{B_4B_5}-H_{B_r}=U_{B_4B_5}'(\tilde H^*_{B_5}+H_{B_r\setminus B_5}+\hat{\tilde\Phi}'_{B_4B_5})U_{B_4B_5}^{'\dagger}-H_{B_r} \, ,
\end{align*}
Now, we can check the items in Definition~\ref{def:heff}.
(i) is equivalent to Eq.~\eqref{eq:errEff1D} with $\eps=e^{-\Omega(\ell)}$, in line with the definition of an exponentially-decaying Hamiltonian.
(ii) and (iii) are fulfilled, since only terms supported on $B_l$ and $B_r$ which are of size $\ell/2$ and adjacent to the boundary are present.
(iv) is a consequence of the definitions of the operators in Eq.~\eqref{eq:kuwaharaEff}, which as explained above do not depend on the size of $B_3$. 
\end{proof}

\section{Existence proof of an exponentially-decaying effective interaction in any dimension, at high temperature, with commuting marginals}\label{sec:appendixHighT}

Here we show the proof of Proposition \ref{prop:highDEff} by adapting the results of \cite{bluhm2024} to our setup.

\begin{proof}
The effective interaction will be essentially equivalent to the strong effective Hamiltonian proposed in \cite[Definition 3.1]{bluhm2024}, whose properties we recall here:
\begin{enumerate}
\item[(a)]  ${\Phi}_{X}^{A}$ is supported in $X \cap A$ for every  $X \subset \Lambda$.
\item[(b)] If $A' \subset \Lambda$, then ${\Phi}^{A}_{X} = {\Phi}^{A'}_{X}$ for all  $X \subset \Lambda$ such that $X \cap A' = X \cap A$.
\item[(c)] We have
\begin{equation}\label{eq:strongEff}
- \log \left( \Tr_{{A}^c}[e^{-H}]\right) = -\log(d^{|A^c|}) +  \sum_{X \subset \Lambda } {\Phi}^{A }_{X}\, 
\end{equation}
\end{enumerate}

As opposed to our approach, the reference does not provide a strictly local approximation but rather a quasi-local exact expression for the true effective Hamiltonian.
We will adapt this definition by simply removing all terms with $\diam(X)\ge\ell$ and use the absolute convergence to prove a tail bound on the truncated terms for sufficiently high temperature.
In addition, since we are interested in the effective interaction rather than the effective Hamiltonian, we will remove the truncated Hamiltonian as well as the dimensional constant in Equation~\eqref{eq:strongEff}.

The definition we make is as follows
\[
\heff{A}{X}=\begin{cases*}
\Phi^A_X& $\diam(X)<\ell$\\
0&$\diam(X)\ge\ell$
\end{cases*}
-\begin{cases*}
\log(d)&if $X=\{i\}\not\in A$\\
0&else
\end{cases*}
-\begin{cases*}
h_X&if $X\subset A$\\
0&else
\end{cases*}\,,
\]
where the second term is for the purpose of the normalization and the third term removes the Hamiltonian terms in the interior.
Let us first observe that Definition~\ref{def:heff} (ii)-(iv) are satisfied by this definition.
For (ii), this is directly incorporated into the definition for the first term by (a) in the definition of the strong effective Hamiltonian. For the second and third term this is immediate as long as $\ell\ge r+1$.

For (iii), note that it was already observed in~\cite{bluhm2024} that $\Phi_X^A=h_X$ if $X\subset\Lambda$, which means the first and last term cancel in this case and the second term does not appear by definition.
Therefore we are only left with terms for which $X\ni i\not\in A$, but this implies (iii) by $\diam(X)\le\ell$.

Additionally, (iv) is immediate from (b) (in fact (b) is stronger as it only requires $X\cap B=\emptyset$ in (iv)).

Let us turn our attention to (i). Comparing Equation~\eqref{eq:strongEff}, Definition~\ref{def:heffexact}, and our new definition, we find that we need to bound the operator norm of 
\[
\heffexact{A}-\hefftot{A}=\sum_{X\subset\Lambda, \ \diam(X)\ge\ell} \Phi_X^A\,.
\]

We consider \cite[Theorem 3.8]{bluhm2024}, where, in the finite-range case that we consider, we choose $\textbf{b}(X)=0$ and assume $[h_X,h_Y]=[\Tr_R[h_X],\Tr_S[h_Y]]=0$ for all $X,Y,R,S$. It states that there is a constant that we denote by $\beta^*$ and that only depends on the lattice and interaction range, such that for every $\sup_X\|h_X\|=\beta\le\beta^*$, we have 
\[
\sup_{x\in V}\sum_{X\ni x}\|\Phi^A_X\| \le 1\,,
\]
and so
\[
\sum_{X}\|\Phi^A_X\| \le |\Lambda|\,.
\]
To obtain a tail bound we start from the weaker bound
\begin{equation}\label{eq:orderBound}
\sum_{X, \ \diam(X)=m}\|\Phi^A_X\| \le |\Lambda|
\end{equation}
and consider the scaling of the terms at fixed $m$ with the inverse temperature.

To that end we need to understand how the set $X$ connects with the terms and orders in the cluster expansion.
While the above bounds already ensure convergence, we want to show in addition that the terms in~\eqref{eq:orderBound} scale at least with $\beta^m$ to deduce a tail bound.
Since we are only interested in the scaling, in the following we will not look at a complete expression including combinatorial factors, but only show that each term in \eqref{eq:orderBound} is a sum of positive terms of order $m$ or higher in the Hamiltonian terms.
We can then obtain a tail bound by a rescaling of~\eqref{eq:orderBound}

We consider equation (42) in~\cite{bluhm2024} in the setting of Theorem 3.10
\begin{align}
\Phi_X^A &=\sum_{l=1}^\infty \sum_{\substack{(\gamma_1,\ldots,\gamma_l)\in\mathbb P^m\\ \gamma_1\lor\ldots\lor\gamma_l=X}}\phi(\gamma_1,\ldots,\gamma_l)\Pi_{j=1}^l\textbf{w}_\beta(\gamma_j)\label{eq:ursellDecomp}\\
\textbf{w}_\beta(\gamma)&=\frac{1}{d_{L^c}}\tr_{L^c}\left[\prod_{Y\in\gamma} h_Y\right]\nonumber
\end{align}
We refer to~\cite{bluhm2024} for the details of the definitions above, but point out the following facts: The $\gamma_i$ called polymers are multisets of support sets $X$ of Hamiltonian terms with the additional constraint of being connected. The set of such connected polymers is denoted by $\mathbb P$.
The Ursell function $\phi$ vanishes if the union $\gamma_1\ldots\gamma_m$ does not form a connected cluster and the norm bound~\eqref{eq:orderBound} is derived by bounding the norms of $\phi$ and $\textbf{w}_\beta$, 
\begin{equation}\label{eq:orderBoundButchered}
\sum_{X, \ \diam(X)=m}\|\Phi^A_X\| \le\sum_{X, \ \diam(X)=m}\sum_{l=1}^\infty \sum_{\substack{(\gamma_1,\ldots,\gamma_l)\in\mathbb P^l\\ \gamma_1\lor\ldots\lor\gamma_l=X}}|\phi(\gamma_1,\ldots,\gamma_l)|\Pi_{j=1}^l\|\textbf{w}_\beta(\gamma_j)\|\le |\Lambda|
\end{equation}
see \cite[Proof of Theorem 3.10]{bluhm2024}.
We are then interested in the minimum number of factors $h_X$ in each element of the sum above.
 
Now for a given term in the sum~\eqref{eq:ursellDecomp} consider an enumeration $X_1,\ldots, X_n$ of all $X$ in the $\gamma_1,\ldots,\gamma_m$.
We only have contributions for $\cup_{i=1}^n X_i$, where $X=(X_1,\ldots,X_n)$ is a \emph{connected cluster} meaning that there is no permutation on $n$ sites $\pi$ and $1\le k< n$ such that $X_{\pi(i)}\cap X_{\pi(j)}=\emptyset$ for all $i\le k$, $j>k$. This implies a bound on the diameter of the union of the cluster
\[
\diam\left(\bigcup_{i=1}^n X_i\right)=\diam(X)\le n r,
\]
with $r$ the range of the Hamiltonian which upper bounds $\diam(X_i)$ for all $i$.
In particular, for $\diam(X)=m$, we have at least $n=m/r$ contributing terms $h_{X_i}$.

If we assume that $\beta<\beta^*$, we observe that the above bounds hold for ${\Phi'}_X^A$ defined in terms of a rescaled interaction $h'_X=h_X \beta^*/\beta$ and conversely, since all the positive contributions in 
Equation~\eqref{eq:orderBound} are of order at least $m/r$ in the $h_X$, the original interaction fulfills 
\begin{align}
\sum_{X, \ \diam(X)=m}\|\Phi^A_X\| &\le\sum_{X, \ \diam(X)=m}\sum_{l=1}^\infty \sum_{\substack{(\gamma_1,\ldots,\gamma_l)\in\mathbb P^l\\ \gamma_1\lor\ldots\lor\gamma_l=X}}|\phi(\gamma_1,\ldots,\gamma_l)|\Pi_{j=1}^l\|\textbf{w}_\beta(\gamma_j)\|\\
&\le
\left(\frac{\beta}{\beta^*}\right)^{m/r}\sum_{X, \ \diam(X)=m}\sum_{l=1}^\infty \sum_{\substack{(\gamma_1,\ldots,\gamma_l)\in\mathbb P^l\\ \gamma_1\lor\ldots\lor\gamma_l=X}}|\phi(\gamma_1,\ldots,\gamma_l)|\Pi_{j=1}^l\|\textbf{w'}_\beta(\gamma_j)\|\\
&\le|\Lambda|\left(\frac{\beta}{\beta^*}\right)^{m/r}
\end{align}
with $\textbf{w'}$ defined in terms of $h'_X$, and so by choosing an appropriate $\ell=\mathcal{O}(\log(\Lambda/\eps))$ we conclude
\begin{align*}
\|\hefftot{A}-\heffexact{A}\|&\le \left\|\sum_{X, \ \diam(X)\ge\ell} \Phi^A_X\right\|\\
&\le\sum_{m\ge\ell}|\Lambda|\left(\frac{\beta}{\beta^*}\right)^{m/r}\\
&\le |\Lambda|\frac{(\beta/\beta^*)^{\ell/r}}{1-\beta/\beta^*} \, ,
\end{align*}
and thereby the proof of existence of an exponentially decaying effective interaction by (i).
\end{proof}

\section{Numerical analysis and circuit implementation of the rotated Petz recovery map}\label{sec:AppDetailsRotPetz}
We give details of a classical algorithm that outputs the gate decomposition of a quantum circuit $\eps$-approximately implementing the rotated Petz recovery map $\mathcal R_{\sigma_{BC},\Tr_C}^t$ for a given marginal $\sigma_{BC}$, which is provided as a classical input.
We assume a lower bound on its smallest eigenvalue denoted by $a$, and a number of input qubits $n$.
Note that the approach presented here might be far from optimal and only aims to provide the (quasi-)polynomial time guarantee for the quantum Gibbs sampling algorithm in the main text.
The techniques we use are standard in numerical analysis and we focus on the dependence of the scheme on the value $a$ but refrain from giving the details of connecting the error bounds of individual steps.
An exponential dependence on $n$, however, is to be expected and sufficient since $n$ is only (poly-)logarithmically large in the system size and inverse error.

We consider the integral
\[
\int_\mathbb R \beta_0(t) (\sigma_{BC}^{1/2-it}\sigma_B^{it-1/2})\otimes(\sigma_B^{-it-1/2}\sigma_{BC}^{1/2+it})dt
\]
representing the channel as a matrix.
The operator norm of the product of matrices is bounded by $1/a$ and due to the exponential decay of $\beta_0(t)\le\frac\pi2\exp(\pm\pi t)$, we obtain an $\eps$ approximation of the integral by truncating it to $[-t',t']$, $t'=\mathcal{O}(\log(1/a\eps))$.

Secondly, we need to approximate the matrix function $X\mapsto X^{\pm1/2\pm it}$ using its Taylor series and consider its convergence on $[a,1]$ inside which all eigenvalues are contained.
Writing the Taylor series as
\[
x^{-1/2+it}=\sum_{k=0}^\infty c_n (x-1)^n
\]
we can use Cauchy's integral formula for any path $\gamma$
\[
|c_n|=\left|\frac1{2\pi i}\int_\gamma \frac{x^{-1/2+it}}{(x-1)^{n+1}}\right|\le \sup_{x\textrm{ s.t. }|x-1|\le 1-a/2} \frac{|x^{-1/2+it}|}{(1-a/2)^n}\le \sqrt{2/a}\frac{e^{t'}}{(1-a/2)^n}
\]
by choosing $\gamma$ as a circle around 1 with radius $1-a/2$.
Then, within a disk of radius $1-a$, the terms of the series are bounded in norm by $\sqrt{2/a}e^{t'}(\frac{1-a}{1-a/2})^n$ and so bounding the remainder in the Taylor series
\begin{align*}
\sqrt{2/a}e^{t'}\sum_{n=k}^\infty\left(\frac{1-a}{1-a/2}\right)^n&\le\sqrt{2/a}e^{t'}\sum_{n=k}^\infty\left(1-a/2\right)^n\\
&=\sqrt{2/a}e^{t'}\frac{(1-a/2)^k}{a/2}
\end{align*}

we find that we can obtain an $\eps$-approximation by truncation to order
\begin{align*}
k\ge\frac{\log(\sqrt{2/a}^3 e^{t'}\eps)}{\log(1-a/2)}=\mathcal{O}\left(\frac{\log(1/a\eps)}{a}\right)\,.
\end{align*}
Bounds of the same order for $x^{\pm1/2\pm it}$ follow equivalently.

Finally for the numerical integration we can use Gauss quadrature convergence results for analytic functions, see \cite[Theorem 4.5]{Trefethen2008}. In particular, after rescaling the integration variable by $1/t'$, the integrand as a function of $t$ has its closest poles at $\pm i/t'$ and the integral goes over $[-1,1]$.
We consider the ellipse with focii $\pm1$ and minor semiaxis of length $1/2t'$. Then, inside the ellipse, $\beta_0$ is bounded by a constant.
Consider the remaining term in the rescaled integral
\[
\left( \sigma_{BC}^{1/2-itt'}\sigma_B^{itt'-1/2}\right)\otimes \left(\sigma_B^{-itt'-1/2}\sigma_{BC}^{1/2+itt'} \right) \, .
\]
In the ellipse, where $\textrm{Im}(t)\le1/2t'$ these terms can be bounded by a constant $\mathcal{O}(\poly(1/a))$.
According to \cite[Theorem 4.5]{Trefethen2008}, if the function is analytic and bounded inside the ellipse in absolute value by $M$ and for a sum of minor and major semiaxis $\rho\ge1+1/2t'$, the error in the Gaussian quadrature with $n$ evaluation points is bounded by
\[
\frac{64M}{15(1-\rho^{-2})\rho^{2n+2}}
\]
which proves an $\eps$-error for some
\[
n=\frac{\log(\frac{64M}{15\eps(1-\rho^{-2})})-2}{2\log(1+1/2t')}=\mathcal{O}(\poly \log(1/a\eps))\,.
\]

Together this gives a numerical algorithm to compute the matrix representation of the channel. Using standard linear algbraic transformations this can be decomposed into a Kraus map and by the Stinespring dilation into a unitary acting on a doubled Hilbert space, i.e., using additional ancilla qubits.

To finally obtain the gate decomposition into 1 and 2-qubit gates from any universal gateset we refer to the Solovay-Kitaev algorithm \cite{Kitaev1997}, which prescribes an algorithm that, in time $\exp(\mathcal{O}(n))\poly\log(1/\eps)$, constructs a gate sequence of depth $\exp(\mathcal{O}(n))\poly\log(1/\eps)$, where $n$ is the number of qubits the channel acts on.

\section{Continuity bounds for the dependence on the marginals of the rotated Petz recovery map}\label{sec:AppPetzCont}
In this Appendix we prove a continuity bound for the rotated Petz recovery map.
We start with a Lemma proving continuity of the matrix function $x\mapsto x^{\pm1/2\pm it}$, i.e., the squareroot with rotation.
The rotation can be seen as the barrier to obtaining a stronger, dimension-independent bound: Dimension independent bounds on the Fréchet derivative of operator monotone functions exist \cite[Chapter 10]{bhatia1997}, but do not apply because of the imaginary exponent here.
\begin{lemma}\label{lem:contSqrtRot}
Let $a>0$ and $X,Y\in\mathbb C^{n\times n}$ be hermitian matrices such that $a\id\le X,Y\le\id$. Then,
\begin{equation*}
\|X^{\pm1/2\pm it}-Y^{\pm1/2\pm it}\|\le (1+2|t|)\sqrt{n}a^{-3/2}\|X-Y\|
\end{equation*}
\end{lemma}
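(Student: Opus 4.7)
The plan is to reduce the matrix inequality to the elementary fact that $f(x) = x^{\pm 1/2 \pm it}$ is Lipschitz on $[a,1]$, and then exploit that a scalar Lipschitz function induces a Frobenius-norm contraction on Hermitian matrices whose spectra lie in the relevant interval. The dimension factor $\sqrt n$ is then just the price of converting between operator and Frobenius norms.

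First I would compute the scalar Lipschitz constant. Writing $f(x) = x^\alpha$ with $\alpha = \pm 1/2 \pm it$, we have $f'(x) = \alpha x^{\alpha - 1}$, and since $x > 0$ is real, $|x^{\alpha - 1}| = x^{\mathrm{Re}(\alpha) - 1}$. Using $|\alpha| \leq 1/2 + |t|$ and $a \leq x \leq 1$, the largest contribution comes from $\mathrm{Re}(\alpha) = -1/2$, giving $x^{-3/2} \leq a^{-3/2}$. Hence $f$ is Lipschitz on $[a,1]$ with constant
\begin{equation*}
L \;\leq\; (1/2 + |t|)\,a^{-3/2}.
\end{equation*}

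Second, I would apply the standard Daleckii--Krein-style argument that this scalar Lipschitz constant transfers to Frobenius norm. Writing the spectral decompositions $X = \sum_i \lambda_i |u_i\rangle\langle u_i|$ and $Y = \sum_j \mu_j |v_j\rangle\langle v_j|$ and inserting the resolutions of identity $\sum_j |v_j\rangle\langle v_j| = \id$ and $\sum_i |u_i\rangle\langle u_i| = \id$, one obtains $f(X) - f(Y) = \sum_{i,j} [f(\lambda_i) - f(\mu_j)] \langle u_i|v_j\rangle \, |u_i\rangle\langle v_j|$; taking $\mathrm{tr}((f(X)-f(Y))^\dagger (f(X)-f(Y)))$ and using orthonormality of the $\{u_i\}$ and $\{v_j\}$ yields
\begin{equation*}
\|f(X) - f(Y)\|_F^2 \;=\; \sum_{i,j}|f(\lambda_i) - f(\mu_j)|^2\,|\langle u_i|v_j\rangle|^2 \;\leq\; L^2 \sum_{i,j}|\lambda_i - \mu_j|^2\,|\langle u_i|v_j\rangle|^2 \;=\; L^2\,\|X - Y\|_F^2.
\end{equation*}

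Third, I would convert back to operator norm via the standard $n$-dimensional inequalities $\|A\| \leq \|A\|_F \leq \sqrt n\,\|A\|$, which give
\begin{equation*}
\|f(X) - f(Y)\| \;\leq\; \|f(X) - f(Y)\|_F \;\leq\; L\,\|X - Y\|_F \;\leq\; L\sqrt{n}\,\|X - Y\|.
\end{equation*}
Plugging in $L \leq (1/2 + |t|)a^{-3/2} \leq \tfrac{1}{2}(1+2|t|)a^{-3/2}$ yields the claim (actually with a factor of $2$ to spare). There is no serious obstacle here: the only mild subtlety is the choice to bound by the worst-case exponent $-3/2$ uniformly over the four sign combinations. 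Any attempt to remove the $\sqrt n$ factor would be substantially harder, since $x \mapsto x^{it}$ is not operator monotone and no dimension-free operator Lipschitz bound is available, but the authors' remarks already explain why no such improvement is attempted.
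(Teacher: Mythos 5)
Your proof is correct, and it reaches the stated bound (in fact with a factor of $2$ to spare) by a route that differs in its mechanics from the paper's. The paper first splits $x^{\pm 1/2\pm it}$ into its real and imaginary parts $\sqrt{x^{\pm 1}}\sin(t\log x)$ and $\sqrt{x^{\pm 1}}\cos(t\log x)$, bounds each derivative by $L=(1/2+|t|)a^{-3/2}$ on $[a,1]$, and then controls each real matrix function via the Daleckii--Krein/Fr\'echet-derivative formula $Df(A)(H)=U\bigl(f^{[1]}(\Lambda)\circ(U^\dagger HU)\bigr)U^\dagger$ integrated along the segment from $X$ to $Y$, paying the $\sqrt n$ through an entrywise (Schur-multiplier) bound on the divided-difference matrix; the real/imaginary split costs the extra factor $2$ in the final constant $(1+2|t|)$. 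You instead treat the complex-valued function $x\mapsto x^{\alpha}$ directly, bound its scalar Lipschitz constant by $|\alpha|a^{-3/2}\le(1/2+|t|)a^{-3/2}$, use the exact two-eigenbasis identity $\|f(X)-f(Y)\|_F^2=\sum_{i,j}|f(\lambda_i)-f(\mu_j)|^2|\langle u_i|v_j\rangle|^2$ to transfer that constant to the Frobenius norm, and pay the $\sqrt n$ only at the end via $\|A\|\le\|A\|_F\le\sqrt n\,\|A\|$. Your version is more elementary (no Fr\'echet-derivative machinery, no path integral, no real/imaginary decomposition), the divided-difference structure appears only in the exact Frobenius identity where it needs no operator-norm Hadamard estimate, and the resulting constant $\tfrac12(1+2|t|)\sqrt n\,a^{-3/2}$ is twice as good; the paper's formulation, on the other hand, makes explicit which standard references (Bhatia's treatment of operator monotone functions and Fr\'echet derivatives) one would consult to try to remove the dimension factor, which is the obstruction both arguments share.
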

\begin{proof}
We start by separating real and imaginary part of the norm difference.
The real and imaginary parts of $x^{\pm1/2\pm it}$ are given by
\begin{align*}
f_\pm(x)=\sqrt{x^{\pm1}}\sin(t\log(x))\\
g_\pm(x)=\sqrt{x^{\pm1}}\cos(t\log(x))
\end{align*}
And using real and imaginary part of the matrix function and a triangle inequality we can decompose as follows.
\begin{equation}\label{eq:complexDecom}
\|X^{\pm1/2\pm it}-Y^{\pm1/2\pm it}\|\le \|f_\pm(X)-f_\pm(Y)\|+\|g_\pm(X)-g_\pm(Y)\|
\end{equation}
Here the $\pm$ on the rhs corresponds to the sign of the real part. The formula is the same for either sign of the imaginary part.
We have the derivatives
\begin{align}
f_\pm'(x)&=\frac{1}{\sqrt{x^{2\mp1}}}\left(\pm\frac12\sin(t\log(x))+t\cos(t\log(x))\right)\label{eq:compRealDeriv1}\\
g_\pm'(x)&=\frac{1}{\sqrt{x^{2\mp1}}}\left(\pm\frac12\cos(t\log(x))-t\sin(t\log(x))\right)\label{eq:compRealDeriv2}
\end{align}
all of which can be uniformly bounded by a constant $L=(1/2+|t|)a^{-3/2}$ on the interval $[a,1]$, which makes $L$ also a Lipschitz constant for these functions.

Consider the definition of the Fr\'echet derivative for matrix functions and a Hermitian matrix $A=U^\dagger\Lambda U$ that is diagonalizable with $\Lambda$ diagonal \cite[Section V.3]{bhatia1997}.
We define the matrix valued function
\begin{equation*}
\left(f^{[1]}(\Lambda)\right)_{i,j}=\begin{cases*}
\frac{f(\lambda_i)-f(\lambda_j)}{\lambda_i-\lambda_j} & if $\lambda_i\neq\lambda_j$\\
f'(\lambda_i) &\textrm{else}
\end{cases*}
\end{equation*}
for a diagonal matrix $\Lambda$ with diagonal $\lambda_1,\ldots,\lambda_n$.
Then the Fr\'echet derivative is given by \cite[Corollary V.3.2]{bhatia1997}
\begin{equation*}
Df(A)(H)=U\left(f^{[1]}(\Lambda)\circ(U^\dagger H U)\right)U^\dagger
\end{equation*}
where $\circ$ denotes the Hadamard product.
Using an interpolating path from $X$ to $Y$ as $X(t)=(1-t)X+tY$ we can write
\begin{align*}
\|f(X)-f(Y)\|&=\left\|f(X)-\int_0^1 Df(X(t))(Y-X)dt-f(X)\right\|\\
&\le\sup_{t\in[0,1]}\left\|f^{[1]}(\Lambda)\circ(U^\dagger(Y-X)U)\right\|\\
&\le\sqrt{n}L \|X-Y\|
\end{align*}
where $L$ is a Lipschitz function for $f$ valid on $[a,1]$ which includes the spectra of $X$, $Y$ and thereby $X(t)$.
Note that we are using that $L$ is a uniform bound on the entries of $f^{[1]}$.

Combining the inequality~\eqref{eq:complexDecom} with the bounds on derivatives from \eqref{eq:compRealDeriv1}-\eqref{eq:compRealDeriv2} and choosing $C=2\sqrt{n}L$ we conclude the result.
\end{proof}
With this at hand, we can continue to bound the operator to channel norm continuity of the rotated Petz map and prove Proposition~\ref{prop:rotPetzCont}.

\begin{proof}[Proof of Proposition~\ref{prop:rotPetzCont}]
We start with the continuity bound for $\mathcal R^t_{\cdot,\Tr_C}$ which is $t$-dependent.
By multiple application of triangle and H\"older's inequality, we obtain
\begin{align*}
\left\|\mathcal R^t_{\rho_{BC},\Tr_C}(X)-\mathcal R^t_{\sigma_{BC},\Tr_C}(X)\right\|_1&\le\|\rho_{BC}^{1/2-it}\rho_B^{it-1/2}X\rho_B^{-it-1/2}\rho_{BC}^{1/2+it}\\
&\qquad-\sigma_{BC}^{1/2-it}\sigma_B^{it-1/2}X\sigma_B^{-it-1/2}\sigma_{BC}^{1/2+it}\|_1\\
&\le \|\rho_{BC}^{1/2-it}-\sigma_{BC}^{1/2-it}\|a^{-1} \|X\|_1\\
&\qquad+\|\rho_{B}^{it-1/2}-\sigma_{B}^{it-1/2}\|a^{-1/2} \|X\|_1\\
&\qquad+\|\rho_{B}^{-it-1/2}-\sigma_{B}^{-it-1/2}\|a^{-1/2} \|X\|_1\\
&\qquad+\|\rho_{BC}^{1/2-it}-\sigma_{BC}^{1/2-it}\|a^{-1} \|X\|_1\\
\end{align*}
and applying Lemma~\ref{lem:contSqrtRot} and using $\|\rho_B-\sigma_B\|\le d_C \|\rho_{BC}-\sigma_{BC}\|$ we get
\begin{align*}
&\left\|\mathcal R^t_{\rho_{BC},\Tr_C}(X)-\mathcal R^t_{\sigma_{BC},\Tr_C}(X)\right\|_1\\
&\qquad\le\frac{2\|X\|_1}{a}\frac{(1+2|t|)\sqrt{d_{BC}}}{a^{3/2}}\|\rho_{BC}-\sigma_{BC}\|+\frac{2\|X\|_1}{a^{1/2}}\frac{(1+2|t|)\sqrt{d_B}}{a^{3/2}}d_C\|\rho_{BC}-\sigma_{BC}\|.
\end{align*}
Note that this bound is the same when defining the map on an enlarged Hilbert space so we obtain the diamond norm bound by dividing by $\|X\|_1$ and upper bounding constants
\begin{equation*}
\left\|\mathcal R^t_{\rho_{BC},\Tr_C}-\mathcal R^t_{\sigma_{BC}, \Tr_C}\right\|_\diamondsuit\le\frac{4(1+2|t|)d_{BC}}{a^{5/2}} \|\rho_{BC}-\sigma_{BC}\|.
\end{equation*}

We conclude by considering the integral definition of the recovery map and applying the triangle inequality 
\begin{align*}
\left\|\mathcal R^{t/2}_{\rho_{BC},\Tr_C}-\mathcal R^{t/2}_{\sigma_{BC},\Tr_C}\right\|_\diamondsuit&\le
\int_{\mathbb R}\beta_0(t)\left\|\mathcal R^{t/2}_{\rho_{BC},\Tr_C}-\mathcal R^{t/2}_{\sigma_{BC},\Tr_C}\right\|_\diamondsuit dt\\&\le \frac{8 d_{BC}}{a^{5/2}}\int_0^\infty \frac\pi2\frac{1+t}{\cosh(\pi t)+1}dt\|\rho_{BC}-\sigma_{BC}\|\\
&=\frac{4d_{BC}}{a^{5/2}}\left(1+\frac{\log(4)}{\pi}\right)\|\rho_{BC}-\sigma_{BC}\|
\end{align*}
and $4(1+\log(4)/\pi)<6$.
\end{proof}

\end{document}